\theoremstyle{plain}
\newtheorem{theorem}{Theorem}[section]
\newtheorem{lemma}[theorem]{Lemma}
\newtheorem{corollary}[theorem]{Corollary}
\newtheorem{proposition}[theorem]{Proposition}
\theoremstyle{definition}
\newtheorem{definition}[theorem]{Definition}
\newtheorem{example}[theorem]{Example}
\theoremstyle{remark}
\newtheorem{remark}{Remark}
\newtheorem{notation}{Notation}
\begin{document}

\title{Matrix Wigner Function and SU(1,1)}


\author{Peter Morrison}
\affil{University of Technology, Sydney \email{peter.morrison@uts.edu.au}}


\begin{abstract}%
This paper contains a brief sketch of some methods that can be used
to obtain the Wigner function for a number of systems. We give an
overview of the technique as it is applied to some simple differential
systems related to diffusion problems in one dimension. We compute
the Wigner function for the harmonic oscillator, the $xp$ interaction,
and a hyperbolic oscillator. These systems are shown to share several
properties in common related to the Whittaker function and various
formulae for the Laguerre polynomials. To contrast with the techniques
that are applicable to problems involving continuous states, we then
show that by expanding the solution space to the hyperbolic plane
and utilising some results from matrix calculus, we are able to recover
a number of interesting identities for SU(1,1) and the pseudosphere.
We close with a discussion of some more advanced topics in the theory
of the Wigner function.
\end{abstract}

\subjectindex{xxxx, xxx}

\maketitle

\begin{abstract}
This paper contains a brief sketch of some methods that can be used
to obtain the Wigner function for a number of systems. We give an
overview of the technique as it is applied to some simple differential
systems related to diffusion problems in one dimension. We compute
the Wigner function for the harmonic oscillator, the $xp$ interaction,
and a hyperbolic oscillator. These systems are shown to share several
properties in common related to the Whittaker function and various
formulae for the Laguerre polynomials. To contrast with the techniques
that are applicable to problems involving continuous states, we then
show that by expanding the solution space to the hyperbolic plane
and utilising some results from matrix calculus, we are able to recover
a number of interesting identities for SU(1,1) and the pseudosphere.
We close with a discussion of some more advanced topics in the theory
of the Wigner function.
\end{abstract}

\begin{keywords}
;Special Functions; Wigner Function; Quasiprobability; XP Oscillator; Matrix Analysis;
\end{keywords}

\section{Introduction}

The quasiprobability function of Wigner \cite{wigner1997quantum} is an interesting
counterpart to the familiar probability distributions we associate
with the fundamental solutions of diffusion equations. This function
reduces, in many cases, to an analysis of the autocorrelation of a
stochastic variable as expressed through the eigenfunctions of the
solution. By using the Wigner function, we are able to extract all
expectations values in a similar way to the use of a transition probability
density. However, this function fails to be completely positive, and
obeys the laws of quasiprobabilities as opposed to probabilities as
a result. This paper addresses several deeper questions that might
be asked regarding the theory of probability densities in one dimension,
being the question of when it is possible to find a positive definite,
bounded solution to a differential system which defines a density.
As we shall show using this important, well-known counterexample,
it is possible to extract all measurable information from this quasiprobability,
and indeed, as we shall show in the following calculation, we can
learn valuable properties about the dynamics of diffusion problems
by relaxing the assumption of positivity.

In parallel with this investigation, we shall show that another assumption
related to the dimensionality of the system may be relaxed in a different
way, by utilising some relationships from the study of the quantum
brachistochrone \cite{morrison2012time}. The analysis proceeds in a similar
fashion, with the added complication of non-commutative variables.
We shall demonstrate that it is possible to use an identical methodology
as in the first part to obtain a number of interesting formulae by
using the theory of matrix representations.

\section{Review}

We shall briefly go over some papers and results that are used in
the following. As an entry point for the reader who is unfamiliar
with the Wigner function, the reference text of Tannor \cite{tannor2007introduction}
has an excellent discussion of the topic suitable for this level of
expertise. We refer other readers familiar with the basic structures
to the papers of \cite{frank2000wigner}, who has calculated the Wigner function
for the Morse oscillator. This system is very similar to those we
are to analyse in the following. For other references, especially
with regard to SU(1,1), the paper of Seyfarth et. al \cite{seyfarth2020wigner}
contains an outline of a calculation related to the Wigner function
on the hyperbolic disk, by using a coherent state methodology. The
papers of Lenz \cite{lenz2016mehler, lenz2017mehler, lenz2017positive} is also of use in understanding
the interaction between positive definite kernels and SU(1,1). This
paper is closely associated with some results explored in this work,
and although we shall not delve deeply into the use of coherent states
and their formalism, at many points the results shall be complementary.
We shall point this out where appropriate.

\section{Properties of the Wigner Transform}

We shall now briefly review the necessary formal axioms of Wigner
function theory and the projection theories relevant to this calculation.
In particular, we shall need the formulae for the Wigner function,
which is essentially the characteristic function of the autocorrelated
operator. This paper shall use the known relationship between the
phase-space version of quantum mechanics and matrix projection operators.
For references, the interested reader is directed to the historical
works of Wigner \cite{wigner1997quantum}, the canonical reference \cite{hillery1984distribution}
and description of the application of Poisson brackets and phase space
quantum mechanics by Groenewold \cite{wigner1997quantum, groenewold1946principles} and Moyal
\cite{moyal1949quantum}. More recent resources may be found in Tannor's treatise
on time dependent quantum mechanics \cite{tannor2007introduction} and the papers
of Curtright et. al \cite{curtright1998features, curtright2001generating}, the last of which
contains an explicit outline of the method of the associative star
product.

\subsection{Axioms of the Wigner Function}

We shall run over the basic properties of the Wigner transform, and
how it relates to the intersection between classical and quantum mechanics.
The basic definition of the Wigner function, see e.g. \cite{hillery1984distribution, tannor2007introduction}
for an modern introduction to the topic, can be written as the Fourier
transform of the autocorrelated operator expectation:
\begin{definition}
    
\begin{equation}
\mathcal{A}_{W}(p,q)=\int_{-\infty}^{+\infty}e^{ips/\hbar}\left\langle q-\dfrac{s}{2}\right|\hat{A}\left|q+\dfrac{s}{2}\right\rangle ds
\end{equation}
\end{definition}
The basic concept is to develop a way to move between operators and
continuous, differentiable functions. Wigner \cite{wigner1997quantum} developed
a series of axioms for his analysis of quasiprobability densities
by applying this functional to the density matrix of the system:
\begin{definition}
    
\begin{equation}
\mathcal{\rho}_{W}(p,q)=\int_{-\infty}^{+\infty}e^{ip(x-x')/\hbar}\left\langle x'\right|\hat{\varrho}\left|x\right\rangle ds
\end{equation}
\end{definition}

\begin{definition}
    The basic assumptions on the density matrix are that it be representated
as a pure state:
\begin{equation}
\hat{\varrho}=\left|\Psi\right\rangle \left\langle \Psi\right|
\end{equation}
\end{definition}

\begin{definition}
    We assume that the set of probabilities is exhaustive, so we may write:
\begin{equation}
\mathbf{Tr}\left(\hat{\varrho}\right)=\left\langle \Psi\right|\left.\Psi\right\rangle =1
\end{equation}
\end{definition}

\begin{definition}
  We can relate expectation values of an operator to a trace over the
Wigner function   
\begin{equation}
\left\langle \hat{A}\right\rangle =\mathbf{Tr}\left(\hat{\varrho}\hat{A}\right)=\dfrac{1}{2\pi\hbar}\int_{-\infty}^{+\infty}\int_{-\infty}^{+\infty}\mathcal{\rho}_{W}(p,q)\mathcal{A}_{W}(p,q).dp.dq
\end{equation}
\end{definition}
This property is representative of a broader class of operators which
are tracial in that the trace is preserved as an integral over the
phase space:
\begin{definition}
    
\begin{equation}
\mathbf{Tr}\left(\hat{A}\hat{B}\right)=\int_{-\infty}^{+\infty}\int_{-\infty}^{+\infty}\mathcal{A}_{W}(p,q)\mathcal{B}_{W}(p,q).dp.dq
\end{equation}
\end{definition}
Further, the Wigner quasiprobability function is defined by the normalised
quasidistribution: 
\begin{definition}
    
\begin{equation}
f_{W}(p,q)=\dfrac{1}{2\pi\hbar}\mathcal{\rho}_{W}(p,q)
\end{equation}
\end{definition}
Finally, as any probability density is a real number, this is associated
to the Hermitian nature of the Wigner transform. 
\begin{definition}
    
\begin{equation}
f(x,p)=\left\langle \Psi\left|\hat{M}(x,p)\right|\Psi\right\rangle 
\end{equation}
\begin{equation}
\hat{M}^{\dagger}=\hat{M}
\end{equation}
\begin{equation}
f^{*}=f
\end{equation}
\end{definition}
and therefore the Wigner function is real. Marginal distributions
are produced in the standard way 
\begin{lemma}
    
\begin{equation}
\int_{-\infty}^{+\infty}f(x,p)dp=|\psi(x)|^{2}=\left\langle x\left|\hat{\rho}\right|x\right\rangle 
\end{equation}
\begin{equation}
\int_{-\infty}^{+\infty}f(x,p)dx=|\psi(p)|^{2}=\left\langle p\left|\hat{\rho}\right|p\right\rangle 
\end{equation}
\end{lemma}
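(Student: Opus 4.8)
The plan is to reduce both marginal integrals to diagonal matrix elements of the density operator by performing the \emph{extra} integration first, so that the Fourier kernel collapses to a Dirac delta. First I would fix conventions relating the two displayed definitions: writing $x = q + s/2$, $x' = q - s/2$ so that $x - x' = s$ and $q$ is the midpoint, the density-matrix Wigner transform reads $\rho_W(p,q) = \int_{-\infty}^{+\infty} e^{ips/\hbar}\,\langle q - s/2|\hat\varrho|q+s/2\rangle\, ds$, and the object appearing in the lemma is $f(x,p) = f_W(p,q) = \tfrac{1}{2\pi\hbar}\rho_W(p,q)$ with $\hat\rho = \hat\varrho = |\Psi\rangle\langle\Psi|$.

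For the position marginal I would integrate $f(x,p)$ over $p$ and exchange the order of integration, so that the inner integral becomes $\tfrac{1}{2\pi\hbar}\int_{-\infty}^{+\infty} e^{ips/\hbar}\, dp = \delta(s)$. The remaining $s$-integral then evaluates the kernel at $s=0$, giving $\langle q|\hat\varrho|q\rangle$; invoking the pure-state assumption this is exactly $|\psi(q)|^{2} = \langle q|\hat\rho|q\rangle$, which is the first claimed identity. This case needs nothing beyond Fubini and the Fourier representation of the delta function.

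For the momentum marginal the same mechanism applies, but it first requires rewriting $f_W$ in the momentum representation. I would insert the resolution of the identity in momentum eigenstates (equivalently substitute $\psi(x) = (2\pi\hbar)^{-1/2}\int \tilde\psi(p) e^{ipx/\hbar}\,dp$ into the kernel) and, after the Gaussian bookkeeping, obtain the $x \leftrightarrow p$ symmetric form $f_W(p,q) = \tfrac{1}{2\pi\hbar}\int_{-\infty}^{+\infty} e^{-iq\eta/\hbar}\,\tilde\psi^{*}(p-\eta/2)\,\tilde\psi(p+\eta/2)\, d\eta$. Integrating this over $x$ (that is, over $q$) produces $\int_{-\infty}^{+\infty} e^{-iq\eta/\hbar}\,dq = 2\pi\hbar\,\delta(\eta)$, and the $\eta$-integral collapses to $|\tilde\psi(p)|^{2} = \langle p|\hat\rho|p\rangle$, the second identity.

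The only genuine obstacle is the momentum case: establishing the symmetric momentum-space representation of $f_W$ and justifying the interchange of integrations that turns the oscillatory kernel into a delta. In a fully rigorous treatment one would pair $f_W$ against a Schwartz test function rather than integrating the delta in isolation, but at the level of this paper the formal identity $\tfrac{1}{2\pi\hbar}\int_{-\infty}^{+\infty} e^{iks/\hbar}\,dk = \delta(s)$ together with the Fourier duality between $\psi$ and $\tilde\psi$ suffices to close both computations.
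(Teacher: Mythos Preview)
Your argument is correct and is exactly the standard derivation of the Wigner marginals: collapse the Fourier kernel to a delta by doing the extra integration first, and for the momentum case pass to the dual representation before repeating the trick. There is nothing to fault in the logic.

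However, there is no proof in the paper to compare against. The lemma sits in the review section ``Axioms of the Wigner Function'' and is simply asserted (``Marginal distributions are produced in the standard way'') alongside the other defining properties, with a pointer to the references \cite{hillery1984distribution, tannor2007introduction}. The paper treats it as background, not as something to be demonstrated. So your write-up is not an alternative to the paper's proof but a supplement filling in what the paper leaves to the literature; the approach you give is the one those references use.
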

We assume also that the total probability sums to one, in the same
way as a classical probability density. The integral in this case
is taken over the whole phase space:
\begin{definition}
    
\begin{equation}
\int_{-\infty}^{+\infty}\int_{-\infty}^{+\infty}f(x,p)dxdp=1
\end{equation}
\end{definition}
The next ingredient is Galilean invariance principles. The basis propositions
state that the Wigner function should transform appropriately under
translations in space and time. For space translations, we have that
if the state is shifted in space, the Wigner function shifts accordingly
via:
\begin{equation}
\begin{array}{cc}
\Psi(x)\rightarrow & \Psi(x+a)\\
f(x,p)\rightarrow & f(x+a,p)
\end{array}
\end{equation}
and for momentum, we have that if the wavefunction is rotated through
a phase, the momentum component is translated according to:
\begin{equation}
\begin{array}{cc}
\Psi(x)\rightarrow & e^{ip'x/\hbar}\Psi(x)\\
f(x,p)\rightarrow & f(x,p-p')
\end{array}
\end{equation}
The final ingredient is parity invariance. If we invert the space
coordinate by mirror reflection, the Wigner function will change via:
\begin{equation}
\begin{array}{cc}
\Psi(x)\rightarrow & \Psi(-x)\\
f(x,p)\rightarrow & f(-x,-p)
\end{array}
\end{equation}
In a complementary way, taking the complex conjugate of the wave function,
then the momentum coordinate is altered to yield
\begin{equation}
\begin{array}{cc}
\Psi(x)\rightarrow & \Psi^{*}(x)\\
f(x,p)\rightarrow & f(x,-p)
\end{array}
\end{equation}

These are all the ingredients that we need to define the Wigner (c.
1932) distribution. Note that we have not assumed that the density
is positive. This is why it is called a quasiprobability. Many authors
have discussed the non-positivity of this function so we shall not
discuss it here.

\subsection{Poisson Bracket}

We shall now review some basic properties of the Poisson bracket and
elementary phase space mechanics.
\begin{definition}
     The Poisson bracket gives the Liouville
evolution for the state space: 
\begin{equation}
\dfrac{\partial\rho}{\partial t}=\{\mathcal{H}(q,p),\rho\}
\end{equation}

The Poisson bracket is antisymmetric: 
\begin{equation}
\{a,b\}=-\{b,a\}
\end{equation}
and obeys the Jacobi identity: 
\begin{equation}
\{a,\{b,c\}\}+\{b,\{c,a\}\}+\{c,\{a,b\}\}=0
\end{equation}
\end{definition}
\begin{theorem}
    
The basic system dynamics is described through the action of the 2-form:
\begin{equation}
\omega=dp_{\mu}\wedge dq^{\mu}
\end{equation}
The generator of the Lie vector field is defined through the action
of the Poisson bracket: 
\begin{equation}
\hat{X}_{f}=\dfrac{\partial f}{\partial p_{\mu}}\dfrac{\partial}{\partial q^{\mu}}-\dfrac{\partial f}{\partial q^{\mu}}\dfrac{\partial}{\partial p_{\mu}}
\end{equation}

Hamilton's equations of classical dynamics can then be written as:
\begin{equation}
f=\mathcal{H}(q^{\mu},p_{\mu})
\end{equation}
\begin{equation}
\dot{q}^{\mu}=\dfrac{\partial\mathcal{H}}{\partial p_{\mu}}
\end{equation}
\begin{equation}
\dot{p}_{\mu}=-\dfrac{\partial\mathcal{H}}{\partial q^{\mu}}
\end{equation}
\begin{equation}
\hat{X}_{\mathcal{H}}=\dot{q}^{\mu}\dfrac{\partial}{\partial q^{\mu}}+\dot{p}_{\mu}\dfrac{\partial}{\partial p_{\mu}}=\dfrac{\partial}{\partial t}
\end{equation}
\end{theorem}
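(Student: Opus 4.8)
The statement collects the definition of the Hamiltonian vector field, Hamilton's equations, and the identification of $\hat{X}_{\mathcal{H}}$ with the generator of time translation; the plan is to derive each in turn from the symplectic $2$-form $\omega = dp_{\mu}\wedge dq^{\mu}$ and the Poisson bracket recorded above. First I would recover the coordinate form of $\hat{X}_{f}$ from the intrinsic relation $\iota_{\hat{X}_{f}}\omega = -df$, the sign being the one compatible with the Liouville equation $\partial\rho/\partial t = \{\mathcal{H},\rho\}$. Writing a general field as $\hat{X}_{f} = a^{\mu}\,\partial/\partial q^{\mu} + b_{\mu}\,\partial/\partial p_{\mu}$ and contracting with $\omega$ gives $\iota_{\hat{X}_{f}}\omega = b_{\mu}\,dq^{\mu} - a^{\mu}\,dp_{\mu}$; matching coefficients against $df = (\partial f/\partial q^{\mu})\,dq^{\mu} + (\partial f/\partial p_{\mu})\,dp_{\mu}$ forces $a^{\mu} = \partial f/\partial p_{\mu}$ and $b_{\mu} = -\partial f/\partial q^{\mu}$, which is exactly the displayed formula and gives $\hat{X}_{f}g = \{f,g\}$ for every observable $g$.

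Next I would set $f = \mathcal{H}$ and evaluate $\hat{X}_{\mathcal{H}}$ on the coordinate functions themselves. Since $\partial q^{\mu}/\partial q^{\nu} = \delta^{\mu}_{\nu}$, $\partial q^{\mu}/\partial p_{\nu} = 0$ and likewise for $p_{\mu}$, one reads off $\hat{X}_{\mathcal{H}}q^{\mu} = \partial\mathcal{H}/\partial p_{\mu}$ and $\hat{X}_{\mathcal{H}}p_{\mu} = -\partial\mathcal{H}/\partial q^{\mu}$. Declaring the dynamics to be the flow of $\hat{X}_{\mathcal{H}}$, i.e.\ $\dot{q}^{\mu} := \hat{X}_{\mathcal{H}}q^{\mu}$ and $\dot{p}_{\mu} := \hat{X}_{\mathcal{H}}p_{\mu}$, these are precisely Hamilton's equations; substituting them back into the coordinate expression for $\hat{X}_{\mathcal{H}}$ yields $\hat{X}_{\mathcal{H}} = \dot{q}^{\mu}\,\partial/\partial q^{\mu} + \dot{p}_{\mu}\,\partial/\partial p_{\mu}$.

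For the last equality $\hat{X}_{\mathcal{H}} = \partial/\partial t$ I would argue along integral curves: for an observable $g$ carried by the flow, the chain rule gives $dg/dt = \partial_{t}g + \dot{q}^{\mu}\,\partial_{q^{\mu}}g + \dot{p}_{\mu}\,\partial_{p_{\mu}}g = \partial_{t}g + \hat{X}_{\mathcal{H}}g$, so in the autonomous case the operator that implements the total time derivative \emph{is} $\hat{X}_{\mathcal{H}}$, and identifying $\partial/\partial t$ with the generator of the one-parameter group of time translations on phase-space functions gives the claim. As a consistency check, Cartan's formula together with $d\omega = 0$ gives $\mathcal{L}_{\hat{X}_{\mathcal{H}}}\omega = d\,\iota_{\hat{X}_{\mathcal{H}}}\omega = -d(d\mathcal{H}) = 0$, so the flow is symplectic, while the Jacobi identity quoted earlier makes $f \mapsto \hat{X}_{f}$ a Lie-algebra homomorphism so that the flow composes correctly.

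I do not expect a genuine obstacle, as the content is a bookkeeping of definitions; the one point needing care is the meaning of $\hat{X}_{\mathcal{H}} = \partial/\partial t$, which is not an equality of operators on functions of $(q,p,t)$ but an identification of infinitesimal generators, valid once time evolution is \emph{defined} to be the Hamiltonian flow. I would therefore fix the bracket and sign conventions at the outset so that $\iota_{\hat{X}_{f}}\omega = -df$ is consistent with the Liouville equation already displayed, and keep the autonomous hypothesis $\partial\mathcal{H}/\partial t = 0$ visible throughout.
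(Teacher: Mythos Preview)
Your argument is correct and is the standard symplectic-geometry derivation. The paper, however, does not actually prove this theorem: it appears in a review subsection on the Poisson bracket and is simply stated as classical background, with the text moving directly on to the von Neumann equation afterward. So there is no proof in the paper to compare against; your proposal supplies a clean derivation where the paper offers none, and your caveat about the meaning of $\hat{X}_{\mathcal{H}} = \partial/\partial t$ is exactly the right clarification to include.
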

We shall now describe the basic equivalence from the phase space picture
to quantum dynamics. From quantum mechanics, we know the equivalent
of the Liouville equation in statistical theory is the von Neumann
equation: 
\begin{equation}
i\dfrac{\partial\hat{\rho}}{\partial t}=[\hat{H},\hat{\rho}]
\end{equation}
where under the bracket we have non-commutative matrix operations
via: 
\begin{equation}
[\hat{A},\hat{B}]=\hat{A}\hat{B}-\hat{B}\hat{A}
\end{equation}
The dynamic operator is Hermitian: 
\begin{equation}
\hat{H}^{\dagger}=\hat{H}
\end{equation}
We assume the standard properties of a pure state density matrix:
\begin{equation}
\mathrm{Tr}\hat{\rho}=1,\hat{\rho}^{2}=\hat{\rho}
\end{equation}

This is the basic setup of the properties of the Wigner function.
There are various situations more general than this, in particular
it is clear that this system is Galilei-invariant and not Lorentz
invariant, which limits the domain of application of this theory i.e.
it is not relativistic. In the case where velocities are low this
is an acceptable situation, and we can expect there to be a systematic
relationship between the behaviour of the quantum and classical theories.

\subsection{Moyal Bracket}

Groenewold (1946), J.E. Moyal (1949) \cite{groenewold1946principles, moyal1949quantum} proposed
to investigate the nature of phase-space quantum mechanics. Their
insight was to try and understand the exact relationship between the
classical and quantum systems by using the Poisson bracket. Independently,
they came to similar answers regarding the question as to what replaces
the Poisson bracket in quantum mechanics. The solution to this question
was to employ the associative star product, defined through:
\begin{definition}
    
\begin{equation}
'\star'=\exp\left(\dfrac{i\hbar}{2}\left[\overleftarrow{\partial}_{x}\overrightarrow{\partial}_{p}-\overleftarrow{\partial}_{p}\overrightarrow{\partial}_{x}\right]\right)=\exp\left(-\dfrac{i\hbar}{2}\hat{\Lambda}\right)
\end{equation}
\end{definition}
\begin{notation}
    
Overhead arrows in this context indicate if the operator acts to the
left or the right:
\begin{equation}
f\hat{\Lambda}g=-\left(\dfrac{\partial f}{\partial x}\dfrac{\partial g}{\partial p}-\dfrac{\partial f}{\partial p}\dfrac{\partial g}{\partial x}\right)
\end{equation}
\end{notation}

The great insight of Moyal was to employ this algebra to derive the
quantum equivalent of phase space mechanics. Indeed, by using this
method we can solve for the dynamics of the Wigner distribution defined
through the autocorrelation transform. The Moyal equation may be written:
\begin{equation}
W_{\hat{A}\hat{B}}=W_{\hat{A}}\exp\left(-\dfrac{i\hbar}{2}\hat{\Lambda}\right)W_{\hat{B}}
\end{equation}
where $W_{\hat{A}}$ represents the Wigner transform of the operator.
The quantum equivalent of the Liouville equation in phase space mechanics
goes over into Moyal's equation: 
\begin{equation}
\dfrac{\partial}{\partial t}W_{\hat{\rho}}=-\dfrac{2}{\hbar}W_{\hat{H}}\sin\left(\dfrac{\hbar}{2}\hat{\Lambda}\right)W_{\hat{\rho}}
\end{equation}
The classical limit of this equation can be evaluated which gives
\begin{equation}
\dfrac{\partial}{\partial t}W_{\hat{\rho}}=-W_{\hat{H}}\hat{\Lambda}W_{\hat{\rho}}=\{W_{\hat{H}},W_{\hat{\rho}}\}
\end{equation}
The normal state of affairs in quantum mechanics is the eigenvalue
problem: 
\begin{equation}
\mathcal{H}\psi=E\psi
\end{equation}
In terms of the star-eigenvalue problem, it is simple to show that
there is a complementary equation: 
\begin{equation}
\mathcal{H}\star f=Ef
\end{equation}
The action of the associative star product is shown to be: 
\begin{equation}
f\star g=f(x+\dfrac{i\hbar}{2}\overrightarrow{\partial}_{p},p-\dfrac{i\hbar}{2}\overrightarrow{\partial}_{x})g(x,p)
\end{equation}
and the star-eigenvalue problem becomes the differential equation:
\begin{theorem}
    
\begin{equation}
\mathcal{H}(x+\dfrac{i\hbar}{2}\overrightarrow{\partial}_{p},p-\dfrac{i\hbar}{2}\overrightarrow{\partial}_{x})f(x,p)=Ef(x,p)
\end{equation}
\end{theorem}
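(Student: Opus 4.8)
The plan is to read the stated differential equation as an immediate consequence of the Bopp-shift form of the star product recorded just above it, once that form is itself justified. Thus I would organise the argument in two parts: first establish the operator identity
\begin{equation}
f\star g = f\!\left(x+\tfrac{i\hbar}{2}\overrightarrow{\partial}_{p},\;p-\tfrac{i\hbar}{2}\overrightarrow{\partial}_{x}\right)g(x,p),
\end{equation}
and then specialise it, taking $f=\mathcal{H}$ and letting $g$ be the Wigner function, so that the star-eigenvalue relation $\mathcal{H}\star f=Ef$ turns directly into the claimed PDE.

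To prove the identity I would duplicate the phase-space variables, writing $f=f(x_{1},p_{1})$ and $g=g(x_{2},p_{2})$, so that the bidifferential operator in the exponent of $\star$ becomes $\partial_{x_{1}}\partial_{p_{2}}-\partial_{p_{1}}\partial_{x_{2}}$, a sum of two commuting pieces (the ``1''-derivatives commute among themselves, as do the ``2''-derivatives). Hence $\exp(\tfrac{i\hbar}{2}\hat{\Lambda})$ factorises as $\exp(\tfrac{i\hbar}{2}\partial_{x_{1}}\partial_{p_{2}})\,\exp(-\tfrac{i\hbar}{2}\partial_{p_{1}}\partial_{x_{2}})$. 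Treating $\partial_{p_{2}}$ as a parameter that commutes with everything acting in the ``1''-slot, the elementary shift identity $e^{c\,\partial_{y}}F(y)=F(y+c)$ collapses the first factor to the substitution $x_{1}\mapsto x_{1}+\tfrac{i\hbar}{2}\partial_{p_{2}}$ inside $f$, and likewise the second factor effects $p_{1}\mapsto p_{1}-\tfrac{i\hbar}{2}\partial_{x_{2}}$; setting $x_{1}=x_{2}=x$, $p_{1}=p_{2}=p$ and reading the surviving derivatives as acting to the right onto $g$ gives the identity. To make the manipulation rigorous it is enough to check it first on exponentials $f(x,p)=e^{i(ax+bp)}$, where the left-acting derivatives reduce to multiplication by $ia$ and $ib$ and both sides are manifestly the same translation operator on $g$, and then extend to polynomial (or entire) $\mathcal{H}$ by linearity and Fourier superposition; for the polynomial Hamiltonians treated in this paper the expansion terminates and no analytic subtlety arises.

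With the identity in hand the theorem is immediate: substituting $\mathcal{H}$ for $f$ and the Wigner function for $g$, the star-eigenvalue equation $\mathcal{H}\star f=Ef$ becomes precisely
\begin{equation}
\mathcal{H}\!\left(x+\tfrac{i\hbar}{2}\overrightarrow{\partial}_{p},\;p-\tfrac{i\hbar}{2}\overrightarrow{\partial}_{x}\right)f(x,p)=Ef(x,p).
\end{equation}

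The step I expect to be the main obstacle is the bookkeeping around non-commutativity. Although the two halves of $\hat{\Lambda}$ commute in the doubled variables, the shift operators $x+\tfrac{i\hbar}{2}\overrightarrow{\partial}_{p}$ and $p-\tfrac{i\hbar}{2}\overrightarrow{\partial}_{x}$ obey $[\,\cdot\,,\,\cdot\,]=i\hbar$, so the substitution into $\mathcal{H}$ is unambiguous only because the arrow convention forbids the $\overrightarrow{\partial}_{x}$ in the momentum slot from acting on the $x$ in the coordinate slot; one must confirm that this matches the (Weyl) ordering implicit in the operator $\hat{H}$ whose Wigner transform is $\mathcal{H}$. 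Verifying that on a test monomial such as $xp$ — i.e. checking that the symmetrised product $\tfrac12(XP+PX)f$ reproduces the $\hat{\Lambda}$-expansion term by term — is where the real care is needed; everything else reduces to the elementary shift calculation above.
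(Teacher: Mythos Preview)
Your proposal is correct and follows the same logical route as the paper: combine the Bopp-shift form of the star product with the star-eigenvalue relation $\mathcal{H}\star f=Ef$ to read off the differential equation. The paper itself simply \emph{states} the Bopp-shift identity (``The action of the associative star product is shown to be \ldots'') and then immediately writes down the theorem as its specialisation, whereas you actually supply a derivation of that identity via variable-doubling and the elementary translation formula. So your argument is strictly more complete than the paper's, but not different in spirit.

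One small clarification on the non-commutativity paragraph: your worry is well placed but resolves more cleanly than you suggest. In the doubled variables the four objects $x_{1},p_{1},\partial_{x_{2}},\partial_{p_{2}}$ genuinely commute, so the substitution into $\mathcal{H}$ is unambiguous \emph{before} identification; after setting $x_{1}=x_{2}$, $p_{1}=p_{2}$, the overarrow convention is precisely the bookkeeping device that records ``$\partial_{x}$ and $\partial_{p}$ act only on $g$, not on the coefficients coming from $\mathcal{H}$'', so the shifted arguments remain effectively commuting inside $\mathcal{H}$. Thus there is no Weyl-ordering check left to do for polynomial $\mathcal{H}$---the test on $xp$ you mention will of course succeed, but it is not where the argument might fail.
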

This is actually two simultaneous differential equations, in that
the solution to this expression solves a complex valued PDE which
we can separate into real and imaginary parts. For description of
this technique in a modern context, consult \cite{curtright1998features, curtright2001generating}. 

\section{Wigner Distribution for Hyperbolic Systems}

The star eigenvalue equations guarantee ready access to the formulae
for the Wigner distribution, if we can solve them. As we shall show,
it is possible to use the Wigner transform to find relationships between
the eigenstates of different systems. We shall also encounter the
spectral theory of the Wigner distribution, and demonstrate that we
can use this to find kernel representations for the various different
systems. At its core, the simplified example of a hyperbolic geometry
in phase space gives us access to several fundamental examples which
we can use to probe the technique. Future works to appear have shown
that the Wigner-Weyl transform can be used to find Wigner distributions
in systems which have completely continuous eigenvalues, where the
basic theory of Curtright \cite{curtright1998features, curtright2001generating} suffers from
complications which do not present themselves for the discrete states
we shall consider in the following.

\subsection{Simple Harmonic Oscillator}
\begin{proposition}
    It is possible to use the star-eigenvalue equation to define Wigner functions consistent with the results of Groenwold and Moyal, (Curtright).
\end{proposition}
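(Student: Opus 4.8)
The plan is to specialise the star-eigenvalue equation obtained above to the harmonic oscillator Hamiltonian $\mathcal{H}=\frac{1}{2}(p^{2}+x^{2})$, where mass and frequency have been absorbed into the scaling of $x$ and $p$, and to show that its normalisable solutions reproduce the Laguerre-polynomial Wigner functions of Groenewold, Moyal and Curtright. First I would perform the Bopp shift $x\to x+\frac{i\hbar}{2}\overrightarrow{\partial}_{p}$, $p\to p-\frac{i\hbar}{2}\overrightarrow{\partial}_{x}$ inside $\mathcal{H}$, giving
\begin{equation}
\frac{1}{2}\left[\left(x+\frac{i\hbar}{2}\partial_{p}\right)^{2}+\left(p-\frac{i\hbar}{2}\partial_{x}\right)^{2}\right]f=Ef .
\end{equation}

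Expanding the squares produces a second-order complex PDE, which I would split into its real part
\begin{equation}
\left[\frac{1}{2}(x^{2}+p^{2})-\frac{\hbar^{2}}{8}(\partial_{x}^{2}+\partial_{p}^{2})\right]f=Ef
\end{equation}
and its imaginary part
\begin{equation}
(x\,\partial_{p}-p\,\partial_{x})f=0 ,
\end{equation}
the separation being legitimate because $f$ is real by the Hermiticity axiom stated above. The imaginary part forces $f$ to depend on $(x,p)$ only through the rotational invariant $r^{2}=x^{2}+p^{2}$; this is the step at which the angular structure of the oscillator collapses the problem to a single radial variable.

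Next I would substitute $f=f(z)$ with $z=\frac{2}{\hbar}(x^{2}+p^{2})$ into the real-part equation; the radial Laplacian converts the PDE into the Laguerre equation $z f''+(1-z)f'+nf=0$ precisely when $E=\hbar\bigl(n+\frac{1}{2}\bigr)$, so that $f_{n}(z)\propto e^{-z/2}L_{n}(z)$. Requiring $f$ to be integrable over the whole phase space then discards the second (Whittaker-type) solution and forces $n$ to be a non-negative integer, recovering the spectrum $E_{n}=\hbar\omega\bigl(n+\frac{1}{2}\bigr)$. Finally I would fix the overall constant from the normalisation axiom $\iint f\,dx\,dp=1$, obtaining
\begin{equation}
f_{n}(x,p)=\frac{(-1)^{n}}{\pi\hbar}\,e^{-(x^{2}+p^{2})/\hbar}\,L_{n}\!\left(\frac{2}{\hbar}(x^{2}+p^{2})\right) ,
\end{equation}
and check that this matches term by term the star-exponential and generating-function results of Curtright et al. and the original formulae of Groenewold and Moyal; as an independent consistency test I would verify directly that these $f_{n}$ annihilate the static Moyal operator, $W_{\hat{H}}\sin\!\left(\frac{\hbar}{2}\hat{\Lambda}\right)W_{\hat{\rho}}=0$.

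The step I expect to be the main obstacle is not the radial ODE but the bookkeeping in expanding the Bopp-shifted Hamiltonian — tracking the operator ordering so that the mixed terms combine exactly into the first-order rotation operator in the imaginary part — together with the analytic argument that it is \emph{normalisability}, rather than mere polynomial growth, that quantises $n$: one must rule out the irregular confluent-hypergeometric branch and the non-terminating series. Once the reduction to $z f''+(1-z)f'+nf=0$ is in hand, the identification with the Laguerre polynomials and the comparison with the literature are routine.
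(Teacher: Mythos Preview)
Your proposal is correct and follows essentially the same route as the paper: Bopp-shift the oscillator Hamiltonian, split the resulting complex PDE into real and imaginary parts, use the imaginary part (rotational invariance) to reduce to a single radial variable, and identify the normalisable solutions of the radial ODE as $e^{-z/2}L_{n}(z)$ with $E=n+\tfrac{1}{2}$. The only notable difference is in the consistency check: where you propose to compare against the literature formulae and verify the static Moyal equation, the paper instead re-derives the Wigner function independently from the autocorrelation integral $\int e^{2ipy}\Psi_m^{*}(x+y)\Psi_m(x-y)\,dy$ using the Hermite eigenfunctions and the tabulated identity Gradshteyn--Ryzhik~7.377, and then matches the two results (fixing the normalisation via GR~7.414.6).
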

\begin{proposition}
    The Wigner function in this case is given by a Laguerre function of the distance function in phase space.
\end{proposition}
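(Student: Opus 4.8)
The plan is to solve the star-eigenvalue equation $\mathcal{H}\star f = Ef$ directly for the oscillator Hamiltonian $\mathcal{H}=\tfrac12(p^2+x^2)$, using the Bopp-shift form of the star product recorded above. Substituting $x\mapsto x+\tfrac{i\hbar}{2}\overrightarrow{\partial}_p$ and $p\mapsto p-\tfrac{i\hbar}{2}\overrightarrow{\partial}_x$ into $\mathcal{H}$ produces a second-order linear PDE with complex coefficients; the first step is to expand the squares carefully, keeping track of the ordering of the differential operators, and to collect the result into its real and imaginary parts. This yields two simultaneous real equations for $f(x,p)$, exactly as anticipated by the discussion following the star-eigenvalue theorem.

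The imaginary part turns out to be a first-order transport equation of the form $(x\partial_p - p\partial_x)f = 0$, i.e. $\hat{X}_{\mathcal H}f=0$ in the notation of the Poisson-bracket section, which says precisely that $f$ is constant on the classical circular orbits and hence a function of the single variable $z = x^2+p^2$ --- the ``distance function in phase space.'' Substituting $f=f(z)$ into the real part and changing variables (for instance to $u = 2z/\hbar$) reduces it to an ordinary differential equation in one variable; the key computation is to recognise this ODE, after stripping off the natural Gaussian factor $e^{-z/\hbar}$, as the associated Laguerre equation with parameter $0$. Demanding a solution that is bounded and square-integrable on the phase plane forces the polynomial branch, which quantises the energy as $E_n=\hbar(n+\tfrac12)$ and gives $f_n(x,p)\propto(-1)^n e^{-(x^2+p^2)/\hbar}L_n\!\big(2(x^2+p^2)/\hbar\big)$.

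The last step is to fix the normalisation from $\iint f\,dx\,dp = 1$ using the Laguerre orthogonality relations, and to check consistency with the axioms of the earlier section: that the marginals reproduce the known oscillator eigenfunction densities (a Mehler-type summation identity does this), and that the $n=0$ case recovers the Gaussian ground-state Wigner function obtained by Groenewold and Moyal, which also confirms the preceding proposition. I expect the main obstacle to be the bookkeeping in separating the complex PDE --- getting the operator-ordering terms right so that the imaginary part collapses cleanly to the radial transport equation --- together with the justification that only the polynomial (Laguerre) solution of the resulting ODE is admissible; the second, logarithmic-type solution must be excluded on grounds of normalizability, and making that exclusion rigorous rather than merely formal is the part that needs care.
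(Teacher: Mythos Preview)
Your proposal is correct and follows essentially the same route as the paper: Bopp-shift the Hamiltonian, separate the star-eigenvalue equation into real and imaginary parts, use the imaginary part to reduce to a radial variable $z\propto x^2+p^2$, strip a Gaussian to obtain the Laguerre ODE, and normalise. The only notable difference is the cross-check: the paper confirms the result by computing the Wigner integral directly from the Hermite eigenfunctions via the Gradshteyn--Ryzhik identity $\int e^{-\eta^2}H_m(\eta+\xi_1)H_m(\eta+\xi_2)\,d\eta \propto L_m(-2\xi_1\xi_2)$, whereas you propose checking the marginals with a Mehler-type identity --- and note that the normalisation integral $\int_0^\infty e^{-z/2}L_n(z)\,dz$ is a Laplace transform rather than the Laguerre orthogonality relation, so you will need the formula $\int_0^\infty e^{-bx}L_n(x)\,dx=(b-1)^n b^{-n-1}$ rather than orthogonality per se.
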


\begin{proof}
    There are very few results available for calculated Wigner functions.
We shall begin with the simplest non-trivial example, which is given
by the harmonic oscillator. This has the advantage of having a known
solution. The basic differential operator we shall take is specified
by: 
\begin{example}
    
\begin{equation}
\hat{H}=\dfrac{\hat{p}^{2}}{2m}+\dfrac{1}{2}m\omega^{2}x^{2}\sim\dfrac{1}{2}(\hat{p}^{2}+\hat{x}^{2})
\end{equation}
\end{example}
where we are using natural units to simplify the argument. The eigenfunctions
are given by the normalised Hermite polynomials: 
\begin{equation}
\Psi_{m}(u)=\dfrac{\pi^{-1/4}}{\sqrt{2^{m}m!}}H_{m}(u)e^{-u^{2}/2}
\end{equation}
The star-eigenvalue equation in this situation can be written: 
\begin{lemma}
    
\begin{equation}
\dfrac{1}{2}[(p-\dfrac{i}{2}\overrightarrow{\partial}_{x})^{2}+(x+\dfrac{i}{2}\overrightarrow{\partial}_{p})^{2}]f(x,p)=Ef(x,p)
\end{equation}
\end{lemma}

We use normalised variables: 
\begin{equation}
\omega=\hbar=m=c=1
\end{equation}
Taking real and imaginary parts, we find the differential equations:
\begin{equation}
x\dfrac{\partial f}{\partial p}-p\dfrac{\partial f}{\partial x}=0
\end{equation}
\begin{equation}
\left[(p^{2}+x^{2})-\dfrac{1}{4}(\dfrac{\partial^{2}}{\partial p^{2}}+\dfrac{\partial^{2}}{\partial x^{2}})-2E\right]f(x,p)=0
\end{equation}
The two differential equations are solved using the following. 
\begin{equation}
f(x,p)=F(\lambda[x^{2}+p^{2}])
\end{equation}
\begin{equation}
f(x,p)=F(z),z=2(x^{2}+p^{2})
\end{equation}
Substituting this into the second differential equation and using
the chain rule, we find: 
\begin{equation}
\left(\dfrac{z}{4}-\dfrac{\partial}{\partial z}-z\dfrac{\partial^{2}}{\partial z^{2}}-E\right)F(z)=0
\end{equation}
Using the trial solution $F(z)=e^{-z/2}L(z)$ this reduces to the
Laguerre differential equation 
\begin{equation}
(E-\dfrac{1}{2})L(z)+(1-z)\dfrac{\partial L}{\partial z}+z\dfrac{\partial^{2}L}{\partial z^{2}}=0
\end{equation}
The system is solved using the set of eigenstates $n=E-\dfrac{1}{2}=0,1,2,...$,
then we have $L(z)=L_{n}(z)$, and the solution for the Wigner distribution
is then given by: 
\begin{equation}
f_{n}(x,p)=C_{n}F(z)=C_{n}e^{-z/2}L_{n}(z)=C_{n}e^{-2\mathcal{H}}L_{n}(4\mathcal{H})
\end{equation}
or in the original coordinates: 
\begin{theorem}
    
\begin{equation}
f_{n}(x,p)=C_{n}e^{-(x^{2}+p^{2})}L_{n}[2(x^{2}+p^{2})]
\end{equation}
\end{theorem}
To resolve the value of the constant, we must use other means. The
most straightforward way in which to determine this is to use the
integral normalisation of the quasidistribution. The integral formula
for the Wigner distribution can be written out as: 
\begin{equation}
f_{m}(x,p)=\int_{-\infty}^{+\infty}e^{2ipy}\Psi_{m}^{*}(x+y)\Psi_{m}(x-y)dy
\end{equation}
Using the expression for the eigenstates of the harmonic oscillator,
and doing some simple calculations involving Hermite polynomials yield:
\begin{equation}
f_{m}(x,p)=\dfrac{1}{2^{m}m!\sqrt{\pi}}\int_{-\infty}^{+\infty}e^{2ipy}H_{m}(x+y)H_{m}(x-y)e^{-(x^{2}+y^{2})}dy
\end{equation}
This equation will be further simplified by using an integral formula.
Extracting factors which do not enter into the integration, the formula
may be manipulated into the format: 
\begin{equation}
f_{m}(x,p)=\dfrac{(-1)^{m}e^{-(x^{2}+p^{2})}}{2^{m}m!\sqrt{\pi}}\int_{-\infty}^{+\infty}e^{-(y-ip)^{2}}H_{m}(y+x)H_{m}(y-x)dy
\end{equation}
Using formula from Gradshteyn and Ryzik (eq. 7.377 pp 853)\cite{gradshteyn2014table}:
\begin{lemma}
    
\begin{equation}
L_{m}(-2\xi_{1}\xi_{2})=\dfrac{1}{2^{m}\Gamma(m)\sqrt{\pi}}\int_{-\infty}^{+\infty}e^{-\eta^{2}}H_{m}(\eta+\xi_{1})H_{m}(\eta+\xi_{2})d\eta
\end{equation}
\begin{equation}
\xi_{1}=ip+x,\xi_{2}=ip-x,\xi_{1}\xi_{2}=-(x^{2}+p^{2})
\end{equation}
\end{lemma}
The solution for the Wigner distribution from this perspective is
then: 
\begin{equation}
f_{m}(x,p)=\dfrac{(-1)^{m}e^{-(x^{2}+p^{2})}}{\pi}L_{m}[2(x^{2}+p^{2})]
\end{equation}
The only major difference between the results is the normalisation.
This can be evaluated using the statement of total probability $\int_{-\infty}^{+\infty}\int_{-\infty}^{+\infty}f_{n}(x,p)dxdp=1$.
Using this on the un-normalised form of the Wigner distribution, we
find: 
\begin{equation}
C_{n}\int_{-\infty}^{+\infty}\int_{-\infty}^{+\infty}e^{-(x^{2}+p^{2})}L_{n}[2(x^{2}+p^{2})]dxdp=1
\end{equation}
In the space represented by the co-ordinate $z=2(x^{2}+p^{2})$, the
integral over all of phase space is transformed via 
\begin{equation}
\int_{-\infty}^{+\infty}\int_{-\infty}^{+\infty}dxdp\rightarrow\dfrac{1}{2}\int_{-\infty}^{+\infty}dz
\end{equation}
Normalisation of the quasidistribution is then: 
\begin{equation}
\dfrac{C_{n}}{2}\int_{0}^{+\infty}dze^{-z/2}L_{n}(z)=1
\end{equation}
Using Gradshetyn and Ryzik (eq. 7.414.6 pp809) \cite{gradshteyn2014table}:
\begin{equation}
\int_{0}^{+\infty}e^{-bx}L_{n}(x)dx=(b-1)^{n}b^{-n-1}
\end{equation}
we can resolve the normalised Wigner function to be given by: 
\begin{theorem}
    
\begin{equation}
f_{n}(x,p)=(-1)^{n}e^{-(x^{2}+p^{2})}L_{n}[2(x^{2}+p^{2})]
\end{equation}
\end{theorem}
in agreement with the other methods. 

\end{proof}
\begin{remark}
The missing numerical factor
of $\pi$ comes from our original definition of the distribution itself.
This solution dates to \cite{moyal1949quantum}. Obviously the class of systems
which are deformable into a simple harmonic oscillator and the myriad
of variations forms a large part of quantum mechanics. Consequently,
this representation of the Wigner function performs an analogous role
to the kernel function in the two point correlation function.

\end{remark}

\subsection{XP oscillator}

We are led naturally to think of other examples related to the harmonic
oscillator. There are several directions one may take, the simplest
being the change from spherical symmetry, as expressed through the
simple harmonic oscillator, to a hyperbolic geometry. 

\begin{proposition}
    The XP oscillator is a hyperbolic variant of the simple harmonic oscillator.
\end{proposition}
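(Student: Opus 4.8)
The plan is to exhibit the $xp$ interaction as the image of the simple harmonic oscillator under a symplectic rotation of phase space, so that the rotationally invariant combination $x^{2}+p^{2}$ is replaced by an indefinite quadratic form whose level sets are hyperbolae. First I would fix the operator ordering: the self-adjoint representative of the classical observable $xp$ is the symmetrized product $\hat{H}_{xp}=\tfrac{1}{2}(\hat{x}\hat{p}+\hat{p}\hat{x})$, whose Weyl (Wigner) symbol is exactly $xp$ with no $\hbar$-corrections, so the star-eigenvalue machinery of the previous subsection applies verbatim.

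Next I would perform the $\pi/4$ rotation $x=(u+v)/\sqrt{2}$, $p=(u-v)/\sqrt{2}$, which preserves $dp\wedge dx$ and sends $xp\mapsto\tfrac{1}{2}(u^{2}-v^{2})$. This is the structural heart of the statement: the oscillator is governed by the positive-definite form $u^{2}+v^{2}$ (compact orbits --- circles), whereas the $xp$ system is governed by the indefinite form $u^{2}-v^{2}$ (non-compact orbits --- hyperbolae). I would reinforce this by computing the Hamiltonian flow of $\hat{H}_{xp}$, which is the one-parameter squeezing group $x\mapsto e^{t}x$, $p\mapsto e^{-t}p$, in contrast to the elliptic rotation generated by the oscillator; and I would note that $\{\tfrac{1}{2}x^{2},\,\tfrac{1}{2}p^{2},\,\tfrac{1}{2}(xp+px)\}$ close under the Poisson bracket into the Lie algebra $\mathrm{su}(1,1)\cong\mathrm{sl}(2,\mathbf{R})$, with the oscillator playing the role of the compact Cartan generator and $xp$ that of a non-compact (boost / dilation) generator of the \emph{same} algebra --- hence ``hyperbolic variant''.

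To complete the analogy at the level of the Wigner function, I would set up $\mathcal{H}_{xp}\star f=Ef$, separate it into real and imaginary parts exactly as in the harmonic case, and show that the real part constrains $f$ to depend only on the hyperbolic invariant, while the imaginary part reduces --- after a substitution of the form $f=e^{-(\cdot)}L(\cdot)$ --- to a confluent hypergeometric / Whittaker equation, the hyperbolic counterpart of the Laguerre equation found for the oscillator. This realizes the $xp$ oscillator as the Wick-rotated harmonic oscillator at the level of both the generator and its quasiprobability, which is the content of the proposition.

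The main obstacle --- and the reason this is a genuine variant rather than a relabelling --- is that passing from $u^{2}+v^{2}$ to $u^{2}-v^{2}$ requires, beyond the real symplectic rotation above, a further Wick rotation $v\mapsto iv$ that is \emph{not} unitary. Consequently the discrete, $L^{2}$-normalizable oscillator spectrum is traded for a continuous one, and the eigensolutions become distributions (Whittaker functions with oscillatory or power-law tails rather than Gaussians times Laguerre polynomials). I expect the delicate point of the write-up to be specifying in what space these solutions live and why the total-probability normalization used for the oscillator must be replaced by a continuum (distributional) normalization --- precisely the complication flagged in the remarks above about systems with completely continuous eigenvalues.
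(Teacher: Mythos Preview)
Your argument is correct and in several respects more illuminating than the paper's. The paper does not actually justify the word ``hyperbolic'' beyond writing down $\mathcal{H}=\tfrac{1}{2}(\hat{x}\hat{p}+\hat{p}\hat{x})$ alongside $\tfrac{1}{2}(p^{2}-x^{2})$ and proceeding directly to the star-eigenvalue computation; your symplectic $\pi/4$ rotation sending $xp\mapsto\tfrac{1}{2}(u^{2}-v^{2})$, the identification of the flow as squeezing, and the placement of $xp$ as a non-compact generator in the $\mathfrak{su}(1,1)$ triple are all genuine content that the paper only hints at (and only much later, in the matrix section). On the Wigner side, the paper follows the same star-eigenvalue route you sketch but then supplements it with a direct evaluation of the autocorrelation integral using the power-law eigenfunctions $\Psi_{m}(x)=d_{m}x^{iE_{m}}$, obtaining the Whittaker function $W_{iE_{m},1/2}(4ixp)$ explicitly; this second leg is worth including since it fixes which of the two Whittaker solutions is selected and pins down the normalisation.

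Two small corrections. First, you have the roles of the real and imaginary parts of $\mathcal{H}\star f=Ef$ reversed: for $\mathcal{H}=xp$ it is the \emph{imaginary} part, $p\,\partial_{p}f-x\,\partial_{x}f=0$, that forces $f=F(xp)$, and the \emph{real} part, $xp\,f+\tfrac{1}{4}\partial_{x}\partial_{p}f=Ef$, that becomes the Whittaker ODE --- exactly parallel to the oscillator case. Second, your caution about continuum normalisation is well placed, but note that the paper in fact proceeds formally with the total-probability condition and an integral from Gradshteyn--Ryzhik to extract a normalisation constant; whether that is fully justified in a distributional sense is, as you anticipate, not addressed there.
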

\begin{proposition}
    The Wigner function in this case can be evaluated by similar means, with the distance metric modified to a suitable form for the hyperbolic space, and the Laguerre polynomial replaced by a Whittaker function.
\end{proposition}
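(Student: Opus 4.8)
The plan is to mirror the harmonic-oscillator calculation above, trading the compact generator $\tfrac{1}{2}(\hat{x}^{2}+\hat{p}^{2})$ for a non-compact one. For the first Proposition I would fix the Hermitian $XP$ Hamiltonian to be the symmetrised dilation operator $\hat{H}=\tfrac{1}{2}(\hat{x}\hat{p}+\hat{p}\hat{x})$, whose Weyl symbol is $\mathcal{H}(x,p)=xp$. The point is that $\tfrac{1}{2}(x^{2}+p^{2})$ and $xp$ are both quadratic phase-space forms lying in the three-dimensional Lie algebra $su(1,1)$ of such forms: the first generates phase-space rotations, a compact one-parameter subgroup with invariant $x^{2}+p^{2}$, while $xp$ generates a squeeze (boost), a non-compact one-parameter subgroup with invariant $xp$ (equivalently a multiple of $x^{2}-p^{2}$, since $\tfrac{1}{2}(p^{2}-x^{2})$ and $xp$ differ only by a $\pi/4$ rotation of the plane). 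This is the precise sense in which the $XP$ oscillator is the \emph{hyperbolic variant}: the circular orbits of the harmonic oscillator become hyperbolae, and the Euclidean ``radius'' $x^{2}+p^{2}$ that organised the previous calculation is replaced by the Lorentzian invariant $xp$.

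For the second Proposition I would insert $\mathcal{H}=xp$ into the star-eigenvalue equation of the Theorem, using the Weyl (symmetric) Bopp-shift ordering $\tfrac{1}{2}(\hat{X}\hat{P}+\hat{P}\hat{X})f=Ef$ with $\hat{X}=x+\tfrac{i}{2}\overrightarrow{\partial}_{p}$ and $\hat{P}=p-\tfrac{i}{2}\overrightarrow{\partial}_{x}$. Since the symbol is quadratic the expansion terminates; separating real and imaginary parts gives, just as in the harmonic case, two simultaneous equations. The imaginary part is the first-order transport equation $(x\partial_{x}-p\partial_{p})f=0$, whose general solution is $f=F(u)$ with $u=xp$ — the exact analogue of $f=F(x^{2}+p^{2})$ for the oscillator. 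Substituting this into the real part and using $\partial_{x}\partial_{p}F(xp)=F'(u)+uF''(u)$ reduces it to the single ordinary differential equation $uF''+F'+4(u-E)F=0$. From here there are two equivalent routes: the substitution $F(u)=u^{-1/2}M(u)$ followed by a rescaling $w\propto iu$ brings it to Whittaker's equation $M''+\bigl(-\tfrac{1}{4}+\tfrac{\kappa}{w}+\tfrac{1/4-\mu^{2}}{w^{2}}\bigr)M=0$ with $\mu=0$ and $\kappa$ linear in $E$; or, more in the spirit of the $e^{-z/2}L(z)$ trial used for the oscillator, the ansatz $F(u)=e^{2iu}M(u)$ reduces it to a confluent hypergeometric (Laguerre-type) equation with non-integer index $-iE-\tfrac{1}{2}$. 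Either way one concludes $f_{E}(x,p)\propto (xp)^{-1/2}W_{\kappa,0}(4i\,xp)$, the Laguerre polynomial of the oscillator being replaced by a Whittaker function of the hyperbolic invariant.

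As the independent cross-check — the counterpart of the Gradshteyn--Ryzhik identity invoked in the Hermite calculation — I would compute the Wigner function straight from the autocorrelation integral $f_{E}(x,p)=\int_{-\infty}^{+\infty}e^{2ipy}\psi_{E}^{*}(x+y)\psi_{E}(x-y)\,dy$ using the delta-normalised eigenfunctions $\psi_{E}(x)=x^{iE-1/2}$ of $\hat{H}$ (with the odd partner $\mathrm{sgn}(x)\,|x|^{iE-1/2}$ adjoined if one works on the whole line). The substitution $y=xt$ collapses the central part of the integral to $\int_{-1}^{1}e^{2ipxt}(1+t)^{-iE-1/2}(1-t)^{iE-1/2}\,dt$, manifestly a function of $xp$ alone, and this is one of the standard integral representations of the Whittaker (confluent hypergeometric) function, which both confirms the differential-equation result and pins down the normalisation constant.

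The hard part is that, unlike the harmonic oscillator, the $XP$ spectrum is purely continuous and the eigenstates are not normalisable, so ``the Wigner function'' must be read as a Wigner kernel, i.e. a tempered distribution, and the defining integral converges only in that sense. The powers $|x\pm y|^{iE-1/2}$ carry branch cuts that have to be followed quadrant by quadrant — in particular the prefactor $(xp)^{-1/2}$ becomes imaginary where $xp<0$, and the behaviour under parity depends on which combination of the two degenerate eigenfunctions one selects — so matching the full-line integral to the classical Whittaker integral representations, which come with restrictions on their arguments, requires an analytic-continuation argument rather than a direct table lookup. Getting the Whittaker parameters $(\kappa,\mu)=(\pm iE,0)$ exactly right through the chain of substitutions and rescalings, with all the factors of $i$ and $\tfrac{1}{2}$ inherited from the Bopp shift, is the remaining fiddly computational core; everything else parallels the harmonic-oscillator proof line for line.
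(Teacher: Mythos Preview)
Your plan mirrors the paper's proof step for step: the same symmetrised $xp$ Hamiltonian, the same real/imaginary split of the star-eigenvalue equation yielding $f=F(xp)$, the same reduction of the real part to a Whittaker-type ODE in $u=xp$, and the same independent cross-check via the autocorrelation integral of the power-law eigenfunctions identified with a Whittaker integral representation. The only divergence is in the bookkeeping you already flag as ``fiddly'': the paper records second Whittaker index $\mu=\tfrac{1}{2}$ and eigenfunctions $\propto x^{iE}$, whereas you obtain $\mu=0$ with an explicit $(xp)^{-1/2}$ prefactor and the delta-normalised $x^{iE-1/2}$ --- and it is your parameters that actually satisfy the ODE $uF''+F'+4(u-E)F=0$ and make the Wigner integral a genuine function of $xp$ alone.
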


\begin{proof}

The following
two systems are indicative of various ways in which this change can
be achieved satisfactorily, the hyperbolic plane in multiplicative
form: 
\begin{definition}
    
\begin{equation}
\mathcal{H}=\dfrac{1}{2}(\hat{x}\hat{p}+\hat{p}\hat{x})=-i\left(x\dfrac{\partial}{\partial x}+\dfrac{1}{2}\right)
\end{equation}
\end{definition}
and the hyperbolic oscillator which we can write as: 
\begin{definition}
    
\begin{equation}
\mathcal{H}=\dfrac{1}{2}(p^{2}-x^{2})=-\dfrac{1}{2}\left[\dfrac{\partial^{2}}{\partial x^{2}}+x^{2}\right]
\end{equation}
\end{definition}
We assume the standard form of the momentum operator $\hat{p}=-i\partial_{x}$
familiar from quantum mechanics. The three fundamental operators defined
by these different Hamiltonians + the SHO are related to the fundamental
invariants that can be formed between the space and momentum. We have
a method for evaluating the Wigner function in a unitary space, but
the formulae for these Hamiltonian operators is only pseudounitary,
hence our basic method requires modification to function correctly
in this case. For the situation where the dynamics is specified through:
\begin{equation}
\mathcal{H}f=-i\left(x\dfrac{\partial}{\partial x}+\dfrac{1}{2}\right)f
\end{equation}
we can carry out most of the analysis in exactly the same way as for
the harmonic oscillator. In this case, the star-eigenvalue equations
are given by: 
\begin{equation}
\mathcal{H}(x+\dfrac{i\hbar}{2}\overrightarrow{\partial}_{p},p-\dfrac{i\hbar}{2}\overrightarrow{\partial}_{x})f(x,p)=Ef(x,p)
\end{equation}
and can be written: 
\begin{theorem}
    
\begin{equation}
xpf+\dfrac{1}{4}\dfrac{\partial^{2}f}{\partial p\partial x}+\dfrac{i}{2}\left(p\dfrac{\partial f}{\partial p}-x\dfrac{\partial f}{\partial x}\right)=Ef
\end{equation}
\end{theorem}
Taking real and imaginary parts, we obtain a solvable set of equations.
Writing out the paired equations explicitly, we find: 
\begin{theorem}
    
\begin{equation}
\left(xp+\dfrac{1}{4}\dfrac{\partial^{2}}{\partial p\partial x}\right)f=Ef
\end{equation}
\begin{equation}
p\dfrac{\partial f}{\partial p}-x\dfrac{\partial f}{\partial x}=0
\end{equation}
\end{theorem}
Similar considerations to the harmonic oscillator calculation allow
us to derive the following: 
\begin{equation}
F(z)=C_{1}W_{iE,1/2}(4iz)+C_{2}M_{iE,1/2}(4iz)
\end{equation}
where the variable changes are given by $f(x,p)=F(xp)=F(z)$ . To
analyse which of these solutions has the correct behaviour, we must
examine the structure of the Wigner distribution. Proceeding in a
similar way to the SHO calculation, we can write the integral form
of the Wigner distribution as: 
\[
f(x,p)=\int e^{ips}\Psi_{m}^{*}(x-s/2)\Psi_{m}(x+s/2)ds
\]
\begin{equation}
=|d_{m}|^{2}\int_{0}^{\infty}e^{ips}(x+\dfrac{s}{2})^{-iE_{m}}(x-\dfrac{s}{2})^{iE_{m}}ds
\end{equation}
We have assumed the eigenfunction for the pseudo-Hamiltonian operator
can be specified by $\Psi_{m}(x)=d_{m}(x)x^{iE_{m}}$. Evaluating
the integral can be done using Whittaker functions: 
\begin{theorem}
    
\begin{equation}
W_{iE_{m},1/2}(4ixp)=\dfrac{e^{-2ixp}}{\Gamma(1-iE_{m})}\int_{0}^{\infty}e^{-4ixpu}u^{-iE_{m}}(1+u)^{iE_{m}}du
\end{equation}
\end{theorem}
after some lengthy algebra. This form of the confluent hypergeometric
function is new and is not found in \cite{buchholz2013confluent}, but follows
naturally from the geometry of the problem. The result found for the
Wigner function in this instance is given by the un-normalised formula:
\begin{equation}
f_{m}(x,p)=C_{m}(-1)^{-iE_{m}}W_{iE_{m},1/2}(4ixp)
\end{equation}
whence upon using the statement of total probability we conclude that
the normalisation is given by: 
\begin{equation}
\int AW_{iE_{m},1/2}(4ixp)dxdp=1
\end{equation}
A formula from Gradshetyn and Ryzik (eq 7.622.11) \cite{gradshteyn2014table}
enables resolution of the constant: 
\begin{equation}
\int_{0}^{\infty}e^{-x/2}x^{\nu-1}W_{\kappa,1/2}(x)dx=\dfrac{\Gamma(\nu)\Gamma(\nu+1)}{\Gamma(\nu-\kappa+1)}
\end{equation}
where the integration measure in this case goes over $\int\int dxdp=\int_{0}^{\infty}dz$.
Putting all the ingredients together, we find the normalised formula
for the Wigner distribution on the half-plane as given by: 
\begin{theorem}
    
\begin{equation}
f_{m}(x,p)=(-1)^{-iE_{m}}\dfrac{e^{-2ixp}}{(4ixp)}\Gamma(1-iE_{m})W_{iE_{m},1/2}(4ixp)
\end{equation}
\end{theorem}    
\end{proof}
\begin{remark}
    This is a new result. Note the appearance of the ``distance'' function
in the multiplicative hyperbolic system in the arguments of the different
functions. We shall now show that the other example can be addressed
in a similar fashion.

\end{remark}
\subsection{Hyperbolic Oscillator}
\begin{proposition}
    A third variant with an analytic Wigner function is given by the hyperbolic oscillator.
\end{proposition}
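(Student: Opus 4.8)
The plan is to run the same apparatus as in the harmonic and $xp$ cases, now with the pseudo-Hamiltonian symbol $\mathcal{H}(x,p)=\tfrac{1}{2}(p^{2}-x^{2})$. First I would write down the star-eigenvalue equation
\begin{equation}
\mathcal{H}\left(x+\frac{i}{2}\overrightarrow{\partial}_{p},\,p-\frac{i}{2}\overrightarrow{\partial}_{x}\right)f(x,p)=Ef(x,p),
\end{equation}
expand the two Bopp-shifted squares, and separate real and imaginary parts (using that $f$ and $E$ are real). As before the imaginary part should collapse to a pure first-order transport equation, here $p\,\partial_{x}f+x\,\partial_{p}f=0$, whose characteristics are the hyperbolae $p^{2}-x^{2}=\mathrm{const}$; hence the Wigner function depends only on the Lorentzian ``distance'' and I would set $f(x,p)=F(z)$ with $z=2(p^{2}-x^{2})$, the hyperbolic analogue of the $z=2(x^{2}+p^{2})$ of the oscillator and the $z=xp$ of the $xp$ interaction.

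Second, substituting $f=F(z)$ into the real part and using the chain rule should reduce it to a single second-order ODE in $z$ of confluent-hypergeometric type — I expect something like $zF''+F'+\left(\tfrac{z}{4}-E\right)F=0$ — the hyperbolic sign forcing this into Whittaker form with an \emph{imaginary} argument, precisely as the $xp$ problem gave $W_{iE,1/2}(4ixp)$. The general solution should read $F(z)=C_{1}W_{iE,\mu}(2iz)+C_{2}M_{iE,\mu}(2iz)$ with $\mu$ fixed by the indicial exponents (the same $\mu=\tfrac12$ as before), so the Wigner function is a Whittaker function of $2i(p^{2}-x^{2})$; as a sanity check, the formal continuation $x^{2}\to-x^{2}$, $n\to iE_{m}$ of the oscillator answer $(-1)^{n}e^{-(x^{2}+p^{2})}L_{n}[2(x^{2}+p^{2})]$ turns the Laguerre polynomial into exactly such a Whittaker function. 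To decide which linear combination is physical I would use the integral form $f(x,p)=\int e^{ips}\Psi_{m}^{*}(x-s/2)\Psi_{m}(x+s/2)\,ds$, in which the eigenfunctions of $-\tfrac12(\partial_{x}^{2}+x^{2})$ are parabolic cylinder (Weber) functions of complex index; the large-argument behaviour of that integral singles out the $W$-solution over the $M$-solution, as it did for $xp$.

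Third, I would fix the multiplicative constant from the total-probability axiom, rewriting the phase-space measure in the hyperbolic coordinate and applying a Gradshteyn and Ryzhik integral of the same $W_{\kappa,1/2}$ family as the eq.~7.622.11 used for the $xp$ oscillator. This should deliver a closed expression structurally identical to the $xp$ result with $xp$ replaced by $\tfrac12(p^{2}-x^{2})$, of the shape $f_{m}(x,p)=(-1)^{-iE_{m}}\dfrac{e^{-i(p^{2}-x^{2})}}{2i(p^{2}-x^{2})}\Gamma(1-iE_{m})W_{iE_{m},1/2}\!\left(2i(p^{2}-x^{2})\right)$, which establishes that the hyperbolic oscillator has an analytic Wigner function.

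The step I expect to be the real obstacle is not the ODE reduction — once the hyperbolic invariant is spotted that mirrors the previous two calculations almost verbatim — but two points downstream. The first is branch bookkeeping: the Whittaker and Weber functions are being evaluated at purely imaginary arguments, so one must track carefully which sheet the $W$-function and the $(-1)^{-iE_{m}}$-type phase sit on. The second, and more serious, is that $-\tfrac12(\partial_{x}^{2}+x^{2})$ has a genuinely continuous spectrum: the hyperbolae foliating phase space have infinite transverse extent, so $\int\!\!\int dx\,dp$ does \emph{not} collapse cleanly to a single $\int dz$ the way $\int\!\!\int dx\,dp\to\tfrac12\int dz$ did for the oscillator, and $\int\!\!\int f\,dx\,dp=1$ must be read in the improper, delta-normalised sense, with the cross-check done using Weber functions rather than Hermite polynomials. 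Making those two points rigorous is where the genuine work lies.
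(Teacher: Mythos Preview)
Your ODE reduction is essentially the paper's, up to conventions: the paper takes $z=p^{2}-x^{2}$ (no factor of two), arrives at $F''+(1-E/z)F=0$, and writes the general solution as $C_{1}M_{iE/2,1/2}(2iz)+C_{2}W_{iE/2,1/2}(2iz)$, so the Whittaker index is $iE/2$ rather than your $iE$.

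The integral cross-check, however, runs on a genuinely different track from what you propose. The paper does \emph{not} feed Weber/parabolic-cylinder eigenfunctions into the standard real-shift Wigner integral. Instead it (i) writes the eigenfunctions as $c_{m}\,W_{iE_{m}/2,1/4}(ix^{2})/\sqrt{x}$, (ii) converts these Whittaker functions to Hermite polynomials via identities of the type $W_{n+1/4,-1/4}(z)/\sqrt{z}\propto H_{2n}(\sqrt{z})e^{-z/2}$, and (iii) evaluates an \emph{analytically continued} autocorrelation
\[
f_{m}(x,p)=\int_{-\infty}^{+\infty}e^{2ipy}\,\bar\Psi_{m}(x-iy)\,\Psi_{m}(x+iy)\,dy,
\]
with an \emph{imaginary} shift in the argument, so that the same GR~7.377 Hermite-product integral used for the SHO applies verbatim and yields an associated Laguerre function $L_{2n}^{(1)}\bigl[-2(p^{2}-x^{2})\bigr]$. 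Consistency with the star-eigenvalue solution is then closed by $W_{m+1,1/2}(z)=(-1)^{m}\Gamma(m)\,e^{-z/2}z\,L_{m}^{(1)}(z)$. Finally, the paper explicitly \emph{omits} the normalisation step (``We have not considered the statement of total probability here''), so your third step and your continuous-spectrum worries go beyond what is actually established.

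What your route would buy is conceptual directness and an honest engagement with the continuous spectrum; what the paper's route buys is a shortcut to a Laguerre closed form by recycling the SHO machinery wholesale, at the cost of an analytic continuation of the Wigner integral that is asserted rather than justified. Both land on the same Whittaker $W_{\kappa,1/2}$ family, but the paper's final answer is stated in the Laguerre form $L_{2n}^{(1)}$ with a discrete label, not in the $xp$-style normalised Whittaker form you anticipate.
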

\begin{corollary}
    This can be viewed as the extension of the harmonic oscillator to complex momentum or space.
\end{corollary}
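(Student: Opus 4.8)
The plan is to run the harmonic-oscillator computation of the previous subsection essentially verbatim, keeping track of the one sign that distinguishes $\mathcal{H}=\frac{1}{2}(p^{2}-x^{2})$ from $\frac{1}{2}(p^{2}+x^{2})$. First I would insert the Bopp-shifted Hamiltonian $\mathcal{H}(x+\frac{i}{2}\overrightarrow{\partial}_{p},p-\frac{i}{2}\overrightarrow{\partial}_{x})$ into the star-eigenvalue equation $\mathcal{H}\star f=Ef$ and split into real and imaginary parts. The imaginary (transport) part forces $f$ to be a function of a single variable, the hyperbolic phase-space radius $z\propto x^{2}-p^{2}$, exactly as $x^{2}+p^{2}$ appeared for the SHO; substituting $f=F(z)$ into the real part and applying the chain rule should reduce it, after fixing conventions, to the SHO confluent ODE with $z$ replaced by $-z$. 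Since that replacement is precisely what $x\mapsto ix$ (or $p\mapsto ip$) induces on $x^{2}+p^{2}$, the ODE, its Laguerre-type solutions, and ultimately the whole calculation are the analytic continuation of the SHO case, which is the content of the corollary.

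Second, I would record the solution in closed form and cross-check it against the integral definition. Formally continuing the normalised SHO result $f_{n}(x,p)=(-1)^{n}e^{-(x^{2}+p^{2})}L_{n}[2(x^{2}+p^{2})]$ under $x\mapsto ix$ gives
\begin{equation}
f(x,p)=C\,e^{-(p^{2}-x^{2})}L_{\nu}[2(p^{2}-x^{2})],\qquad \nu=E-\tfrac{1}{2},
\end{equation}
and the same expression should drop out of the Wigner transform $f(x,p)=\int e^{ips}\Psi^{*}(x-s/2)\Psi(x+s/2)\,ds$ applied to the eigenfunctions of $\frac{1}{2}(p^{2}-x^{2})$, using the Hermite integral of Gradshteyn--Ryzhik 7.377 evaluated on a rotated contour exactly as in the SHO proof. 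Agreement of the two routes, up to the overall constant, is the natural internal consistency check.

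The main obstacle is not the algebra but what ``the'' Wigner function means here. Under $x\mapsto ix$ the Gaussian weight is inverted, $e^{-(x^{2}+p^{2})}\mapsto e^{-(p^{2}-x^{2})}$, which grows without bound in the $x$ direction, so the continued object is not integrable over the full phase plane; correspondingly the hyperbolic oscillator has real continuous spectrum, so $\nu=E-\frac{1}{2}$ need not be a non-negative integer and $L_{\nu}$ is a genuine Laguerre (equivalently confluent hypergeometric, equivalently Whittaker) function rather than a polynomial, precisely the phenomenon already met in the XP case. I would therefore phrase the corollary as an identity between the formal star-eigenvalue solutions related by the complex rotation, make the normalisation and contour caveats explicit, and defer the rigorous continuous-spectrum normalisation to the future work flagged in the XP remark rather than attempting it here.
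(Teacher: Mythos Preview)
Your reading of the corollary---that $\tfrac12(p^{2}-x^{2})$ is literally $\tfrac12(p^{2}+x^{2})$ after $x\mapsto ix$---is exactly its content, and pushing that substitution through the star-eigenvalue machinery is the natural way to justify it. The paper, however, does not argue this way. Rather than continuing the SHO result, it solves the hyperbolic star-eigenvalue problem from scratch, obtaining Whittaker functions $W_{iE/2,1/2}(2iz)$ with $z=p^{2}-x^{2}$; it then solves the Schr\"odinger problem independently (again Whittaker, $W_{iE_{m}/2,1/4}(ix^{2})/\sqrt{x}$), converts these to Hermite polynomials via the identities for $W_{n+1/4,-1/4}$ and $W_{n+3/4,1/4}$, and only then feeds them into Gradshteyn--Ryzhik~7.377. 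The analytic continuation in the paper is not $x\mapsto ix$ in the final formula but a rotation of the \emph{integration variable}: the Wigner integral is written as $\int e^{2ipy}\bar\Psi_{m}(x-iy)\Psi_{m}(x+iy)\,dy$, which the paper explicitly calls ``the analytical continuation of the autocorrelation.'' Your route is more economical and makes the corollary transparent; the paper's route buys an explicit match between the two independent computations (star ODE versus integral transform).

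One concrete point to watch: the paper's output is $f_{n}\propto e^{\zeta}L_{2n}^{(1)}(-2\zeta)$ with $\zeta=p^{2}-x^{2}$, tied back to the star-eigenvalue solution via $W_{m+1,1/2}(z)=(-1)^{m}\Gamma(m)e^{-z/2}zL_{m}^{(1)}(z)$. This does \emph{not} coincide with your predicted $e^{-(p^{2}-x^{2})}L_{\nu}[2(p^{2}-x^{2})]$: the exponential carries the opposite sign, the Laguerre is the associated $L^{(1)}$, and its argument is $-2\zeta$ rather than $+2\zeta$. So while your high-level picture is right, the naive substitution $x\mapsto ix$ in the normalised SHO answer will not land on the paper's formula without additional care---precisely the contour/eigenfunction subtlety you flag in your last paragraph. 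It would strengthen your argument to perform the continuation at the level of the star ODE (where the sign flip is unambiguous) rather than in the final closed form.
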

\begin{proposition}
    The Wigner function may be found in a similar fashion as with the previous two examples. In this case, the distance metric is replaced by a hyperbolic distance in phase space, and the Wigner function is given by a special type of Laguerre function.
\end{proposition}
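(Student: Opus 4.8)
The plan is to run the same three-step procedure that produced the harmonic and $xp$ oscillator Wigner functions. First I would substitute the Bopp shifts $x\mapsto x+\tfrac{i}{2}\overrightarrow{\partial}_{p}$, $p\mapsto p-\tfrac{i}{2}\overrightarrow{\partial}_{x}$ into $\mathcal{H}=\tfrac12(p^{2}-x^{2})$, giving the star-eigenvalue equation $\tfrac12\bigl[(p-\tfrac{i}{2}\overrightarrow{\partial}_{x})^{2}-(x+\tfrac{i}{2}\overrightarrow{\partial}_{p})^{2}\bigr]f=Ef$, and then split into real and imaginary parts. The imaginary part is the transport equation $p\,\partial_{x}f+x\,\partial_{p}f=0$, whose characteristics are the hyperbolae $x^{2}-p^{2}=\mathrm{const}$; hence $f=F(z)$ with $z=2(x^{2}-p^{2})$, the Lorentzian ``distance'' in phase space that here plays the role of $2(x^{2}+p^{2})$ for the SHO and of $xp$ for the $xp$ oscillator. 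Feeding $f=F(z)$ into the real part and applying the chain rule collapses it to a second-order ODE that is the harmonic-oscillator radial equation with the sign of the $z/4$ term reversed --- equivalently, the equation obtained by the analytic continuation $z\mapsto iz$. The trial solution $F(z)=e^{iz/2}L(z)$ then turns this into an associated Laguerre equation whose index is imaginary, $\nu=iE-\tfrac12$, so that $f(x,p)\propto e^{i(x^{2}-p^{2})}L_{iE-1/2}\!\bigl({-}2i(x^{2}-p^{2})\bigr)$: a Laguerre function of imaginary index and argument, which is precisely the ``special Laguerre function'' of the statement and the concrete meaning of the corollary's ``extension to complex space or momentum.''

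Next I would corroborate this closed form against the integral definition $f(x,p)=\int e^{ips}\,\Psi_{m}^{*}(x-\tfrac{s}{2})\,\Psi_{m}(x+\tfrac{s}{2})\,ds$. The eigenfunctions of $-\tfrac12(\partial_{x}^{2}+x^{2})$ are parabolic cylinder (Weber) functions of a $45^{\circ}$-rotated argument --- exactly the $x\mapsto e^{i\pi/4}x$ continuation of the Hermite functions used for the SHO --- so substituting them and invoking the Gradshteyn--Ryzhik product formula for two such functions (the rotated counterpart of the Hermite-product identity 7.377 employed above) should reproduce the same Laguerre-type function, up to the usual overall $\pi$ and phase bookkeeping. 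The overall constant would then be fixed from the total-probability axiom $\int\!\!\int f\,dx\,dp=1$, rewriting the phase-space integral through the foliation by the hyperbolae $z=2(x^{2}-p^{2})$ (the analogue of $\int\!\!\int dx\,dp\to\tfrac12\int dz$ used for the circles of the SHO) together with the appropriate Gradshteyn--Ryzhik Laguerre integral.

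The hard part will be that $\mathcal{H}=\tfrac12(p^{2}-x^{2})$ is unbounded below: its spectrum is the whole real line, its eigenfunctions are not in $L^{2}$, the Laguerre index $\nu=iE-\tfrac12$ is continuous rather than a non-negative integer, the oscillatory factor $e^{i(x^{2}-p^{2})}$ renders the phase-space integrals only conditionally convergent, and --- because hyperbolic ``polar'' coordinates carry an infinite angular volume --- the total-probability normalisation must be read distributionally (delta-function normalised) rather than literally. The genuinely delicate steps are therefore choosing the branch of the confluent hypergeometric pair that yields a real quasidistribution with the correct marginals, and making the hyperbolic change of measure on the non-compact level sets precise; the remainder is the formal mirror image of the two preceding calculations.
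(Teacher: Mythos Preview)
Your star-eigenvalue reduction is essentially the paper's: the imaginary part gives $p\,\partial_x f + x\,\partial_p f=0$, forcing $f=F(p^2-x^2)$, and the real part collapses to a confluent-hypergeometric ODE in the single variable $z=p^2-x^2$. The paper records the solution as Whittaker functions $C_1 M_{iE/2,1/2}(2iz)+C_2 W_{iE/2,1/2}(2iz)$ rather than your Laguerre-with-imaginary-index form, but these are the same object, so up to this point you and the paper agree.

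Where the two routes genuinely diverge is the integral verification. You plan the standard autocorrelation $\int e^{2ipy}\Psi^*(x-y)\Psi(x+y)\,dy$ with parabolic-cylinder eigenfunctions and a ``rotated GR 7.377.'' The paper does something different and not at all obvious: it writes the eigenfunctions as $c_m W_{iE_m/2,1/4}(ix^2)/\sqrt{x}$, argues that because the Hamiltonian is skew-Hermitian the relevant adjoint is $\bar\Psi_m=\Psi_m(x,-E)$ rather than a complex conjugate, and then evaluates an \emph{analytically continued} Wigner integral with an imaginary shift, $\int e^{2ipy}\bar\Psi_m(x-iy)\Psi_m(x+iy)\,dy$. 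Via explicit Whittaker-to-Hermite conversion formulae (handled separately for even and odd parity), this reduces through the ordinary GR 7.377 to an associated Laguerre polynomial of \emph{integer} order, $f_n\propto e^{\zeta}L_{2n}^{(1)}(-2\zeta)$ with $\zeta=p^2-x^2$, which is then matched back to the Whittaker solution using $W_{m+1,1/2}(z)=(-1)^m\Gamma(m)e^{-z/2}zL_m^{(1)}(z)$.

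So the paper sidesteps exactly the continuous-spectrum and conditional-convergence issues you correctly flag as ``the hard part'': by deforming the autocorrelation contour and passing to the Hermite-polynomial representation, it lands on a discrete family with polynomial $L_{2n}^{(1)}$ rather than your $L_{iE-1/2}$ with continuous imaginary index. Your route is internally coherent and would give the continuous-spectrum Wigner functions, but the ``rotated 7.377'' you invoke is not literally available for non-integer parabolic-cylinder indices, so you would still have to do real work there; the paper's trick trades that work for the unconventional imaginary-shift autocorrelation. Finally, the paper explicitly declines to carry out the total-probability normalisation you propose.
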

\begin{proof}

The hyperbolic oscillator equation is defined by the Hamiltonian:
\begin{definition}
    
\begin{equation}
\mathcal{H}=\dfrac{1}{2}(p^{2}-x^{2})
\end{equation}
\end{definition}
We can think of the three systems as being either spherical, pseudospherical,
or hyperspherical. We can see directly here that this system will
not be positive definite. There are deeper connections to differential
geometry here which may be seen as deformations of the Hamiltonian
into a curved space. See e.g. \cite{grosche1988path, grosche1990path} for an outline
of the path integral kernel on the hyperbolic plane/pseudosphere.
Calculating the Wigner function here is much harder, as instead of
unitary, or pseudounitary, we have to deal with a different type of
symmetry relation. Writing out the star-eigenvalue equations in full,
we have:
\begin{theorem}
    
\begin{equation}
\mathcal{H}\star f=(p^{2}-x^{2})f+\dfrac{1}{4}\left(\dfrac{\partial^{2}f}{\partial p^{2}}-\dfrac{\partial^{2}f}{\partial x^{2}}\right)-i\left(p\dfrac{\partial f}{\partial x}+x\dfrac{\partial f}{\partial p}\right)=Ef
\end{equation}
The fundamental differential equation for the eigenstate is given
by:
\begin{equation}
\mathcal{H}\Psi_{m}(x)=E_{m}\Psi_{m}(x)
\end{equation}
\begin{equation}
\mathcal{H}=-\dfrac{1}{2}\left[\dfrac{\partial^{2}}{\partial x^{2}}+x^{2}\right]
\end{equation}
\end{theorem}
Now, the star-eigenvalue equations can be solved to find: 
\begin{lemma}
    
\begin{equation}
f(x,p)=F(p^{2}-x^{2})=F(z)
\end{equation}
\begin{equation}
\dfrac{\partial^{2}F}{\partial z^{2}}+(1-\dfrac{E}{z})F(z)=0
\end{equation}
\begin{equation}
F(z)=C_{1}M_{iE/2,1/2}(2iz)+C_{2}W_{iE/2,1/2}(2iz)
\end{equation}
\end{lemma}
There are two cases which can be distinguished here, which could take
the place of the Hermitian symmetry. The eigenvalue might be a complex
number, in which case: 
\begin{equation}
-\dfrac{i}{2}\left[\dfrac{\partial^{2}\Psi_{m}}{\partial x^{2}}+x^{2}\Psi_{m}\right]=E_{m}\Psi_{m}
\end{equation}
with adjoint 
\begin{equation}
\dfrac{i}{2}\left[\dfrac{\partial^{2}\bar{\Psi}{}_{m}}{\partial x^{2}}+x^{2}\bar{\Psi}_{m}\right]=E_{m}\bar{\Psi}_{m}
\end{equation}
The solutions are easily found: 
\begin{equation}
\Psi_{m}=\dfrac{1}{\sqrt{x}}\left[c_{m}^{+}M_{-E/2,1/4}(ix^{2})+c_{m}^{-}W_{-E/2,1/4}(ix^{2})\right]
\end{equation}
\begin{equation}
\bar{\Psi}_{m}=\dfrac{1}{\sqrt{x}}\left[c_{m}^{+}M_{+E/2,1/4}(ix^{2})+c_{m}^{-}W_{+E/2,1/4}(ix^{2})\right]
\end{equation}
The other alternative is that the eigenvalue is a real number: 
\begin{equation}
\mathcal{H}\Psi_{m}=-\dfrac{1}{2}\left[\dfrac{\partial^{2}\Psi_{m}}{\partial x^{2}}+x^{2}\Psi_{m}\right]=E_{m}\Psi_{m}
\end{equation}
and the adjoint is skew Hermitian. 
\begin{equation}
\bar{\Psi}_{m}\mathcal{H}^{\dagger}=+\dfrac{1}{2}\left[\dfrac{\partial^{2}\bar{\Psi}_{m}}{\partial x^{2}}+x^{2}\bar{\Psi}_{m}\right]=\bar{\Psi}_{m}E_{m}
\end{equation}
The solutions in this case can also be written in terms of Whittaker
functions: 
\begin{lemma}
    
\begin{equation}
\Psi_{m}=\dfrac{1}{\sqrt{x}}\left[c_{m}^{+}M_{iE_{m}/2,1/4}(ix^{2})+c_{m}^{-}W_{iE_{m}/2,1/4}(ix^{2})\right]=\Psi_{m}(x,E)
\end{equation}
\end{lemma}
We have a symmetry relation $\bar{\Psi}_{m}=\Psi_{m}(x,-E)$ for the
adjoint. This represents the correct solution. The correct part of
the solution taking into account boundary conditions is: 
\begin{equation}
\Psi_{m}(x)=\dfrac{c_{m}W_{iE_{m}/2,1/4}(ix^{2})}{\sqrt{x}}
\end{equation}
The adjoint in this case is: 
\begin{equation}
\bar{\Psi}_{m}(x)=\dfrac{c_{m}W_{-iE_{m}/2,1/4}(ix^{2})}{\sqrt{x}}
\end{equation}
Evaluating the Wigner function in this situation seems to be a daunting
task, however we can access the results by using the theory of special
functions and Hermite polynomials. The following known conversion
formulae are available for Whittaker functions, Laguerre functions
and the Hermite polynomials. For the even states, we may write:
\begin{theorem}
    
\[
\dfrac{W_{n+1/4,-1/4}(z)}{\sqrt{z}}=(-1)^{n}\Gamma(n-1/2)z^{-1/4}L_{n-1/2}^{(-1/2)}(z)e^{-z/2}
\]
\begin{equation}
=z^{1/4}2^{-2n}H_{2n}(\sqrt{z})e^{-z/2}
\end{equation}
and the complementary formulae for the odd states are given by:
\[
\dfrac{W_{n+3/4,1/4}(z)}{\sqrt{z}}=(-1)^{n}\Gamma(n)z^{1/4}L_{n}^{(1/2)}(z)
\]
\begin{equation}
=z^{1/4}2^{-2n}H_{2n+1}(\sqrt{z})e^{-z/2}
\end{equation}
\end{theorem}
It is important to realise that this is a physical model, and as such
we must respect the difference in parity between these two sets of
complementary eigenstates which form the basis of the system. The
other key ingredient here is the order switching formula for Laguerre
polynomials: 
\begin{equation}
\dfrac{(-x)^{k}}{k!}L_{n}^{(k-n)}(x)=\dfrac{(-x)^{n}}{n!}L_{n}^{-(k-n)}(x)
\end{equation}
Another valuable formula can be found in Gradshetyn and Ryzik 7.377
\cite{gradshteyn2014table}:
\begin{equation}
\int_{-\infty}^{+\infty}e^{-u^{2}}H_{m}(u+y)H_{n}(u+z)du=2^{n}\sqrt{\pi}\Gamma(m)z^{n-m}L_{m}^{(n-m)}(-2yz)
\end{equation}
We are now in a position to evaluate the Wigner distribution for this
system:
\begin{equation}
f_{m}(x,p)=\int_{-\infty}^{+\infty}e^{2ipy}\bar{\Psi}_{m}(x-iy)\Psi_{m}(x+iy)dy
\end{equation}
where we note that the formula for the Wigner distribution in this
system is not the same due to the imaginary number in the argument
of the eigenstates. This is the analytical continuation of the autocorrelation.
Using the properties of Hermite polynomials, the solution is shown
to be: 
\begin{theorem}
    
\begin{equation}
f_{n}(x,p)=iAe^{\zeta}(-1)^{2n}2^{2n+1}\sqrt{\pi}\Gamma(2n)L_{2n}^{(1)}[-2\zeta]
\end{equation}
\begin{equation}
\zeta=p^{2}-x^{2}
\end{equation}
\end{theorem}
Comparing this with the solution from the star-eigenvalue equation,
we have: 
\begin{equation}
f_{m}(x,p)=CW_{iE_{m}/2,1/2}(2iz)
\end{equation}
There is a way to relate these two seemingly different representations.
The conversion formula gives: 
\begin{equation}
W_{m+1,1/2}(z)=(-1)^{m}\Gamma(m)e^{-z/2}zL_{m}^{(1)}(z)
\end{equation}
Q.E.D.
\end{proof}
\begin{remark}
    
Using the expressions for the Wigner function, it is simple to see
that $f_{n}(x,p)=F_{n}(\zeta)$ and the other form of the Wigner state
is $F_{n/2}(2iz)$. We have not considered the statement of total
probability here.

\end{remark}
\section{Matrix Wigner Function}

We now examine finite matrix groups that have similar properties to
the functions we generated in the previous calculations. Some results
are available, but in general research is only thinly available on
the topic. Recent advances may be found in Seyfarth et. al \cite{seyfarth2020wigner}.
One way in which to generate matrix groups with an equivalent structure
to the previous sets of special functions is to use the creation and
annihilation representation of the coherent state. The matrix calculus
is then determined by the displacement and squeezing of the distribution.
We shall give a brief summary of known results. The aim is to determine
the basic determining equations which can be used to characterise
a Wigner function via the continuous differential operators, the final
part of this paper shall then show that there exist similar types
of relationships for particular special matrices. Briefly, we hope to address the following:
\begin{proposition}
    Is there a matrix equivalent for the Wigner function on SU(1,1)?
\end{proposition}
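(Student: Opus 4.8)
The plan is to answer this affirmatively by transcribing the kernel construction used for the continuous systems above into the operator language of the discrete-series representations of $SU(1,1)$, and then checking that it reproduces the special-function formulae already obtained. First I would fix a positive discrete-series representation of $\mathrm{su}(1,1)$, with generators obeying
\begin{equation}
[\hat{K}_0,\hat{K}_\pm]=\pm\hat{K}_\pm,\qquad[\hat{K}_-,\hat{K}_+]=2\hat{K}_0,
\end{equation}
realised on the Fock-like basis $|k,n\rangle$, $n=0,1,2,\dots$, where $k$ is the Bargmann index fixing the Casimir $\hat{K}_0^2-\tfrac12(\hat{K}_+\hat{K}_-+\hat{K}_-\hat{K}_+)=k(k-1)$. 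This is the natural home for the spectra of the previous section, since the $xp$ and $p^{2}-x^{2}$ Hamiltonians are, up to affine transformations, $\hat{K}_0$ and the boost generators $\tfrac12(\hat{K}_++\hat{K}_-)$, $\tfrac{1}{2i}(\hat{K}_+-\hat{K}_-)$; so any matrix Wigner function on $SU(1,1)$ should specialise back to the Whittaker and Laguerre expressions found there, which is the consistency check I would insist on.

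Next I would introduce the squeezing displacement operator $\hat{D}(\xi)=\exp(\xi\hat{K}_+-\bar\xi\hat{K}_-)$ for $\xi$ in the unit disk, the Perelomov coherent states $|\xi\rangle=\hat{D}(\xi)|k,0\rangle$ (the coherent-state route indicated in \cite{seyfarth2020wigner}), and the invariant measure $d\mu$ on the pseudosphere. The matrix Wigner transform of an operator $\hat{A}$ on the representation space would then be
\begin{equation}
\mathcal{A}_W(\xi)=\mathrm{Tr}\left(\hat{A}\,\hat{\Delta}(\xi)\right),\qquad\hat{\Delta}(\xi)=\hat{D}(\xi)\,\hat{\Delta}(0)\,\hat{D}(\xi)^{-1},
\end{equation}
where $\hat{\Delta}(0)$ is the $SU(1,1)$ Stratonovich--Weyl kernel, the non-compact analogue of the parity operator of the harmonic-oscillator calculation, taken diagonal in the basis $|k,n\rangle$ and weighted so that the Wigner axioms above carry over: reality, $SU(1,1)$-covariance, the traciality identity $\int\mathcal{A}_W\mathcal{B}_W\,d\mu=\mathrm{Tr}(\hat{A}\hat{B})$, and the correct marginals onto the null coordinates $x\pm p$. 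The central computation is then $\langle k,n|\hat{\Delta}(\xi)|k,n\rangle$: the matrix elements of $\hat{D}(\xi)$ in the discrete-series basis are classical, given by ${}_2F_1$'s (Jacobi functions), and I expect that after the trace against $\hat{\Delta}(0)$ they collapse, exactly as in the $xp$ and hyperbolic-oscillator cases, into a Laguerre or Whittaker function of the hyperbolic distance set by $|\xi|$ — which closes the circle and exhibits the matrix equivalent being asked for.

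The main obstacle will be the construction and normalisation of $\hat{\Delta}(0)$ itself. Because the discrete series are infinite-dimensional and $SU(1,1)$ is non-compact, the naive parity operator is unbounded, and the Stratonovich--Weyl axioms no longer single out a unique kernel; one must choose the diagonal weights $w_n$ in $\hat{\Delta}(0)=\sum_n w_n|k,n\rangle\langle k,n|$ so that traciality against $d\mu$ holds, and then verify that the resulting $\mathcal{A}_W$ is finite on a dense domain and that the overcompleteness of the coherent states does not obstruct inversion of the transform. Establishing convergence of the phase-space integrals over the hyperbolic disk, and matching the emergent $\Gamma$-factors and signs against the normalisation constants found earlier, is where the genuine work lies; the remainder is bookkeeping in the representation theory of $SU(1,1)$.
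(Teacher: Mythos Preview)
Your proposal is a reasonable programme and would, if carried through, answer the question affirmatively; but it is a genuinely different route from the one the paper takes. You work in the infinite-dimensional discrete-series representations, build Perelomov coherent states, and try to construct a Stratonovich--Weyl kernel $\hat{\Delta}(0)=\sum_n w_n|k,n\rangle\langle k,n|$ whose diagonal weights are fixed by traciality on the pseudosphere. The paper, by contrast, deliberately avoids the coherent-state machinery and works entirely in the \emph{finite} $2\times2$ fundamental representation of $\mathrm{su}(1,1)$: it writes the parity operator explicitly as $\hat{P}(\Phi)=\exp(i\Phi\hat{k}_0/2)$, fixes $\Phi=\pi$ by imposing the parity-inversion identity $\hat{D}(\xi)\hat{P}\hat{D}^\dagger(\xi)=\hat{D}^2(\xi)\hat{P}$ at the level of $2\times2$ matrices, and then observes that the Wigner operator can be written as a matrix characteristic function $\hat{w}(\xi)=\exp(i\Phi\hat{U}(\xi))$, which it evaluates in closed form.

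What each buys is clear. Your approach keeps contact with the Whittaker and Laguerre special functions of the earlier sections and is the ``physical'' representation relevant to quantum optics, but, as you yourself flag, the non-compactness of $SU(1,1)$ makes $\hat{\Delta}(0)$ unbounded and the normalisation delicate --- that is real analytic work, not bookkeeping. The paper's approach sidesteps all of those convergence and uniqueness issues by dropping to the smallest representation: everything is explicit $2\times2$ algebra, the parity operator is just $\mathrm{diag}(i,-i)$, and the answer is a concrete matrix rather than a formal kernel. The price is that the link back to the continuous Whittaker/Laguerre formulae is not made directly; the paper treats the $2\times2$ model as a prototype for the group structure rather than as a derivation of the earlier special-function results.
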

\begin{proposition}
    What is the correct way in which to model the dynamic evolution of the state space?
\end{proposition}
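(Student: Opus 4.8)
The plan is to carry the continuous machinery of the previous sections over to the group manifold of $SU(1,1)$ by replacing the canonical pair $(\hat{x},\hat{p})$ with the three generators $\hat{K}_{0},\hat{K}_{+},\hat{K}_{-}$ of the Lie algebra of $SU(1,1)$, which obey $[\hat{K}_{0},\hat{K}_{\pm}]=\pm\hat{K}_{\pm}$ and $[\hat{K}_{-},\hat{K}_{+}]=2\hat{K}_{0}$. First I would fix a unitary irreducible representation from the positive discrete series, labelled by the Bargmann index $k$, and realise it on the Fock-type basis $\{|k,n\rangle\}_{n\geq 0}$, in which the generators become explicit (infinite) matrices --- this is the precise sense in which one obtains a \emph{matrix} Wigner function. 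The Casimir $\hat{C}=\hat{K}_{0}^{2}-\tfrac{1}{2}(\hat{K}_{+}\hat{K}_{-}+\hat{K}_{-}\hat{K}_{+})$ should then play the structural role that $\hat{x}^{2}+\hat{p}^{2}$ played for the harmonic oscillator and $\hat{p}^{2}-\hat{x}^{2}$ for the hyperbolic oscillator, so that the resulting Wigner symbol again reduces to a function of a single invariant ``distance'', here the geodesic distance on the pseudosphere.

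Next I would construct the displacement operator adapted to the group, namely the squeezing operator $\hat{S}(\zeta)=\exp(\zeta\hat{K}_{+}-\bar{\zeta}\hat{K}_{-})$ with $\zeta$ ranging over the unit disk, together with the coherent states $|\zeta\rangle=\hat{S}(\zeta)|k,0\rangle$. The matrix Wigner kernel would be the displaced parity $\hat{M}(\zeta)=\hat{S}(\zeta)\,\hat{\Pi}\,\hat{S}(\zeta)^{\dagger}$, with $\hat{\Pi}$ built from $\exp(i\pi(\hat{K}_{0}-k))$ and normalised so that the symbol $W_{\hat{A}}(\zeta)=\mathbf{Tr}(\hat{A}\,\hat{M}(\zeta))$ obeys the tracial axiom $\mathbf{Tr}(\hat{A}\hat{B})=\int W_{\hat{A}}W_{\hat{B}}\,d\mu$, with $d\mu$ the $SU(1,1)$-invariant area on the disk. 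The star product is the one induced by $\hat{M}$, and the star-eigenvalue equation $\hat{C}\star f=\lambda f$, separated into real and imaginary parts exactly as in the oscillator and hyperbolic-oscillator calculations above, should return the radial equation whose solutions are the Whittaker and Laguerre functions already encountered. That identity is the answer to the first question.

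For the dynamics I would begin from the von Neumann equation $i\,\partial_{t}\hat{\rho}=[\hat{H},\hat{\rho}]$ with $\hat{H}$ a real linear combination of $\hat{K}_{0},\hat{K}_{+},\hat{K}_{-}$, apply the symbol map, and use the kernel star product to turn the commutator into a Moyal-type bracket on the disk; expanding in powers of $1/k$ (the curved-space analogue of $\hbar$) the leading term must be the Poisson bracket of the symplectic form $\omega$ restricted to the pseudosphere, so that the evolution is the one-parameter subgroup of $SU(1,1)$ acting by M\"{o}bius transformations on $\zeta$, with the higher orders supplying the quantum corrections. The step I expect to be the genuine obstacle is the star product itself: because $SU(1,1)$ is non-compact the carrier space is infinite-dimensional and the invariant measure is not finite, so the existence of a well-defined parity operator and the convergence of the kernel integrals both need real care --- one must check that the products $\hat{M}(\zeta)\hat{A}$ are trace-class for the operators of interest and that the $1/k$ expansion is only asymptotic, not convergent. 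Fixing the residual ordering ambiguity (Weyl versus Berezin versus Berezin--Toeplitz) so that the resulting evolution equation matches the continuous-limit results obtained above is the delicate point; once an admissible kernel has been singled out, the remaining manipulations parallel the oscillator calculation and close the argument.
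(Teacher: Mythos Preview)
Your proposal is a coherent research programme, but it diverges sharply from the route the paper actually takes, and in a way that matters for what the paper is trying to demonstrate.

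The paper's answer to this ``proposition'' (which is really a heading question rather than a theorem) is deliberately finite-dimensional and concrete. It works in the \emph{two-by-two} non-unitary Pauli realisation of $\mathfrak{su}(1,1)$, writes the Wigner operator as a matrix characteristic function $\hat{w}(\xi)=\exp\bigl(i\Phi\,\hat{U}(\xi)\bigr)$ with $\hat{U}$ an explicit $2\times 2$ matrix, diagonalises the squeeze operator to extract eigenvectors, builds $\hat{\rho}(t)=\hat{W}(t)\hat{\rho}_{w}(0)\hat{W}^{\dagger}(t)$ from those, and then verifies by direct differentiation that this solves the von~Neumann equation $i\,\partial_{t}\hat{\rho}=[\tilde{H},\hat{\rho}]$. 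The whole point, stated explicitly in the paper, is to obtain a matrix theory \emph{independent of the mechanics of coherent states}, so that everything reduces to elementary $2\times 2$ algebra and the parity-inversion identities can be checked by hand.

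You instead go to the infinite-dimensional positive discrete series with Bargmann index $k$, build the Stratonovich--Weyl kernel as a displaced parity on the Perelomov coherent states, and propose to obtain the dynamics as a Moyal bracket on the disk with a $1/k$ semiclassical expansion whose leading term is the M\"{o}bius flow. That is a legitimate and in many respects more principled construction --- it gives a genuine quantisation on the pseudosphere, makes contact with the Casimir and the geodesic radius, and would recover the Whittaker/Laguerre functions as radial star-eigenfunctions. But it buys generality at the cost of exactly the analytic difficulties you flag (trace-class control, non-finite invariant measure, ordering ambiguity), none of which arise in the paper's finite-matrix setting. The paper's approach buys explicit closed formulae and a transparent link to the quantum-brachistochrone time-evolution operator, at the cost of working only in the lowest representation. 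So your proposal is not wrong, but it answers a broader question than the one the paper poses, and it bypasses the specific finite-matrix identities (the parity-inversion theorem in $2\times 2$ form, the characteristic-function expression for $\hat{w}$) that the paper presents as its contribution.
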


\subsection{Displacement Operator}
\begin{proposition}
    The displacement operator defines the phase space version of quantum mechanics.
\end{proposition}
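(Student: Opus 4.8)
The plan is to show that the family of displacement operators $\hat{D}(\alpha) = \exp(\alpha\hat{a}^{\dagger} - \bar{\alpha}\hat{a})$, built from the oscillator ladder operators, furnishes a complete operator basis whose symplectic Fourier dual is exactly the Wigner transform defined above, so that the whole phase-space apparatus --- star product, Moyal bracket, marginals --- is generated by $\hat{D}$. First I would record the algebraic data: unitarity $\hat{D}^{\dagger}(\alpha) = \hat{D}(-\alpha)$, the Weyl composition law $\hat{D}(\alpha)\hat{D}(\beta) = e^{(\alpha\bar{\beta} - \bar{\alpha}\beta)/2}\hat{D}(\alpha+\beta)$, which follows from the Baker--Campbell--Hausdorff identity since $[\hat{a},\hat{a}^{\dagger}] = 1$ is central, and the covariance relation $\hat{D}^{\dagger}(\alpha)\hat{a}\hat{D}(\alpha) = \hat{a}+\alpha$. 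Writing $\alpha = (q+ip)/\sqrt{2\hbar}$ identifies $\hat{D}(\alpha)$ with the Heisenberg translation $\exp[i(p\hat{x}-q\hat{p})/\hbar]$, which immediately reproduces the Galilean-covariance axioms --- the space- and momentum-shift rules --- listed above.

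The central step is the trace-orthogonality relation $\mathbf{Tr}[\hat{D}(\alpha)\hat{D}^{\dagger}(\beta)] = \pi\,\delta^{(2)}(\alpha-\beta)$, which I would establish by evaluating the trace in the number-state or coherent-state basis and recognising a Gaussian representation of the delta function. Completeness then yields the expansion $\hat{A} = \frac{1}{\pi}\int d^{2}\alpha\,\chi_{A}(\alpha)\,\hat{D}(-\alpha)$ with characteristic function $\chi_{A}(\alpha) = \mathbf{Tr}[\hat{A}\,\hat{D}(\alpha)]$, so that the Wigner symbol $\mathcal{A}_{W}$ emerges as the symplectic Fourier transform of $\chi_{A}$. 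Equivalently --- and this is the form I would emphasise --- one derives the Grossmann--Royer identity $f_{W}(x,p) = \frac{1}{\pi\hbar}\,\mathbf{Tr}[\hat{\varrho}\,\hat{D}(\alpha)\hat{\Pi}\hat{D}^{\dagger}(\alpha)]$, with $\hat{\Pi}$ the parity operator; expanding $\hat{D}(\alpha)\hat{\Pi}\hat{D}^{\dagger}(\alpha)$ in the position basis and carrying out the resulting Gaussian integral reproduces precisely the autocorrelation integral that defines $\mathcal{A}_{W}(p,q)$. The star product of the Moyal subsection then follows because the cocycle phase $e^{(\alpha\bar{\beta}-\bar{\alpha}\beta)/2}$ in the composition law has the symplectic kernel $\hat{\Lambda}$ as its infinitesimal generator.

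The main obstacle I expect is not a single calculation but pinning down the normalisation and convergence in the completeness relation: the delta function $\delta^{(2)}(\alpha-\beta)$ is distributional, so the operator expansion must be read weakly --- trace-class operators paired against Hilbert--Schmidt ones --- and one must verify that the Fourier inversion carrying $\chi_{A}$ to $\mathcal{A}_{W}$ is consistent with the factor $1/(2\pi\hbar)$ appearing in the trace formula for expectation values. A secondary point is checking that $\hat{D}(\alpha)\hat{\Pi}\hat{D}^{\dagger}(\alpha)$ really is the Weyl quantisation of the point evaluation at $(x,p)$, i.e. that reflection through the phase-space point $\alpha$ is the correct displaced-parity operator; this amounts to confirming $\langle x'|\hat{D}(\alpha)\hat{\Pi}\hat{D}^{\dagger}(\alpha)|x\rangle \propto \delta(x+x'-2q)\,e^{ip(x-x')/\hbar}$. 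Once those identifications are secured, the remaining axioms --- reality from $\hat{\Pi}^{\dagger} = \hat{\Pi}$ together with $\hat{D}^{\dagger}(\alpha) = \hat{D}(-\alpha)$, the marginals from a single Gaussian integration collapsing, normalisation from $\mathbf{Tr}\,\hat{\varrho} = 1$ --- are routine.
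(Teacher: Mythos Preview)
Your proposal is correct and follows essentially the same route as the paper: both establish the Weyl composition/braiding law for $\hat{D}(\alpha)$ via BCH, identify $\hat{D}$ with the Heisenberg translation $\exp[i(p\hat{x}-x\hat{p})]$, and then build the Wigner operator as the displaced parity $\hat{D}(\alpha)\hat{\Pi}\hat{D}^{\dagger}(\alpha)$ (equivalently, the symplectic Fourier transform of $\hat{D}$), from which the action on wavefunctions and the phase-space axioms are read off. Your treatment is in fact more careful than the paper's on the distributional/trace-class issues in the completeness relation and on tracking the $1/(2\pi\hbar)$ normalisation, points the paper handles by citation rather than derivation.
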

\begin{corollary}
    The Stone-von Neumann theorem emerges as a result of braiding relationships.
\end{corollary}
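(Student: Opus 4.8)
\emph{Proof strategy.} The plan is to read the Stone--von Neumann theorem off the multiplicative structure of the displacement operators rather than off the canonical commutation relations directly. Write $\hat D(\alpha)=\exp(\alpha\hat a^{\dagger}-\bar\alpha\hat a)$ for complex $\alpha$, or equivalently $\hat D(x,p)=\exp\!\big(\tfrac{i}{\hbar}(p\hat x-x\hat p)\big)$ in real phase-space coordinates. The first step is to establish the braiding (Weyl) relation
\begin{equation}
\hat D(\xi_{1})\hat D(\xi_{2})=e^{\frac{i}{2\hbar}\sigma(\xi_{1},\xi_{2})}\hat D(\xi_{1}+\xi_{2}),\qquad \sigma(\xi_{1},\xi_{2})=p_{1}x_{2}-x_{1}p_{2},
\end{equation}
directly from the Baker--Campbell--Hausdorff formula, using that $[\hat x,\hat p]$ is central so that the series truncates at second order. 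I would then observe that the phase $e^{\frac{i}{2\hbar}\sigma}$ is a group $2$-cocycle on the abelian translation group of the $(x,p)$ plane: associativity of the triple product $\hat D(\xi_{1})\hat D(\xi_{2})\hat D(\xi_{3})$ is exactly the cocycle identity, so the $\hat D(\xi)$ furnish a projective --- that is, braided --- unitary representation of the translations, equivalently an honest unitary representation of the Heisenberg group obtained as the central extension determined by this cocycle.

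The core of the argument is the rigidity statement: any two irreducible unitary families $\hat D,\hat D'$ acting on $\mathcal H,\mathcal H'$ and obeying the \emph{same} braiding relation are unitarily equivalent. The steps I would carry out are: (i) irreducibility --- if a bounded $\hat B$ commutes with every $\hat D(\xi)$, then because the displacement operators are complete in the Hilbert--Schmidt sense, with the orthogonality relation $\mathrm{Tr}\big(\hat D(\xi)^{\dagger}\hat D(\xi')\big)=2\pi\hbar\,\delta^{2}(\xi-\xi')$ that already underpins the invertibility of the Wigner transform, $\hat B$ must be a scalar; (ii) construction of an intertwiner by group averaging, $\hat T=\dfrac{1}{2\pi\hbar}\int\hat D'(\xi)\,\hat P\,\hat D(\xi)^{\dagger}\,d\xi$, where $\hat P$ is a fixed rank-one projector; the shared cocycle makes the compensating phases cancel so that $\hat D'(\eta)\hat T=\hat T\hat D(\eta)$ for every $\eta$; (iii) an application of Schur's lemma --- $\hat T^{\dagger}\hat T$ commutes with the irreducible family $\hat D$ and $\hat T\hat T^{\dagger}$ with $\hat D'$, so each is a positive scalar, whence a suitable normalisation of $\hat T$ is the required unitary. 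Specialising $\mathcal H'=L^{2}$ with the Schr\"odinger representation recovers the textbook form of the theorem.

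The step I expect to be the main obstacle is (ii): making analytic sense of the averaging integral and proving $\hat T\ne 0$. Weak convergence and boundedness follow once the matrix element $\langle\psi'|\hat D'(\xi)\hat P\hat D(\xi)^{\dagger}\psi\rangle$ is shown to be integrable in $\xi$, which is the square-integrability of the coherent-state (displacement) matrix elements; but non-vanishing requires the resolution of identity $\dfrac{1}{2\pi\hbar}\int\hat D(\xi)\,\hat P\,\hat D(\xi)^{\dagger}\,d\xi=\hat{\mathbf 1}$, which is precisely the completeness relation for displacement operators inherent in the Wigner formalism set out above. Once that identity is in hand, $\mathrm{Tr}\,\hat T\ne 0$ is immediate, and everything else is Schur's lemma; the braiding relation is doing all the structural work, with the analysis reduced to the same phase-space orthogonality that makes the Wigner transform a bijection.
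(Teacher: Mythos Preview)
Your proposal is a correct and genuinely rigorous route to Stone--von Neumann, but it goes well beyond what the paper actually does. The paper's ``proof'' of this corollary is not a proof of Stone--von Neumann at all: it simply writes down the displacement operator $\hat D(\alpha)=\exp(\alpha\hat a^{\dagger}-\bar\alpha\hat a)$, derives the composition law $\hat D(\alpha)\hat D(\alpha')=e^{\frac{1}{2}(\alpha\bar\alpha'-\bar\alpha\alpha')}\hat D(\alpha+\alpha')$ and the braiding relation $\hat D(\alpha)\hat D(\beta)=e^{(\alpha\bar\beta-\beta\bar\alpha)}\hat D(\beta)\hat D(\alpha)$ via BCH, records the normal-ordered form $\hat D(\alpha)=e^{-|\alpha|^{2}/2}e^{\alpha\hat a^{\dagger}}e^{-\bar\alpha\hat a}$, and then moves on. The corollary is asserted rather than proved: the paper treats the existence of the Weyl cocycle as the content of the statement and does not attempt the uniqueness-up-to-unitary-equivalence part.

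Your approach --- identifying the phase as a group $2$-cocycle, passing to the Heisenberg central extension, and then running the standard Schur-plus-intertwiner argument with $\hat T=\frac{1}{2\pi\hbar}\int\hat D'(\xi)\hat P\hat D(\xi)^{\dagger}d\xi$ --- is the honest proof and is mathematically sound; the square-integrability and resolution-of-identity ingredients you flag are indeed the only places requiring care. What you gain is an actual theorem; what the paper gains is brevity, since for its purposes only the composition and braiding \emph{formulae} are used downstream, not the rigidity statement. If you are matching the paper, you could stop after establishing the braiding relation; if you want a self-contained argument, your outline is the right one.
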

\begin{proposition}
    The group representation theory of phase space quantum mechanics can be generated through the application of the displacement operator.
\end{proposition}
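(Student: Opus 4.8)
The plan is to build the entire phase-space calculus from the single generating object $\hat{D}(\alpha)=\exp(\alpha\hat{a}^{\dagger}-\bar{\alpha}\hat{a})$, where $\hat{a},\hat{a}^{\dagger}$ are the ladder operators attached to the oscillator Hamiltonian of the earlier sections. First I would apply the Baker--Campbell--Hausdorff identity together with $[\hat{a},\hat{a}^{\dagger}]=1$ to put $\hat{D}(\alpha)$ into its normal-ordered form $e^{-|\alpha|^{2}/2}e^{\alpha\hat{a}^{\dagger}}e^{-\bar{\alpha}\hat{a}}$, from which one reads off the twisted composition law
\begin{equation}
\hat{D}(\alpha)\hat{D}(\beta)=e^{(\alpha\bar{\beta}-\bar{\alpha}\beta)/2}\,\hat{D}(\alpha+\beta).
\end{equation}
This is the structural fact the whole argument rests on: $\alpha\mapsto\hat{D}(\alpha)$ is a \emph{projective} unitary representation of the additive translation group of phase space, whose cocycle $e^{(\alpha\bar{\beta}-\bar{\alpha}\beta)/2}$ is exactly the exponentiated symplectic form $\omega=dp_{\mu}\wedge dq^{\mu}$ written earlier. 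Adjoining the central phase lifts this to a genuine representation of the Heisenberg--Weyl group, and the displacement operators realise its Schrödinger representation; since the cocycle is nontrivial, the Stone--von Neumann theorem (invoked in the corollary above) applies and pins down this representation, and the associated Wigner map, up to unitary equivalence.

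Next I would show that every operator of interest is manufactured from this one family. From the composition law and $\mathrm{Tr}[\hat{D}(\alpha)]=\pi\,\delta^{(2)}(\alpha)$ one obtains the orthogonality relation $\mathrm{Tr}[\hat{D}(\alpha)\hat{D}^{\dagger}(\beta)]=\pi\,\delta^{(2)}(\alpha-\beta)$, hence the expansion $\hat{A}=\tfrac{1}{\pi}\int d^{2}\alpha\,\chi_{A}(\alpha)\,\hat{D}(-\alpha)$ with characteristic function $\chi_{A}(\alpha)=\mathrm{Tr}[\hat{A}\hat{D}(\alpha)]$. The Wigner transform $\mathcal{A}_{W}$ of the opening definition is then nothing but the symplectic Fourier transform of $\chi_{A}$; substituting two such expansions into the product $\hat{A}\hat{B}$ and collapsing the integrals with the twisted composition law reproduces precisely the associative star product and the Moyal bracket of the earlier section. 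Finally, rewriting $\hat{D}(\alpha)\hat{\Pi}\hat{D}^{\dagger}(\alpha)$ with $\hat{\Pi}$ the parity operator gives the Grossmann--Royer form $\mathcal{A}_{W}(\alpha)\propto\mathrm{Tr}[\hat{A}\,\hat{D}(\alpha)\hat{\Pi}\hat{D}^{\dagger}(\alpha)]$, which displays the Wigner function as a displaced-parity expectation and recovers the translation and parity covariance axioms listed previously --- so the group representation theory and the phase-space formalism are generated together by $\hat{D}$.

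The main obstacle I anticipate is not any isolated computation but the careful bookkeeping of the central extension and the passage from the operator algebra to the $\star$-product. One must verify that the cocycle genuinely fails to be a coboundary --- which follows from the non-degeneracy of $\omega$ --- so that the representation is truly projective and Stone--von Neumann is the right uniqueness statement rather than an over-claim; and one must control the interchange of the $d^{2}\alpha$ integrations with operator multiplication when deriving the Moyal product, an identity that is asymptotic/distributional rather than absolutely convergent. I would handle the latter by first working on a dense domain of Schwartz-class Weyl symbols, where the $\star$-series terminates or converges, and only then extending by continuity to the Hilbert--Schmidt operators; the former reduces to the elementary remark that $\mathrm{Im}(\alpha\bar{\beta})$ is an indecomposable real-valued $2$-cocycle on a symplectic vector space.
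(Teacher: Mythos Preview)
Your proposal is correct and follows the same skeleton as the paper: define $\hat{D}(\alpha)$, apply BCH to obtain the normal-ordered factorisation $e^{-|\alpha|^{2}/2}e^{\alpha\hat{a}^{\dagger}}e^{-\bar{\alpha}\hat{a}}$, read off the twisted composition/braiding law, and then pass to the displaced-parity form of the Wigner operator. The paper's proof is essentially a catalogue of these same identities (composition, braiding, BCH decomposition, the $\hat{D}\hat{\Pi}\hat{D}^{\dagger}$ relation, completeness, and the action on $\psi$), quoted largely from the quantum-optics reference, without much connective tissue.

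Where you diverge is in scope rather than method. You go further than the paper by (i) explicitly identifying the cocycle $e^{(\alpha\bar{\beta}-\bar{\alpha}\beta)/2}$ with the symplectic form and framing the displacement family as a projective representation whose central extension is the Heisenberg--Weyl group, thereby actually justifying the Stone--von Neumann corollary rather than merely asserting it; and (ii) deriving the $\star$-product and Moyal bracket \emph{from} the displacement operators via the characteristic-function expansion $\hat{A}=\pi^{-1}\int\chi_{A}(\alpha)\hat{D}(-\alpha)\,d^{2}\alpha$. The paper does not attempt (ii) at all --- it introduces the Moyal product in an earlier, independent section and never closes the loop by showing it arises from the twisted convolution of Weyl symbols --- so your argument is strictly more complete as a proof that the displacement operator \emph{generates} the phase-space calculus. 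Your caution about the distributional nature of the $\star$-series and the need to work first on Schwartz-class symbols is also a level of care the paper omits entirely.
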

\begin{proof}

In the hyperbolic plane, one important operator is that of displacement.
This operator is a Weyl transform in the creation and annihilation
operators as shown in \cite{potovcek2015exponential}: 
\begin{definition}
    
\begin{equation}
\hat{D}(\alpha)=\exp\left(\alpha\hat{a}^{\dagger}-\alpha^{*}\hat{a}\right)
\end{equation}
\end{definition}
The composition formula is:
\begin{lemma}
    
\begin{equation}
\hat{D}(\alpha)\hat{D}(\alpha')=\exp\left(\dfrac{1}{2}(\alpha\alpha'^{*}-\alpha^{*}\alpha')\right)\hat{D}(\alpha+\alpha')
\end{equation}
\end{lemma}
In terms of the position and momentum coordinates it is simple to
show such laws as: 
\begin{lemma}
    
\begin{equation}
\hat{D}(x,p)\hat{D}(x',p')=\exp\left(i(px'-xp')/2\right)\hat{D}(x+x',p+p')
\end{equation}
\end{lemma}
Other identities include the braiding relationship: 
\begin{theorem}
    
\begin{equation}
\hat{D}(\alpha)\hat{D}(\beta)=e^{(\alpha\beta^{*}-\beta\alpha^{*})/2}\hat{D}(\beta)\hat{D}(\alpha)
\end{equation}
\end{theorem}
which are easily derived using some elementary applications of the
commutation rules for creation and annihilation operators. We assume
as always the standard boson algebra: 
\begin{definition}
    
\begin{equation}
\hat{a}=\dfrac{1}{\sqrt{2}}\left(\hat{x}+i\hat{p}\right),\hat{a}^{\dagger}=\dfrac{1}{\sqrt{2}}\left(\hat{x}-i\hat{p}\right)
\end{equation}
\begin{equation}
\left[\hat{a},\hat{a}^{\dagger}\right]=1,\left[\hat{x},\hat{p}\right]=i1
\end{equation}
\end{definition}
Using the BCH formula it is possible to develop a number of identities
including: 
\begin{theorem}
    
\begin{equation}
\hat{D}(\alpha)=e^{-|\alpha|^{2}/2}e^{\alpha\hat{a}^{\dagger}}e^{-\alpha^{*}\hat{a}}
\end{equation}
\end{theorem}
We know from Weyl's transformation law that there is a representation
of the Wigner function that corresponds to this algebra. The authors
in \cite{potovcek2015exponential} showed that one way to realise it is to use
the expression: 
\begin{equation}
\hat{\Delta}(x,p)=\dfrac{1}{(2\pi)^{2}}\int_{-\infty}^{+\infty}\int_{-\infty}^{+\infty}e^{i(px'-xp')}\hat{D}(x,p)dx'dp'
\end{equation}
They also proved the similarity type relation: 
\begin{equation}
\hat{D}(x,p)\hat{\Pi}\hat{D}^{\dagger}(x,p)=\hat{D}(x,p)\hat{D}(x,p)\hat{\Pi}
\end{equation}
\begin{equation}
=\hat{D}(2x,2p)\hat{\Pi}=\hat{\Pi}\hat{D}^{\dagger}(2x,2p)
\end{equation}
where the operator $\hat{\Pi}$ is the parity operator which acts
on the state via 
\begin{equation}
\pi\hat{\Delta}(x,p)=\hat{D}(x,p)\hat{\Pi}\hat{D}^{\dagger}(x,p)
\end{equation}
Some relevant formula are the completeness relationship: 
\begin{equation}
\int_{-\infty}^{+\infty}\int_{-\infty}^{+\infty}\hat{D}(x,p)dx'dp'=\mathbf{1}
\end{equation}
Initial conditions are specified through $\pi\hat{\Delta}(0,0)=\hat{\Pi}$.
The action on the wave function is: 
\begin{theorem}
    
\begin{equation}
[\hat{\Delta}(x,p)\psi](\chi)=\dfrac{1}{\pi}e^{2ip(\chi-x)}\psi(2x-\chi)
\end{equation}
\end{theorem}
Q.E.D.
\end{proof}
These types of formulae are sufficient to determine the Wigner function
on the hyperboloid. We seek a finite matrix version of this above
structure, especially the relationship between the parity and displacement
operator.

\subsection{Squeeze Transform}

The other related transform on this space is given by a squeeze operator,
which can be seen as the Wigner-Weyl transform of a parametric down
conversion.
\begin{proposition}
    The squeeze operator performs a similar role to the displacement operator. Braiding relationships and isomorphisms show that this is intimately related to the parity operator in the phase space. This operator plays a key role in determining the structure of the algebra associated to this representation of phase space quantum mechanics.
\end{proposition}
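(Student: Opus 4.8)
The plan is to treat the squeeze operator exactly as the displacement operator was treated in the previous subsection: fix a concrete representation on Fock space in terms of the boson ladder operators, extract its Lie-algebraic skeleton, and then read off the composition, braiding and parity relations from disentangling identities. First I would set
\begin{equation}
\hat{S}(\xi)=\exp\left(\tfrac{1}{2}\xi^{*}\hat{a}^{2}-\tfrac{1}{2}\xi\,\hat{a}^{\dagger 2}\right),\qquad\xi=re^{i\theta},
\end{equation}
and introduce the generators $\hat{K}_{+}=\tfrac{1}{2}\hat{a}^{\dagger 2}$, $\hat{K}_{-}=\tfrac{1}{2}\hat{a}^{2}$, $\hat{K}_{0}=\tfrac{1}{4}(\hat{a}\hat{a}^{\dagger}+\hat{a}^{\dagger}\hat{a})$. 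Using only the boson algebra $[\hat{a},\hat{a}^{\dagger}]=1$ recalled in the previous subsection, a one-line commutator computation gives
\begin{equation}
[\hat{K}_{0},\hat{K}_{\pm}]=\pm\hat{K}_{\pm},\qquad[\hat{K}_{-},\hat{K}_{+}]=2\hat{K}_{0},
\end{equation}
which is the $su(1,1)$ bracket. This already establishes the first clause of the proposition: the squeeze operator exponentiates $su(1,1)$ on the same Hilbert space on which the displacement operator exponentiates the Heisenberg algebra, and the two together generate the metaplectic structure underlying the phase-space picture.

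Next I would derive the normal-ordered form, the $su(1,1)$ analogue of the BCH identity $\hat{D}(\alpha)=e^{-|\alpha|^{2}/2}e^{\alpha\hat{a}^{\dagger}}e^{-\alpha^{*}\hat{a}}$ established earlier, namely
\begin{equation}
\hat{S}(\xi)=\exp\left(-\tfrac{1}{2}e^{i\theta}\tanh r\,\hat{a}^{\dagger 2}\right)\left(\cosh r\right)^{-(\hat{a}^{\dagger}\hat{a}+1/2)}\exp\left(\tfrac{1}{2}e^{-i\theta}\tanh r\,\hat{a}^{2}\right),
\end{equation}
by passing to the faithful $2\times2$ representation of $\hat{K}_{0},\hat{K}_{\pm}$, exponentiating there, and factoring the resulting unimodular matrix into lower-triangular, diagonal and upper-triangular pieces. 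The group law then takes the form $\hat{S}(\xi_{1})\hat{S}(\xi_{2})=\hat{S}(\xi_{3})\,e^{i\varphi\hat{K}_{0}}$, from which the composition and braiding identities follow by the same manipulations used for $\hat{D}$. The crucial point for the parity clause is that the compact generator of the squeezing group is $\hat{K}_{0}=\tfrac{1}{2}(\hat{a}^{\dagger}\hat{a}+\tfrac{1}{2})$, so its one-parameter subgroup $e^{i\sigma\hat{K}_{0}}$ is, up to a scalar phase, the oscillator Fourier-rotation subgroup $e^{i\sigma\hat{a}^{\dagger}\hat{a}}$, which passes through the parity operator $\hat{\Pi}=(-1)^{\hat{a}^{\dagger}\hat{a}}$ entering the kernel $\hat{\Delta}(x,p)$. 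Conjugating $\hat{\Pi}$ by a general $\hat{S}(\xi)$ therefore stays within the $su(1,1)$ orbit, which is the precise sense in which the squeeze operator is intimately related to the parity operator and closes the algebra used to build the Wigner kernel.

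The main obstacle is twofold, and both parts come from the non-compactness of $\mathrm{SU}(1,1)$. First, the disentangling and composition formulas are genuinely different from the $\mathrm{SU}(2)$ or Heisenberg cases: hyperbolic functions $\tanh r,\cosh r$ replace trigonometric ones, not all one-parameter subgroups are periodic, and one must pass to the universal cover and then descend to the metaplectic double cover $\mathrm{Mp}(2)$ so that the half-integer spectrum of $\hat{K}_{0}$ is consistent with the compact subgroup passing through $\hat{\Pi}$; the cocycle $\varphi$ in the group law has to be tracked carefully through this covering. Second, to make the link with $\hat{\Delta}(x,p)$ precise one must show that the family of kernels obtained by squeezing the ground-state projector $|0\rangle\langle 0|$ is exactly the $su(1,1)$ orbit through $\hat{\Pi}$, which is careful metaplectic bookkeeping rather than a new idea. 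Everything else --- the commutators, the $2\times2$ factorisations, the braiding identity --- is routine once the representation and the covering conventions are fixed.
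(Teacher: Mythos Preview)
Your proposal is correct, but the route differs from the paper's in two substantive ways.

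First, the braiding: you work entirely inside $SU(1,1)$, deriving the squeeze-squeeze composition $\hat{S}(\xi_{1})\hat{S}(\xi_{2})=\hat{S}(\xi_{3})e^{i\varphi\hat{K}_{0}}$ from the $2\times2$ Gauss/LDU factorisation. The paper instead establishes the \emph{cross} braiding between displacement and squeeze, $\hat{D}(\alpha)\hat{S}(z)=\hat{S}(z)\hat{D}(\gamma)$ with $\gamma=\alpha\cosh r+\alpha^{*}e^{i\theta}\sinh r$, and reads it off the Bogoliubov transformation $\hat{S}^{\dagger}\hat{a}\hat{S}=\hat{a}\cosh r-\hat{a}^{\dagger}e^{i\theta}\sinh r$ acting as a $2\times2$ symplectic matrix on $(\hat{x},\hat{p})^{T}$. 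Your $su(1,1)$ generators and normal-ordered disentangling are deferred in the paper to the following section on pseudounitary matrices, not used here.

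Second, the parity link: you locate $\hat{\Pi}=(-1)^{\hat{a}^{\dagger}\hat{a}}$ as a point on the compact one-parameter subgroup $e^{i\sigma\hat{K}_{0}}$, so that conjugation by $\hat{S}(\xi)$ keeps it inside the $SU(1,1)$ orbit. The paper does not do this; it reaches parity by recalling the quantum-optical identity $\hat{w}(0)=\int_{\mathbb{C}}\hat{D}(\beta)\,d\beta=2\hat{P}$ and the displacement covariance $\hat{w}(\alpha)=\hat{D}(\alpha)\hat{w}(0)\hat{D}^{\dagger}(\alpha)$, so parity enters through the Wigner kernel rather than through the squeeze group itself.

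What each buys: your argument is algebraically cleaner, exhibits the $su(1,1)$ structure at the outset, and your metaplectic-cover remark is the honest bookkeeping needed to make the half-integer spectrum of $\hat{K}_{0}$ compatible with parity --- a subtlety the paper glosses over. The paper's version stays closer to the displacement-operator machinery of the preceding subsection and to the quantum-optics literature it cites, and it makes the connection to the Wigner kernel $\hat{\Delta}(x,p)$ more explicit by rewriting $\hat{S}$ as a Weyl-type exponential $\exp\bigl(\tfrac{i}{2}(p\hat{\mathcal{P}}-x\hat{\mathcal{X}})\bigr)$ in the quadratic phase-space observables. Neither approach is wrong; yours is more structural, the paper's more operational.
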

\begin{proposition}
    It is possible to generate the SU(1,1) algebra by using relationships between creation and annihilation operators.
\end{proposition}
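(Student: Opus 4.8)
The plan is to exhibit an explicit realisation of the three generators of the SU(1,1) Lie algebra as Weyl-ordered quadratics in $\hat{a}$ and $\hat{a}^{\dagger}$ and then verify the structure relations directly from the boson commutator $[\hat{a},\hat{a}^{\dagger}]=1$ recorded above. Concretely I would set $\hat{K}_{+}=\tfrac{1}{2}(\hat{a}^{\dagger})^{2}$, $\hat{K}_{-}=\tfrac{1}{2}\hat{a}^{2}$ and $\hat{K}_{0}=\tfrac{1}{2}\big(\hat{a}^{\dagger}\hat{a}+\tfrac{1}{2}\big)$, so that $\hat{K}_{+}^{\dagger}=\hat{K}_{-}$ and $\hat{K}_{0}^{\dagger}=\hat{K}_{0}$. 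Re-expressing these through $\hat{x}=\tfrac{1}{\sqrt{2}}(\hat{a}+\hat{a}^{\dagger})$ and $\hat{p}=\tfrac{1}{i\sqrt{2}}(\hat{a}-\hat{a}^{\dagger})$ one finds $\hat{K}_{+}+\hat{K}_{-}=\tfrac{1}{2}(\hat{x}^{2}-\hat{p}^{2})$, $\hat{K}_{+}-\hat{K}_{-}=-\tfrac{i}{2}(\hat{x}\hat{p}+\hat{p}\hat{x})$ and $\hat{K}_{0}=\tfrac{1}{4}(\hat{x}^{2}+\hat{p}^{2})$; that is, the three Hamiltonians treated earlier in this paper --- the harmonic oscillator, the $xp$ oscillator and the hyperbolic oscillator --- are, up to constants, precisely a basis of this algebra, which is the structural remark I am really after.

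The main step is the commutator computation. Using $[\hat{a}^{\dagger}\hat{a},(\hat{a}^{\dagger})^{2}]=\hat{a}^{\dagger}[\hat{a},(\hat{a}^{\dagger})^{2}]=2(\hat{a}^{\dagger})^{2}$ together with its adjoint $[\hat{a}^{\dagger}\hat{a},\hat{a}^{2}]=-2\hat{a}^{2}$ yields at once $[\hat{K}_{0},\hat{K}_{\pm}]=\pm\hat{K}_{\pm}$. For the remaining bracket I would expand $[\hat{a}^{2},(\hat{a}^{\dagger})^{2}]=\hat{a}[\hat{a},(\hat{a}^{\dagger})^{2}]+[\hat{a},(\hat{a}^{\dagger})^{2}]\hat{a}=4\hat{a}^{\dagger}\hat{a}+2$, so that $[\hat{K}_{-},\hat{K}_{+}]=2\hat{K}_{0}$. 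These are exactly the defining relations of the SU(1,1) algebra, and the Jacobi identity is automatic since we are inside an associative operator algebra.

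It is worth isolating the point that distinguishes this algebra from $\mathrm{su}(2)$: with $\hat{K}_{\pm}$ mutually adjoint and $\hat{K}_{0}$ self-adjoint, the relation $[\hat{K}_{-},\hat{K}_{+}]=+2\hat{K}_{0}$ carries the opposite sign to the $\mathrm{su}(2)$ case $[\hat{J}_{-},\hat{J}_{+}]=-2\hat{J}_{0}$, which is precisely the Killing-form signature responsible for the non-compactness. I would then compute the Casimir $\hat{\mathcal{C}}=\hat{K}_{0}^{2}-\tfrac{1}{2}(\hat{K}_{+}\hat{K}_{-}+\hat{K}_{-}\hat{K}_{+})$, using $(\hat{a}^{\dagger})^{2}\hat{a}^{2}=\hat{n}(\hat{n}-1)$ and $\hat{a}^{2}(\hat{a}^{\dagger})^{2}=(\hat{n}+1)(\hat{n}+2)$, and find the fixed value $-\tfrac{3}{16}$ on Fock space; since $k(k-1)=-\tfrac{3}{16}$ has roots $k=\tfrac{1}{4}$ and $k=\tfrac{3}{4}$, the representation splits into the even and odd number sectors with Bargmann indices $\tfrac14$ and $\tfrac34$ --- the same parity dichotomy that appeared in the hyperbolic-oscillator eigenstates. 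Finally I would observe that the squeeze operator of the next section, $\hat{S}(\xi)=\exp\!\big(\tfrac{1}{2}(\xi^{*}\hat{a}^{2}-\xi(\hat{a}^{\dagger})^{2})\big)=\exp(\xi^{*}\hat{K}_{-}-\xi\hat{K}_{+})$, is then manifestly a one-parameter subgroup generated inside this algebra, which closes the link back to the displacement and parity picture.

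As for difficulty, there is no genuine analytic obstacle: the content is an identification rather than an estimate, and every commutator drops out mechanically from $[\hat{a},\hat{a}^{\dagger}]=1$. The only places needing care are the bookkeeping of ordering constants (the $\tfrac12$ in $\hat{K}_{0}$, the factors of $2$ coming from $[\hat{a},(\hat{a}^{\dagger})^{2}]=2\hat{a}^{\dagger}$) and, if one wants the group statement rather than the algebra statement, the fact that exponentiating these unbounded generators on Fock space produces the metaplectic cover, which descends to a genuine SU(1,1) action only after the appropriate two-to-one identification --- a subtlety I would flag but not belabour.
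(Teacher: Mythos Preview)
Your proposal is correct and follows essentially the same route as the paper: define $\hat{K}_{\pm}=\tfrac{1}{2}(\hat{a}^{\dagger})^{2},\tfrac{1}{2}\hat{a}^{2}$ and verify the SU(1,1) commutation relations directly from $[\hat{a},\hat{a}^{\dagger}]=1$. The only discrepancy is a harmless normalisation --- the paper takes $\hat{K}_{0}=\hat{a}^{\dagger}\hat{a}+\tfrac{1}{2}$ (giving $[\hat{K}_{-},\hat{K}_{+}]=\hat{K}_{0}$ and $[\hat{K}_{0},\hat{K}_{\pm}]=\pm2\hat{K}_{\pm}$) rather than your half of that --- and your extras on the Casimir, the Bargmann indices $\tfrac{1}{4},\tfrac{3}{4}$, the identification of the three Hamiltonians with a basis of the algebra, and the metaplectic-cover caveat go well beyond what the paper actually records.
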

\begin{corollary}
    It is possible to define the basic relationships we desire for a matrix Wigner function through the laws of the parity operator.
\end{corollary}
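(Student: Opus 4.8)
The plan is to obtain the required relations from the bilinear realisation of the $SU(1,1)$ Lie algebra sitting inside the boson algebra already fixed above, and then to establish, for the squeeze operator, the parity identities that play the role of the displacement--parity identities of the previous subsection. First I would set $\hat{K}_{+}=\tfrac{1}{2}(\hat{a}^{\dagger})^{2}$, $\hat{K}_{-}=\tfrac{1}{2}\hat{a}^{2}$, $\hat{K}_{0}=\tfrac{1}{2}(\hat{a}^{\dagger}\hat{a}+\tfrac{1}{2})$, and verify from $[\hat{a},\hat{a}^{\dagger}]=1$ that $[\hat{K}_{0},\hat{K}_{\pm}]=\pm\hat{K}_{\pm}$ and $[\hat{K}_{-},\hat{K}_{+}]=2\hat{K}_{0}$, which is exactly the $SU(1,1)$ bracket and establishes the preceding proposition on generating the algebra from creation and annihilation operators. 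The squeeze operator $\hat{S}(\xi)=\exp(\xi\hat{K}_{+}-\xi^{*}\hat{K}_{-})$, $\xi=re^{i\theta}$, is then a group element, and a disentangling argument of exactly the same flavour as the identity $\hat{D}(\alpha)=e^{-|\alpha|^{2}/2}e^{\alpha\hat{a}^{\dagger}}e^{-\alpha^{*}\hat{a}}$ established above yields its normal-ordered form together with the induced linear (Bogoliubov) action mixing $\hat{a}$ and $\hat{a}^{\dagger}$ by the hyperbolic rotation $\begin{pmatrix}\cosh r & e^{i\theta}\sinh r\\ e^{-i\theta}\sinh r & \cosh r\end{pmatrix}$.

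Second, I would identify the parity operator appropriate to this sector. The relevant involution is not the oscillator parity, which is even on bilinears and so acts trivially on the $\hat{K}$'s, but the $SU(1,1)$ parity $\hat{\Pi}=e^{i\pi\hat{K}_{0}}$ --- the half-turn of the compact $U(1)$ --- which on the weight basis of a discrete-series (or finite) representation of Bargmann index $k$ acts by $(-1)^{m}$ up to a constant phase $e^{i\pi k}$ on each irreducible block. Since $\hat{K}_{\pm}$ shift the weight by $\pm1$, one gets $\hat{\Pi}\hat{K}_{\pm}\hat{\Pi}^{-1}=-\hat{K}_{\pm}$ and hence the braiding $\hat{S}(\xi)\,\hat{\Pi}=\hat{\Pi}\,\hat{S}(-\xi)$, the exact analogue of $\hat{D}(\alpha)\hat{\Pi}=\hat{\Pi}\hat{D}(-\alpha)$; this is the content of the preceding proposition on the squeeze operator. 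Conjugating parity by a group element $\hat{U}(g)$ then produces a phase-point operator $\hat{\Delta}(g)=\hat{U}(g)\,\hat{\Pi}\,\hat{U}^{\dagger}(g)$ which, by the braiding, obeys the squeezed/displaced-parity similarity law $\hat{U}(h)\hat{\Delta}(g)\hat{U}^{\dagger}(h)\propto\hat{\Delta}(hg)$ --- mirroring $\hat{D}(x,p)\hat{\Pi}\hat{D}^{\dagger}(x,p)=\hat{D}(2x,2p)\hat{\Pi}$ --- reduces to $\hat{\Pi}$ at the group identity (the $(0,0)$ initial condition), squares to a scalar (the idempotency axiom), and is Hermitian whenever $\hat{U}$ is unitary. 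These are precisely the Stratonovich--Weyl-type relations recorded for $\hat{\Delta}(x,p)$ in the displacement subsection.

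Third, to make the statement a genuinely matrix one I would restrict $\hat{U}$ to a finite carrier space: either the defining $2\times2$ representation of $SU(1,1)$, in which the $\hat{K}$'s become the standard $SL(2)$ matrices, $\hat{D}$ and $\hat{S}$ become explicit matrices, and $\hat{\Pi}$ is represented by $\mathrm{diag}(1,-1)$ --- which indeed conjugates the off-diagonal (squeeze) entry by $-1$, matching the braiding --- or one of the symmetric-power representations, on which $\hat{\Pi}$ acts by $(-1)^{m}$ on the weight basis. In that setting every identity above becomes a finite matrix identity checkable by direct multiplication, the completeness and trace normalisations become finite sums over the representation label, and the product $\hat{\Delta}(g)\hat{\Delta}(g')$ can be expressed through the group product and the parity automorphism. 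This produces the matrix Wigner function the corollary promises, with the parity operator's laws as the sole structural input.

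The step I expect to be the main obstacle is controlling the status of $\hat{\Pi}$ relative to the group: parity is genuinely \emph{not} an element of $SU(1,1)$ --- its $2\times2$ image $\mathrm{diag}(1,-1)$ has determinant $-1$ --- so the clean ``conjugate parity by a group element'' construction really lives in the two-fold extension $SU(1,1)\cdot\mathbf{Z}_{2}$ (equivalently, in the image of the metaplectic group, where parity appears as a lift of the central element $-\mathbf{1}$). One must therefore check that the associated cocycle --- concretely, the phase prefactors in the disentangling formulae, the analogues of $e^{-|\alpha|^{2}/2}$ and $(\cosh r)^{-2\hat{K}_{0}}$ --- neither obstructs associativity of the kernel algebra nor spoils the Hermiticity and normalisation of the resulting distribution, and that these phases pass correctly from the infinite-dimensional metaplectic realisation down to the finite matrix one. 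Once that bookkeeping is under control, the matrix relations follow mechanically.
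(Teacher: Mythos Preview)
Your proposal is correct in spirit and reaches the right structural endpoint, but it takes a genuinely different route from the paper's own proof of this corollary. The paper's proof (which is bundled with the two preceding propositions) stays entirely in the infinite-dimensional oscillator realisation: it works with the \emph{displacement} operator $\hat{D}(\alpha)$ and the oscillator parity $\hat{P}=(-1)^{\hat{a}^{\dagger}\hat{a}}$, defines the Wigner operator by the integral $\hat{w}(\alpha)=\pi^{-2}\int_{\mathbb{C}}e^{\alpha\beta^{*}-\beta\alpha^{*}}\hat{D}(\beta)\,d\beta$, and then reads off $\hat{w}(0)=2\hat{P}$, the isometric law $\hat{w}(\alpha)=\hat{D}(\alpha)\hat{w}(0)\hat{D}^{\dagger}(\alpha)$, Hermiticity, the tracial map $W_{\hat{A}}(\alpha)=\mathbf{Tr}[\hat{A}\hat{w}(\alpha)]$, and the covariance $W_{\hat{\rho}'}(\alpha)=W_{\hat{\rho}}(\alpha-\alpha')$. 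The actual passage to finite matrices is deferred to the following section, where the paper introduces the $2\times2$ $\hat{k}$-representation, takes $\hat{P}(\Phi)=\exp(i\Phi\hat{k}_{0}/2)$, and fixes $\Phi=\pi$ by imposing the parity--inversion identity $\hat{D}(\xi)\hat{P}\hat{D}^{\dagger}(\xi)=\hat{D}^{2}(\xi)\hat{P}$ directly at the matrix level.

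You instead bypass the Heisenberg displacement entirely and work from the outset with the $SU(1,1)$ generators $\hat{K}_{0},\hat{K}_{\pm}$, the \emph{squeeze} operator as the group action, and the $SU(1,1)$ parity $\hat{\Pi}=e^{i\pi\hat{K}_{0}}$. Your observation that oscillator parity is trivial on bilinears, forcing the switch to $e^{i\pi\hat{K}_{0}}$, is a genuine insight that the paper only arrives at implicitly in the next section when it lands on $\hat{P}(\pi)=\mathrm{diag}(i,-i)$. What your route buys is directness: the phase-point operator $\hat{\Delta}(g)=\hat{U}(g)\hat{\Pi}\hat{U}^{\dagger}(g)$ and its Stratonovich--Weyl properties drop out without the intermediate Fourier integral over the Heisenberg group. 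What the paper's route buys is a tighter link to the Weyl--Wigner transform already axiomatised in Section~3, and a cleaner derivation of the tracial inverse map. Your worry about the $\mathbf{Z}_{2}$ extension and metaplectic cocycle is legitimate but not an obstruction here: the paper simply absorbs it into the choice of $\Phi$ and checks the identity by brute-force $2\times2$ multiplication, which you could equally do.
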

\begin{proof}
    
 The squeeze operator is a quadratic in the creation and
annihilation operators. 
\begin{definition}
    
\begin{equation}
\hat{S}(z)=\exp\left(\dfrac{1}{2}\left(z^{*}\hat{a}^{2}-z\hat{a}^{\dagger2}\right)\right)
\end{equation}
\begin{equation}
\hat{D}(\alpha)\hat{S}(z)=\hat{S}(z)\left(\hat{S}^{\dagger}(z)\hat{D}(\alpha)\hat{S}(z)\right)=\hat{S}(z)\hat{D}(\gamma)
\end{equation}
\end{definition}
There is another braiding relationship between the squeeze and displacement
operators: 
\begin{lemma}
    
\begin{equation}
\gamma=\alpha\cosh r+\alpha^{*}e^{i\theta}\sinh r
\end{equation}
\end{lemma}
It is not too hard to show that the creation and annhilation operators
act as rotations in the hyperbolic space: 
\begin{theorem}
    
\begin{equation}
\hat{S}^{\dagger}(z)\hat{a}\hat{S}(z)=\hat{a}\cosh r-\hat{a}^{\dagger}e^{i\theta}\sinh r,z=re^{i\theta}
\end{equation}
\end{theorem}
Writing now the displacement operator in phase space coordinates,
we find the Glauber-Sudarshan \cite{glauber1955time} representation:
\begin{equation}
\hat{D}(\alpha)=\exp\left(i(p\hat{x}-x\hat{p})\right)
\end{equation}
Using the same type of substitution, the squeeze operator takes the
form: 
\begin{equation}
\hat{S}(\alpha)=\exp\left(\dfrac{i}{2}\left(\dfrac{p}{\sqrt{2}}(\hat{x}^{2}-\hat{p}^{2})-\dfrac{x}{\sqrt{2}}\left\{ \hat{x},\hat{p}\right\} \right)\right)
\end{equation}
which is of the form of a Wigner-Weyl transform: 
\begin{equation}
\hat{S}(\alpha)=\exp\left(\dfrac{i}{2}\left(p\hat{\mathcal{P}}-x\hat{\mathcal{X}}\right)\right)
\end{equation}
We can use this form of the transform to derive the symmetry properties
of the system: 
\begin{equation}
\left[\begin{array}{c}
\hat{a}(\alpha)\\
\hat{a}^{\dagger}(\alpha)
\end{array}\right]=\left[\begin{array}{cc}
\cosh r & -e^{i\theta}\sinh r\\
-e^{-i\theta}\sinh r & \cosh r
\end{array}\right]\left[\begin{array}{c}
\hat{a}\\
\hat{a}^{\dagger}
\end{array}\right]
\end{equation}
\begin{equation}
\hat{U}(\alpha)\mathbf{b}=\mathbf{b}'
\end{equation}
\begin{theorem}
    
In terms of the phase space coordinates, the transformation takes
the form: 
\begin{equation}
\hat{U}(\alpha)=\left[\begin{array}{cc}
\cosh r-\cos\theta\sinh r & \sin\theta\sinh r\\
\sin\theta\sinh r & \cosh r+\cos\theta\sinh r
\end{array}\right]
\end{equation}
\begin{equation}
\det\hat{U}(\alpha)=1,\hat{U}^{-1}(r,\theta)=\hat{U}(-r,\theta)
\end{equation}
For the angle chosen to be $\theta=0$ we have action upon the phase
space coordinates in the form: 
\[
\hat{U}(r)=\left[\begin{array}{cc}
\cosh r-\sinh r & 0\\
0 & \cosh r+\sinh r
\end{array}\right]=\left[\begin{array}{cc}
e^{-r} & 0\\
0 & e^{r}
\end{array}\right]
\]
\begin{equation}
\left[\begin{array}{c}
\hat{x}'\\
\hat{p}'
\end{array}\right]=\left[\begin{array}{c}
e^{-r}\hat{x}\\
e^{r}\hat{p}
\end{array}\right]
\end{equation}
\end{theorem}
From this perspective, we can see how a squeeze in one parameter results
in the broadening of the other, which is an aspect of the Heisenberg
uncertainty principle. We seek a more direct way to access the implied
symmetry of the Wigner function. Using the following results from
quantum optics \cite{potovcek2015exponential}, the Wigner operator is given by:
\begin{definition}
    
\begin{equation}
\hat{w}(\alpha)=\dfrac{1}{\pi^{2}}\int_{\mathbb{C}}\exp\left(\alpha\beta^{*}-\beta\alpha^{*}\right)\hat{D}(\beta)d\beta
\end{equation}
\end{definition}
The Wigner operator follows the displacement law of quantum optics
via isometric transformation, viz. 
\begin{definition}
    
\begin{equation}
\hat{w}(\alpha)=\hat{D}(\alpha)\hat{w}(0)\hat{D}^{\dagger}(\alpha)
\end{equation}
\end{definition}

Similarly to the original postulates, we have the Wigner operator
is Hermitian, via $\hat{w}(\alpha)=\hat{w}^{\dagger}(\alpha)$. Identically
to the calculation from quantum optics \cite{potovcek2015exponential}, the initial
condition of the Wigner operator is the parity: 
\begin{equation}
\hat{w}(0)=\int_{\mathbb{C}}\hat{D}(\beta)d\beta=2\hat{P}
\end{equation}
where the parity is $\hat{P}=\left(-1\right)^{\hat{a}^{\dagger}\hat{a}}=(-1)^{\hat{N}}$.
The connection between the Wigner distribution and this operator is
given by the tracial relation: 
\begin{equation}
W_{\hat{A}}(\alpha)=\mathbf{Tr}\left[\hat{A}\hat{w}(\alpha)\right]
\end{equation}
The inverse to this map is: 
\begin{equation}
\hat{A}=\dfrac{1}{2\pi^{2}}\int_{\mathbb{C}}\hat{w}(\alpha)W_{\hat{A}}(\alpha)d\alpha
\end{equation}
For the density matrix, we can associate the Wigner function via:
\begin{theorem}
    
\begin{equation}
W_{\hat{\rho}}(\alpha)=\mathbf{Tr}\left[\hat{\rho}\hat{w}(\alpha)\right]
\end{equation}
\end{theorem}
We have that $\hat{\rho}^{\dagger}=\hat{\rho}$ so $W_{\hat{\rho}}^{*}(\alpha)=W_{\hat{\rho}}(\alpha)$,
i.e. is a real function. The displaced density matrix can be written:
\begin{equation}
\hat{\rho}'=\hat{D}(\alpha')\hat{\rho}\hat{D}^{\dagger}(\alpha')
\end{equation}
If we substitute this into the formula for the Wigner function and
rearrange, we find: 
\begin{theorem}
    
\begin{equation}
W_{\hat{\rho}'}(\alpha)=W_{\hat{\rho}}(\alpha-\alpha')
\end{equation}

\end{theorem}
\end{proof}
This is all the basic structure we need to identify a Wigner function.
We can see how the different parts of this operator work together.
In particular the relationship between the trace and the phase space
function is of interest. The parity operator obviously enters in a
fundamental way and defines the symmetries of this operator. We shall
now show how finite matrices enter as a complementary representation
to the continuous systems we have considered so far.

\section{Pseudounitary Matrices and SU(1,1)}

We shall now demonstrate one simple way in which a finite version
of the Wigner distribution can be found. This is achieved by using
the pseudounitary counterparts of SU(2), i.e. SU(1,1). The most basic
representation of this is given by 2x2 matrices; the idea is that
if we can find ways in which to represent relations using small matrices,
we can generalise to any other representation. 

\begin{definition}
    By examining the structure
of the parametric algebra, and defining the group generators as $\hat{K}_{+}=\dfrac{1}{2}\hat{a}^{\dagger2},\hat{K}_{-}=\dfrac{1}{2}\hat{a}^{2}$,
it is not too hard to show that the boson algebra goes over into SU(1,1):
\begin{equation}
\left[\hat{K}_{-},\hat{K}_{+}\right]=\dfrac{1}{4}\left[\hat{a}^{2},\hat{a}^{\dagger2}\right]=\dfrac{1}{2}\left(1+2\hat{a}^{\dagger}\hat{a}\right)=\hat{K}_{0}
\end{equation}
\begin{equation}
\left[\hat{K}_{0},\hat{K}_{\pm}\right]=\pm2\hat{K}_{\pm}
\end{equation}
\end{definition}
Recognising that this is the SU(1,1) algebra, we can use the Pauli
matrix representation: 
\begin{definition}
    
\begin{equation}
\hat{k}_{0}=\hat{\sigma}_{z}=\left[\begin{array}{cc}
1 & 0\\
0 & -1
\end{array}\right],\hat{k}_{+}=i\hat{\sigma}_{+}=\left[\begin{array}{cc}
0 & i\\
0 & 0
\end{array}\right]
\end{equation}
\begin{equation}
\hat{k}_{-}=i\hat{\sigma}_{-}=\left[\begin{array}{cc}
0 & 0\\
i & 0
\end{array}\right]
\end{equation}
\end{definition}
\begin{proposition}
    Using the finite representation of the hyperbolic algebra defined through the parabolic operators, the Wigner function can be written as an isomorphic, unitary transformation of the parity operator. 
\end{proposition}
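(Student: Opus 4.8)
The plan is to carry the construction of the Wigner operator from the quantum--optical setting of the previous section into the two-by-two representation, item by item: build a finite ``displacement/squeeze'' operator out of the parabolic generators, identify the finite parity, and then exhibit the matrix Wigner operator as the conjugate of that parity by the group element.

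First I would form the finite operator in exact analogy with $\hat S(z)=\exp\left(z^{*}\hat K_{-}-z\hat K_{+}\right)$ built from $\hat K_{+}=\dfrac{1}{2}\hat a^{\dagger2}$, $\hat K_{-}=\dfrac{1}{2}\hat a^{2}$, now using the Pauli realisation $\hat k_{0}=\hat\sigma_{z}$, $\hat k_{\pm}=i\hat\sigma_{\pm}$:
\begin{equation}
\hat d(\xi)=\exp\left(\xi^{*}\hat k_{-}-\xi\hat k_{+}\right),\qquad \xi\in\mathbb{C}.
\end{equation}
Since $\xi^{*}\hat k_{-}-\xi\hat k_{+}$ is off-diagonal and squares to $|\xi|^{2}\mathbf 1$, the exponential collapses to $\hat d(\xi)=\cosh|\xi|\,\mathbf 1+\dfrac{\sinh|\xi|}{|\xi|}\left(\xi^{*}\hat k_{-}-\xi\hat k_{+}\right)$, from which one reads off $\hat d(\xi)^{-1}=\hat d(-\xi)$ and $\hat d(\xi)^{\dagger}\hat k_{0}\,\hat d(\xi)=\hat k_{0}$, so that $\hat d(\xi)$ lies in $SU(1,1)$ relative to the metric $J=\hat k_{0}=\hat\sigma_{z}$. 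I would also record the composition law for $\hat d$, the finite counterpart of $\hat D(\alpha_{1})\hat D(\alpha_{2})=e^{\frac{1}{2}(\alpha_{1}\alpha_{2}^{*}-\alpha_{1}^{*}\alpha_{2})}\hat D(\alpha_{1}+\alpha_{2})$, which follows from the two-by-two Baker--Campbell--Hausdorff expansion and the commutators $[\hat k_{-},\hat k_{+}]=\hat k_{0}$, $[\hat k_{0},\hat k_{\pm}]=\pm2\hat k_{\pm}$ established above; since $SU(1,1)$ is non-abelian a product of two such elements is a boost times a compact rotation, which is all the covariance step below requires.

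Next I would take the finite parity to be the natural involution of the representation, $\hat\pi=\hat k_{0}=\hat\sigma_{z}$, so that $\hat\pi^{2}=\mathbf 1$ and $J\hat\pi^{\dagger}J=\hat\pi$; this is the exact shadow of $\hat w(0)=2\hat P$ and $\pi\hat\Delta(0,0)=\hat\Pi$ with $\hat P=(-1)^{\hat N}$ diagonal alongside $\hat K_{0}=\dfrac{1}{2}+\hat N$. The matrix Wigner operator is then the orbit of this involution,
\begin{equation}
\hat w_{\mathrm{mat}}(\xi)=\hat d(\xi)\,\hat\pi\,\hat d(\xi)^{-1},\qquad \hat w_{\mathrm{mat}}(0)=\hat\pi,
\end{equation}
in parallel with $\hat w(\alpha)=\hat D(\alpha)\hat w(0)\hat D^{\dagger}(\alpha)$. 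I would then verify the properties inherited from the Wigner-operator axioms of the previous section: the involutive identity $\hat w_{\mathrm{mat}}(\xi)^{2}=\mathbf 1$ (immediate from $\hat\pi^{2}=\mathbf 1$), so $\hat w_{\mathrm{mat}}(\xi)$ carries the $\{+1,-1\}$ spectrum of a parity; the $J$-Hermiticity $J\,\hat w_{\mathrm{mat}}^{\dagger}(\xi)\,J=\hat w_{\mathrm{mat}}(\xi)$, following from $J\hat\pi^{\dagger}J=\hat\pi$ and $J\hat d(\xi)^{\dagger}J=\hat d(\xi)^{-1}$ and playing the role of $\hat w^{\dagger}=\hat w$; and the covariance $\hat g\,\hat w_{\mathrm{mat}}(\xi)\,\hat g^{-1}=\hat w_{\mathrm{mat}}(g\cdot\xi)$ for $\hat g\in SU(1,1)$, which reproduces $W_{\hat\rho'}(\alpha)=W_{\hat\rho}(\alpha-\alpha')$. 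Because $\hat\pi=\hat k_{0}$ is fixed by the compact subgroup, the compact factor drops out of every conjugation, so $\hat w_{\mathrm{mat}}$ descends to a function on the coset space $SU(1,1)/U(1)$ --- the disk/pseudosphere of the earlier sections --- and each value is, by construction, an inner-automorphic, hence isomorphic, image of the parity operator implemented by a member of $SU(1,1)$; this is exactly the asserted representation.

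The main obstacle, and the reason the word ``unitary'' must be read as ``unitary for the indefinite metric $J=\hat\sigma_{z}$'', is the non-compactness of $SU(1,1)$: in the two-by-two (that is, $SL(2,\mathbb{C})$) representation the squeeze elements $\hat d(\xi)$ are Hermitian and positive rather than unitary, and since this representation is irreducible there is, by Schur, no non-scalar matrix commuting with all of $\hat k_{0},\hat k_{\pm}$, so the true Fock-space parity $(-1)^{\hat N}$, which relies on the \emph{reducible} oscillator representation, has no faithful finite analogue that is simultaneously central. I would dispose of this in two stages: first restrict to the maximal compact subgroup, where $\hat d=\exp(i\phi\hat k_{0})$ is genuinely unitary and $\hat d\,\hat\pi\,\hat d^{\dagger}$ is manifestly an isometry; and then, for the full group, replace the ordinary adjoint everywhere by the $J$-adjoint $\hat A\mapsto J\hat A^{\dagger}J$ --- equivalently pass through a Cayley-type intertwiner carrying the pseudounitary realisation to a unitary one on the indefinite inner product --- which is where ``isomorphic transformation'' carries the weight. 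Keeping the adjoints, inverses and the cocycle phase placed consistently so that $J$-Hermiticity and covariance hold simultaneously is the only genuinely delicate bookkeeping I anticipate; the closed-form exponentiation and the composition law will be routine precisely because the generator squares to a scalar.
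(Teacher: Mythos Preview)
Your construction is internally coherent but takes a different route from the paper, and that difference matters for the word ``unitary'' in the statement. You build the finite group element as the hyperbolic squeeze $\hat d(\xi)=\exp(\xi^{*}\hat k_{-}-\xi\hat k_{+})$, correctly observe that in the $2\times2$ representation this matrix is Hermitian rather than unitary, and then salvage the claim by reinterpreting ``unitary'' as $J$-unitary with $J=\hat\sigma_{z}$. The paper does \emph{not} do this. Instead it writes down a genuinely unitary displacement
\[
\hat D(\xi)=\begin{pmatrix}\cos r & e^{-i\theta}\sin r\\ -e^{i\theta}\sin r & \cos r\end{pmatrix},
\]
leaves the parity as a one-parameter family $\hat P(\Phi)=\exp\!\left(\tfrac{i\Phi}{2}\hat k_{0}\right)$, and then \emph{fixes} $\Phi$ by explicitly enforcing the parity-inversion identity $\hat D(\xi)\hat P(\Phi)\hat D^{\dagger}(\xi)=\hat D^{2}(\xi)\hat P(\Phi)=\hat P(\Phi)\hat D^{\dagger2}(\xi)$. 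Matching the two sides forces $\Phi=\pi$, giving $\hat w(0)=\hat P(\pi)=i\hat\sigma_{z}$ rather than your $\hat\sigma_{z}$, and the Wigner operator is then literally a unitary conjugate of parity, with no indefinite metric needed.

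What each approach buys: yours is more systematic---everything is derived from the algebra and the coset structure $SU(1,1)/U(1)$ falls out cleanly---but you pay for it by having to redefine ``unitary'' and by losing the exact parity-inversion relation that the paper regards as the defining constraint. The paper's approach is more ad hoc (the compact $\hat D(\xi)$ is posited rather than derived from $\hat k_{\pm}$), but it solves for the parity rather than postulating it, and delivers genuine unitarity as stated. If you want to align with the paper, replace your $\hat d(\xi)$ by the trigonometric $\hat D(\xi)$ above, keep the parity undetermined as $\hat P(\Phi)$, and compute both sides of the inversion theorem to pin down $\Phi=\pi$.
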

\begin{proposition}
    The parity laws can be solved for a finite matrix representation of the displacement operator.
\end{proposition}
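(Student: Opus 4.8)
The plan is to read the displacement- and squeeze-operator relations established above as a system of equations for an unknown $2\times2$ matrix and to solve it by exhibiting an explicit element of $SU(1,1)$ built from the Pauli-type generators $\hat k_{0},\hat k_{\pm}$ just introduced. First I would make the ansatz that the finite displacement is the exponential $\hat D(\alpha)=\exp(\xi\hat k_{+}-\bar\xi\hat k_{-})$, with $\alpha\leftrightarrow\xi$; this is the image of $\exp(\alpha\hat a^{\dagger}-\alpha^{*}\hat a)$ under $\tfrac{1}{2}\hat a^{\dagger2},\tfrac{1}{2}\hat a^{2}\mapsto\hat k_{\pm}$, equivalently the finite form of the squeeze operator. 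Writing $M(\xi)=\xi\hat k_{+}-\bar\xi\hat k_{-}=\left[\begin{array}{cc}0&i\xi\\-i\bar\xi&0\end{array}\right]$, the key one-line identity $M(\xi)^{2}=|\xi|^{2}\mathbf{1}$ collapses the exponential series and gives $\hat D(\alpha)=\cosh|\xi|\,\mathbf{1}+|\xi|^{-1}\sinh|\xi|\,M(\xi)$. Since $\mathrm{tr}\,M=0$ and $\hat\sigma_{z}M(\xi)\hat\sigma_{z}=-M(\xi)$, one reads off $\det\hat D(\alpha)=1$ and $\hat D^{\dagger}(\alpha)\hat\sigma_{z}\hat D(\alpha)=\hat\sigma_{z}$, so $\hat D(\alpha)\in SU(1,1)$, while $M$ Hermitian forces $\hat D^{\dagger}(\alpha)=\hat D(\alpha)$. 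Because $M(2\xi)=2M(\xi)$, the family also obeys $\hat D(\alpha)^{2}=\hat D(2\alpha)$, the matrix shadow of the ray-doubling already present in the continuum similarity law.

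Next I would identify the finite parity. Inside this representation the only Hermitian involution (up to sign) is the diagonal generator itself, so I would set $\hat P=\hat k_{0}=\hat\sigma_{z}$; this is exactly the $SU(1,1)$ metric, so the Hilbert-space adjoint $\dagger$ in the continuum parity laws must here be read as the metric adjoint $\hat D^{\sharp}:=\hat k_{0}\hat D^{\dagger}\hat k_{0}=\hat D^{-1}$. The reflection law $\hat P\hat D(\alpha)\hat P=\hat D(\alpha)^{-1}$ then follows at once from $\hat\sigma_{z}M\hat\sigma_{z}=-M$, this being the $2\times2$ analogue of $\hat\Pi\hat D\hat\Pi=\hat D^{-1}$; combining it with $\hat D(\alpha)^{2}=\hat D(2\alpha)$ reproduces the displaced-parity relation $\hat D(\alpha)\hat P\hat D^{\sharp}(\alpha)=\hat D(2\alpha)\hat P=\hat P\hat D^{\sharp}(2\alpha)$ verbatim. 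Setting the finite Wigner kernel to $\hat w(\alpha)=2\hat D(\alpha)\hat P\hat D^{\sharp}(\alpha)$ then yields automatically the initial condition $\hat w(0)=2\hat P$, the covariance $\hat w(\alpha)=\hat D(\alpha)\hat w(0)\hat D^{\sharp}(\alpha)$, and metric-Hermiticity $\hat w^{\sharp}(\alpha)=\hat w(\alpha)$ (using $[\hat P,\hat k_{0}]=0$); after the natural replacement $\dagger\to\sharp$ these are precisely the parity laws we were asked to solve, so the statement follows once the remaining consistency point is handled.

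The step I expect to be the real obstacle is the composition/braiding law. For the Heisenberg--Weyl displacement, $\hat D(\alpha)\hat D(\alpha')$ differs from $\hat D(\alpha+\alpha')$ only by the symplectic phase $\exp(\tfrac{1}{2}(\alpha(\alpha')^{*}-\alpha^{*}\alpha'))$, a genuine central element; but $SU(1,1)$ is semisimple and its defining representation has trivial centre, so $\hat D(\alpha)\hat D(\alpha')$ is in general $\hat D(\gamma)$ times a further $SU(1,1)$ rotation --- the hyperbolic Wigner/Thomas angle already visible in $\gamma=\alpha\cosh r+\alpha^{*}e^{i\theta}\sinh r$. The delicate point is to show that this unavoidable rotation lives entirely in a diagonal $\hat k_{0}$ factor: passing via BCH to the Gauss (normal-ordered) form $\hat D=e^{\tau\hat k_{+}}e^{\lambda\hat k_{0}}e^{-\bar\tau\hat k_{-}}$, the middle factor commutes with $\hat P=\hat k_{0}$ and hence cancels out of every one of the parity relations above, so the parity sector closes on the $2\times2$ block even though full displacement covariance does not. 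Establishing that decoupling --- which I regard as the crux --- finishes the argument; the continuum completeness relation $\int\hat D=\mathbf{1}$ has no naive finite analogue and would be replaced by Schur orthogonality of the $SU(1,1)$ matrix elements, with the usual caveats about the non-compact Plancherel measure, which I would record as a remark rather than re-derive.
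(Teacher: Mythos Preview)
Your construction is internally consistent, but it follows a genuinely different route from the paper and in one respect sidesteps what the paper actually does.

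The paper keeps the ordinary Hermitian adjoint throughout and takes the finite displacement to be a \emph{unitary} (trigonometric, $SU(2)$-type) matrix
\[
\hat D(\xi)=\begin{pmatrix}\cos r & e^{-i\theta}\sin r\\ -e^{i\theta}\sin r & \cos r\end{pmatrix},
\]
while leaving a free phase in the parity, $\hat P(\Phi)=\exp\!\bigl(\tfrac{i\Phi}{2}\hat k_{0}\bigr)$. It then computes the two sides $\hat D\hat P(\Phi)\hat D^{\dagger}$ and $\hat D^{2}\hat P(\Phi)$ explicitly as $2\times2$ matrices and \emph{solves} the resulting equation for $\Phi$, obtaining $\Phi=\pi$ and hence $\hat P=i\hat\sigma_{z}$. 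That is the sense in which the parity laws are ``solved'': the unknown is the phase in $\hat P$, and the displacement is fixed from the outset as unitary.

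Your $\hat D=\exp(\xi\hat k_{+}-\bar\xi\hat k_{-})$ is instead the \emph{hyperbolic boost} --- Hermitian, pseudo-unitary, with $\cosh/\sinh$ entries --- which the paper reserves for the \emph{squeeze} operator and treats in the next proposition. Because your $\hat D$ is Hermitian, $\hat D\hat P\hat D^{\dagger}=\hat D\hat\sigma_{z}\hat D=\hat\sigma_{z}$ identically, so the parity law with the ordinary $\dagger$ collapses; you repair this by replacing $\dagger$ with the metric adjoint $\sharp$. That is a legitimate $SU(1,1)$ manoeuvre, but it changes the statement being verified rather than solving it, and it conflates the displacement and squeeze sectors that the paper keeps separate. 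What your approach buys is a cleaner algebraic picture tied directly to the $SU(1,1)$ metric; what the paper's buys is that no modification of $\dagger$ is needed and the unknown is isolated as a single scalar.

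Your third paragraph on the Thomas/Wigner rotation is not needed here: the proposition only uses the one-parameter doubling $\hat D(\alpha)^{2}=\hat D(2\alpha)$ along a fixed ray, which holds trivially in both constructions; the general composition law never enters the paper's argument for this statement.
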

\begin{proof}
    
Applying decomposition formulae to this representation of the fundamental
algebra, it is possible to show the following:
\begin{lemma}
    
\begin{equation}
\hat{S}(\xi)=\exp\left(r\hat{Q}(\theta)\hat{J}\hat{Q}^{\dagger}(\theta)\right)=\hat{Q}(\theta)\exp\left(r\hat{J}\right)\hat{Q}^{\dagger}(\theta)
\end{equation}
\begin{equation}
=\left[\begin{array}{cc}
\cosh r & ie^{i\theta}\sinh r\\
-ie^{-i\theta}\sinh r & \cosh r
\end{array}\right]
\end{equation}
\begin{equation}
=\exp\left(-e^{-i\theta}\tanh r\hat{k}_{-}\right)\exp\left(\ln\cosh r\hat{k}_{0}\right)\exp\left(e^{i\theta}\tanh r\hat{k}_{+}\right)
\end{equation}
\end{lemma}
This expression is equivalent to the Gauss LDU decomposition and gives
the fundamental relationship for any representation of this hyperbolic
algebra. Seyfarth et. al \cite{seyfarth2020wigner} have established the following
relationships using the method of coherent states. 

\begin{theorem}
The parity inversion theorem: 
\begin{equation}
\hat{D}(\alpha)\hat{P}(\alpha)\hat{D}^{\dagger}(\alpha)=\hat{P}(\alpha)\hat{D}(-2\alpha)=\hat{D}(2\alpha)\hat{P}(\alpha)=\hat{w}(\alpha)
\end{equation}
\end{theorem}
which we recognise as the property derived in the previous sections.

\begin{theorem}
    The formula for the Wigner operator in terms of displacement: 
\begin{equation}
\hat{w}(\alpha,\beta)=2^{2}(-1)^{\hat{a}^{\dagger}\hat{a}+\hat{b}^{\dagger}\hat{b}}\hat{D}(-2\alpha)\hat{D}^{\dagger}(-2\beta)
\end{equation}
\end{theorem}

\begin{theorem}
   The composition formula: 
   \begin{equation}
\hat{w}(\alpha,\beta)=\hat{w}(\alpha)\hat{w}(\beta)
\end{equation}
\end{theorem} 
\begin{theorem}
    
The Wigner operator in terms of displacement of the parity: 
\begin{equation}
\hat{w}(\alpha)=\hat{D}(\alpha)\left[2(-1)^{\hat{a}^{\dagger}\hat{a}}\right]\hat{D}^{\dagger}(\alpha)
\end{equation}
\end{theorem}
We wish to establish a matrix theory of the Wigner operator independent
of the mechanics of coherent states. There are advantages to doing
this, as a concrete matrix representation avoids many of the pitfalls
involved in creation-annihilation operator calculus. If we can specify
the matrix form of the parity operator, we can work the rest out.
\begin{theorem}
    
\begin{equation}
\hat{P}(\Phi)=\exp\left(i\dfrac{\Phi}{2}\hat{k}_{0}\right)=\left[\begin{array}{cc}
e^{i\Phi/2} & 0\\
0 & e^{-i\Phi/2}
\end{array}\right]
\end{equation}
\end{theorem}
If we use this form of the parity operator, any changes due to displacement
will be absorbed into the constant in the argument. This is, intuitively,
the formula for a displaced version of parity; we can show this by
writing out the displacement via:
\begin{theorem}
    
\begin{equation}
\hat{D}(\xi)=\left[\begin{array}{cc}
\cos r & e^{-i\theta}\sin r\\
-e^{i\theta}\sin r & \cos r
\end{array}\right]=\left[\begin{array}{cc}
\alpha & \beta\\
-\beta^{*} & \alpha
\end{array}\right]
\end{equation}
\end{theorem}
Using basic matrix multiplication, it is not too difficult to show
the identity: 
\[
\hat{D}(\xi)\hat{P}(\Phi)\hat{D}^{\dagger}(\xi)
\]
\begin{equation}
=\left[\begin{array}{cc}
e^{i\Phi/2}\cos^{2}r+e^{-i\Phi/2}\sin^{2}r & -i\sin\dfrac{\Phi}{2}\sin2re^{-i\theta}\\
-i\sin\dfrac{\Phi}{2}\sin2re^{i\theta} & e^{-i\Phi/2}\cos^{2}r+e^{i\Phi/2}\sin^{2}r
\end{array}\right]
\end{equation}
and also the parity inversion theorem: 
\begin{theorem}
    
\begin{equation}
\hat{P}(\Phi)\hat{D}(-2\xi)=\hat{P}(\Phi)\hat{D}^{\dagger2}(\xi)
\end{equation}
\end{theorem}
By calculating the other side of the theorem, we can fix the value
of the constant in the parity operator. In this representation, the
matrices are unitary. The other part of the parity inversion theorem
reads as: 
\begin{equation}
\hat{D}^{2}(\xi)\hat{P}(\Phi)=\left[\begin{array}{cc}
e^{i\Phi/2}\cos2r & \sin2re^{-i(\Phi/2+\theta)}\\
-\sin2re^{i(\Phi/2+\theta)} & e^{-i\Phi/2}\cos2r
\end{array}\right]
\end{equation}
So if we choose the constant parameter to be $\Phi=\pi$ we can solve
the expression: 
\begin{equation}
\hat{D}(\xi)\hat{P}(\pi)\hat{D}^{\dagger}(\xi)=\hat{D}^{2}(\xi)\hat{P}(\pi)=\hat{P}(\pi)\hat{D}^{\dagger2}(-\xi)
\end{equation}
The parity operator is then given by: 
\begin{lemma}
    
\begin{equation}
\hat{w}(0)=\hat{P}(\pi)=\left[\begin{array}{cc}
i & 0\\
0 & -i
\end{array}\right]
\end{equation}
which can also be realised by using the resolution of identity: 
\begin{equation}
\hat{w}(0)=\sum_{k}(-1)^{k}\left|k_{i}\right\rangle \left\langle k_{i}\right|
\end{equation}
\end{lemma}
If we look at the group homomorphism of the Wigner operator, we can
write: 
\begin{theorem}
    
\begin{equation}
W_{\hat{\rho}}(\zeta')=\mathbf{Tr}\left(\hat{\rho}\hat{T}\hat{w}(\zeta)\hat{T}^{\dagger}\right)
\end{equation}
\begin{equation}
=\mathbf{Tr}\left(\hat{\rho}\hat{w}(g^{-1}\zeta)\right)=W_{\hat{\rho}}(g^{-1}\zeta)
\end{equation}
\end{theorem}
Q.E.D.
\end{proof}
\begin{proposition}
    The squeeze operator can be solved in finite matrix form using the insights of the previous proof.
\end{proposition}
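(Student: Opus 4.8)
The plan is to transcribe the construction used for the matrix displacement and parity operators, now applied to the quadratic generators $\hat K_{\pm},\hat K_{0}$. First I would rewrite the squeeze operator as $\hat S(z)=\exp\!\left(\tfrac12(z^{*}\hat a^{2}-z\hat a^{\dagger 2})\right)=\exp(z^{*}\hat K_{-}-z\hat K_{+})$ and pass to the $2\times2$ picture through the dictionary $\hat K_{\pm}\mapsto\hat k_{\pm}$, $\hat K_{0}\mapsto\hat k_{0}$ already fixed above. Writing $z=re^{i\theta}$, the exponent becomes the purely off-diagonal matrix $z^{*}\hat k_{-}-z\hat k_{+}$; its square is $r^{2}$ times the identity, so the exponential closes in $\cosh r$ and $\sinh r$ and reproduces the matrix of the decomposition lemma,
\[
\hat S(\xi)=\left[\begin{array}{cc}\cosh r & ie^{i\theta}\sinh r\\ -ie^{-i\theta}\sinh r & \cosh r\end{array}\right],
\]
which is Hermitian, unimodular and positive definite --- the genuinely pseudounitary (boost) element complementing the compact matrix realising $\hat D(\xi)$.

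Next I would extract the normal-ordered form from the Gauss LDU factorisation, exactly as the parity constant was pinned down in the previous proof. Splitting the matrix above into lower-triangular, diagonal and upper-triangular factors and re-expressing each through $\exp(c\hat k_{-})$, $\exp(c\hat k_{0})=\mathrm{diag}(e^{c},e^{-c})$ and $\exp(c\hat k_{+})$ --- using $\hat k_{\pm}^{2}=0$ --- recovers the disentangled product $\exp(-e^{-i\theta}\tanh r\,\hat k_{-})\exp(\ln\cosh r\,\hat k_{0})\exp(e^{i\theta}\tanh r\,\hat k_{+})$. This is what it means for the squeeze to ``be solved in finite matrix form'': it becomes a BCH-type identity among explicit $2\times2$ matrices, free of the creation--annihilation calculus.

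I would then confirm that this matrix carries the symmetry relations established earlier. The Bogoliubov action $\hat S^{\dagger}(z)\hat a\hat S(z)=\hat a\cosh r-\hat a^{\dagger}e^{i\theta}\sinh r$ is checked against the doublet transformed by $\hat U(\alpha)$; the braiding $\hat D(\alpha)\hat S(z)=\hat S(z)\hat D(\gamma)$ with $\gamma=\alpha\cosh r+\alpha^{*}e^{i\theta}\sinh r$ follows by multiplying the matrix forms of $\hat D$ and $\hat S$; and, since $\hat w(0)=\hat P(\pi)=\mathrm{diag}(i,-i)$, conjugating $\hat P(\pi)$ by $\hat S(\xi)$ (equivalently by $\hat Q(\theta)\,e^{r\hat J}\,\hat Q^{\dagger}(\theta)$) produces a squeezed parity lying inside the matrix Wigner operator in exactly the way the displaced parity did, giving the squeeze analogue of $\hat w(\alpha)=\hat D(\alpha)\hat w(0)\hat D^{\dagger}(\alpha)$.

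The main obstacle is not the exponential algebra --- that is forced once the generator squares to a scalar --- but verifying that the braiding and parity-inversion identities really do survive the passage from the infinite-dimensional Fock representation to the non-unitary $2\times2$ one: one must show that the adjoint action of the squeeze matrix on the displacement matrix reproduces the Bogoliubov shift of $\gamma$ with precisely the right phases, and that the positive-definite, pseudounitary character of $\hat S$ (so that $\hat S^{-1}=\hat S(-r,\theta)$ replaces $\hat S^{\dagger}$ wherever the unitarity of $\hat D$ or $\hat P$ was used) does not break the trace relation $W_{\hat\rho}(g^{-1}\zeta)=\mathbf{Tr}(\hat\rho\,\hat w(g^{-1}\zeta))$. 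Keeping the conventions for $z$, $\alpha$ and the parity phase $\Phi$ mutually consistent --- the scattered factors of $i$ and $e^{i\theta}$ --- is where the bookkeeping is delicate.
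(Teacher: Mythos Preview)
Your construction is sound for the narrow claim: mapping $\hat K_{\pm},\hat K_{0}$ to $\hat k_{\pm},\hat k_{0}$, observing that the exponent squares to a scalar, and reading off the $\cosh/\sinh$ matrix does produce the finite form of $\hat S(\xi)$, and the LDU factorisation and Bogoliubov checks you outline are correct. But this is essentially a re-derivation of the lemma already recorded in the \emph{previous} proof; the paper's own argument here takes a different and more structural route.

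In the paper, the squeeze matrix is used not as the endpoint but as one factor in the most general SU(1,1) element $\hat T=\hat S(\xi)\,e^{i\Phi\hat K_{0}}\,\hat S^{-1}(\xi)$, which is computed explicitly. The paper then introduces the key new idea: the Wigner operator is written as a \emph{matrix characteristic function} $\hat w(\xi)=\exp\!\bigl(i\Phi\,\hat U(\xi)\bigr)$, where $\hat U(\xi)=\hat K_{0}\cosh\tau+e^{i\chi}\sinh\tau\,\hat K_{+}+e^{-i\chi}\sinh\tau\,\hat K_{-}$ is evaluated in the small-$\hat k$ representation and exponentiated directly. The punchline is that this second calculation reproduces $\hat T(g)$ on the nose, so the squeeze-conjugated parity and the characteristic-function exponential coincide. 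Your proposal touches this only obliquely (``conjugating $\hat P(\pi)$ by $\hat S(\xi)$ produces a squeezed parity\ldots''), whereas for the paper that identity \emph{is} the content being proved.

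What each approach buys: your route is more elementary and verifies more of the operator-algebra bookkeeping (braiding, Bogoliubov, pseudounitarity), which is useful but largely rechecks earlier material. The paper's route trades those checks for the single structural observation $\hat w(\xi)=\exp(i\Phi\hat U(\xi))$, which it advertises as the new result and which drives the subsequent corollaries on the brachistochrone. If you want your write-up to align with the paper, the missing step is to exhibit that characteristic-function formula and show it agrees with the squeeze-conjugated parity, rather than stopping at the matrix form of $\hat S$ itself.
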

\begin{proposition}
    This is equivalent, on SU(1,1), to the Fourier transform of the parity operator.
\end{proposition}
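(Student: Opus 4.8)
The plan is to reduce both claims to a single matrix--conjugation identity in the $2\times2$ representation $\hat{k}_{0},\hat{k}_{\pm}$ fixed above, using that the parity operator $\hat{P}(\Phi)=\exp\!\left(i\frac{\Phi}{2}\hat{k}_{0}\right)$ sits in the compact $\hat{k}_{0}$ direction while the squeeze operator lies in the non-compact $\hat{k}_{\pm}$ sector. First I would start from the Cartan-type decomposition $\hat{S}(\xi)=\hat{Q}(\theta)\exp(r\hat{J})\hat{Q}^{\dagger}(\theta)$ recorded in the lemma above and isolate the central factor $\exp(r\hat{J})$. Continuing the parity argument to a purely imaginary value gives $\hat{P}(2ir)=\exp(-r\hat{k}_{0})=\mathrm{diag}(e^{-r},e^{r})$, which is, up to the customary change of basis, the diagonal squeeze $\hat{U}(r)=\mathrm{diag}(e^{-r},e^{r})$ already recorded in the preceding section. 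The factor $\exp(r\hat{J})$ differs from $\hat{P}(2ir)$ only by the Bogoliubov rotation that exchanges the diagonal $\hat{k}_{0}$ direction with the off-diagonal $\hat{k}_{+}+\hat{k}_{-}$ direction, so
\begin{equation}
\hat{S}(\xi)=\hat{Q}(\theta)\,\hat{\mathcal{F}}\,\hat{P}(2ir)\,\hat{\mathcal{F}}^{-1}\,\hat{Q}^{\dagger}(\theta),
\end{equation}
which is the first proposition: the squeeze has a closed finite-matrix form built entirely from the (continued) parity, in direct analogy with the displaced--parity identity $\hat{w}(\alpha)=\hat{D}(\alpha)\left[2(-1)^{\hat{a}^{\dagger}\hat{a}}\right]\hat{D}^{\dagger}(\alpha)$ proved earlier.

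Next I would identify $\hat{\mathcal{F}}$ with the Fourier transform. In the $2\times2$ representation the $90^{\circ}$ element that rotates $\hat{k}_{0}=\hat{\sigma}_{z}$ onto the squeeze direction is $\hat{\mathcal{F}}=\exp\!\left(i\frac{\pi}{4}\hat{\sigma}_{x}\right)$ (the two-point discrete Fourier matrix gives the same conclusion up to a relabelling of the squeeze phase $\theta$), and it satisfies $\hat{\mathcal{F}}\hat{\sigma}_{z}\hat{\mathcal{F}}^{\dagger}=\hat{\sigma}_{y}$; hence $\hat{\mathcal{F}}\hat{P}(\Phi)\hat{\mathcal{F}}^{\dagger}=\exp\!\left(i\frac{\Phi}{2}\hat{\sigma}_{y}\right)$, and at $\Phi=2ir$ this is exactly the matrix form of $\hat{S}(\xi)$ recorded above, specialised to $\theta=0$; the general $\theta$ then follows from the extra conjugation by $\hat{Q}(\theta)$. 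So, on SU(1,1), the squeeze operator is the Fourier transform of the parity operator. As a cross-check I would run this against the integral definition of the Wigner operator: just as $\hat{w}(0)=\int_{\mathbb{C}}\hat{D}(\beta)\,d\beta=2\hat{P}$ presents the parity as the symplectic Fourier transform of the displacement, the squeeze sector is produced by the same transform after the Cayley change of model between the disk and half-plane pictures of the pseudosphere, and the group-homomorphism property $W_{\hat{\rho}}(\zeta')=W_{\hat{\rho}}(g^{-1}\zeta)$ established above guarantees the two presentations agree.

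The step I expect to be the main obstacle is legitimising the analytic continuation of the parity argument to imaginary values while tracking the passage from unitarity to pseudounitarity. Going from $\hat{P}(\Phi)$ with real $\Phi$ (a unitary elliptic element) to $\hat{P}(2ir)$ (a genuinely non-compact element) means the Hermiticity axiom $\hat{w}(\alpha)=\hat{w}^{\dagger}(\alpha)$ must be reread relative to the indefinite metric $\hat{\eta}$ of SU(1,1), so the $\dagger$ appearing in $\hat{Q}^{\dagger}(\theta)$ and in $\hat{\mathcal{F}}\hat{P}\hat{\mathcal{F}}^{\dagger}$ must be distinguished from the $\hat{\eta}$-adjoint; I would carry $\hat{\eta}$ explicitly and verify that $\hat{\eta}\,\hat{S}^{\dagger}\hat{\eta}=\hat{S}^{-1}$ survives the continuation, which collapses to the single identity $\hat{\eta}\,\hat{P}(\bar{\Phi})\,\hat{\eta}=\hat{P}(\Phi)^{-1}$ at complex $\Phi$. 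A secondary, purely bookkeeping obstacle is pinning down which normalisation of ``the Fourier transform on SU(1,1)'' is intended --- the two-point DFT matrix, the metaplectic fractional transform $\exp(i\phi\hat{N})$, or the Cayley transform relating the two models of the pseudosphere --- and showing that all of them reduce to the same $\hat{\mathcal{F}}$ in this faithful $2\times2$ representation, which is routine once the metric conventions are fixed.
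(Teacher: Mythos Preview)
Your interpretation of ``Fourier transform of the parity operator'' is not the paper's, and this sends you after a different identity. The paper's proof first writes the general SU(1,1) element as $\hat{T}=\hat{S}(\xi)\,e^{i\Phi\hat{K}_{0}}\,\hat{S}^{-1}(\xi)$ and computes it as an explicit $2\times2$ matrix. It then invokes the Seyfarth formula $W_{\hat\rho}(\xi)=\langle(-1)^{\mu}e^{2i\mu}\rangle$ to read the Wigner distribution as a Fourier transform of parity in the \emph{characteristic-function} sense, and posits the matrix analogue
\[
\hat{w}(\xi)=\exp\bigl(i\Phi\,\hat{U}(\xi)\bigr),\qquad \hat{U}(\xi)=\hat{K}_{0}\cosh\tau+e^{i\chi}\sinh\tau\,\hat{K}_{+}+e^{-i\chi}\sinh\tau\,\hat{K}_{-}.
\]
Exponentiating this $\hat U$ in the $2\times2$ representation reproduces $\hat T$, and that match is the proof. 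No analytic continuation in $\Phi$ and no auxiliary Fourier matrix $\hat{\mathcal F}$ enter.

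Your route --- continue $\hat P(\Phi)$ to $\Phi=2ir$ and conjugate by $\hat{\mathcal F}=\exp\!\left(i\tfrac{\pi}{4}\hat\sigma_x\right)$ --- is algebraically sound and does produce $\hat S(\xi)$ at $\theta=0$, but it establishes a different statement: you exhibit the \emph{squeeze operator} as a rotated, analytically continued parity, whereas the paper exhibits the \emph{Wigner operator} $\hat w$ as the matrix characteristic function of the squeezed parity generator. The ``Fourier transform'' in the paper is the map $\hat U\mapsto e^{i\Phi\hat U}$ (exactly the ``characteristic function of a random variable, with the random variable replaced by a unitary matrix'' of the following corollary), not conjugation by a discrete Fourier-type matrix. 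The obstacle you anticipate --- legitimising the passage to imaginary $\Phi$ and distinguishing the Hermitian adjoint from the $\hat\eta$-adjoint --- is therefore an artefact of your reading; the paper's argument stays entirely at real $\Phi$ with ordinary adjoints and reduces to a direct matrix computation.
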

\begin{corollary}
    This can be written in a form which resembles the characteristic function of a random variable, with the random variable replaced by a unitary matrix.
\end{corollary}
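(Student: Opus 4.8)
The plan is to build everything on the concrete $2\times 2$ realisation of $\mathfrak{su}(1,1)$ fixed in the previous proof, together with the integral characterisation of the Wigner operator. First I would carry the squeeze generator $\tfrac12(z^{*}\hat a^{2}-z\hat a^{\dagger 2})=z^{*}\hat K_{-}-z\hat K_{+}$ into the Pauli-type realisation $\hat k_{0},\hat k_{\pm}$, so that the generator becomes the traceless matrix $z^{*}\hat k_{-}-z\hat k_{+}$. Exponentiating this is precisely the Gauss LDU computation already displayed above (equivalently, conjugating a diagonal boost $e^{r\hat J}$ by the rotation $\hat Q(\theta)$), which returns the closed form
\begin{equation}
\hat S(\xi)=\left[\begin{array}{cc} \cosh r & i e^{i\theta}\sinh r\\ -i e^{-i\theta}\sinh r & \cosh r\end{array}\right],\qquad \xi=r e^{i\theta}.
\end{equation}
What remains is verification in the style of the preceding proof: a one-line determinant gives $\det\hat S(\xi)=\cosh^{2}r-\sinh^{2}r=1$, and $\hat S^{\dagger}(\xi)\,\hat g\,\hat S(\xi)=\hat g$ with metric $\hat g=\mathrm{diag}(1,-1)$, so $\hat S(\xi)\in SU(1,1)$ and $r\mapsto\hat S(re^{i\theta})$ is the boost subgroup generated by $\hat k_{+},\hat k_{-}$ --- the same mechanism that forced $\hat P(\pi)=\mathrm{diag}(i,-i)$ and the explicit entries of $\hat D(\xi)$.

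Next I would establish the Fourier-transform statement by running the displaced-parity identity backwards. Starting from $\hat w(\alpha)=\frac{1}{\pi^{2}}\int_{\mathbb C}\exp(\alpha\beta^{*}-\beta\alpha^{*})\,\hat D(\beta)\,d\beta$, together with $\hat w(0)=\int_{\mathbb C}\hat D(\beta)\,d\beta=2\hat P$ and $\hat w(\alpha)=\hat D(\alpha)\,(2\hat P)\,\hat D^{\dagger}(\alpha)$, one reads these relations as a symplectic Fourier pair. Fourier inversion then writes each displacement --- and hence, through its LDU factorisation, each group element, in particular the boost $\hat S(\xi)$ of the first step --- as a Gaussian-weighted integral of the displaced parities $\hat w(\alpha)$. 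Specialising the weight to the boost direction and substituting the matrix forms of $\hat D(\xi)$ and $\hat P(\pi)$, the integral collapses (after the usual regularisation of the oscillatory kernel) to the $\cosh r,\sinh r$ matrix above. This is the precise content of the claim that on $SU(1,1)$ the squeeze operator is the Fourier transform of the parity operator.

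For the corollary, define the quantum characteristic function $\chi_{\hat\rho}(\beta)=\mathbf{Tr}\left[\hat\rho\,\hat D(\beta)\right]=\langle\hat D(\beta)\rangle$. In the finite representation $\hat D(\beta)$ is a unitary $2\times 2$ matrix, so $\chi_{\hat\rho}$ is literally the expectation of a unitary matrix --- the exact analogue of $\mathbb E[e^{itX}]$ with $e^{itX}$ replaced by $\hat D(\beta)$. Combining this with the previous step gives $W_{\hat\rho}(\alpha)=\mathbf{Tr}[\hat\rho\,\hat w(\alpha)]=\frac{1}{\pi^{2}}\int_{\mathbb C}\exp(\alpha\beta^{*}-\beta\alpha^{*})\,\chi_{\hat\rho}(\beta)\,d\beta$, so the Wigner function is the symplectic Fourier transform of $\chi_{\hat\rho}$ exactly as a probability density is the Fourier transform of a characteristic function; reading $\hat D(\beta)$ as the ``unitary random variable'' completes the corollary.

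The hard part will be the inversion step of the second paragraph. On the non-compact group $SU(1,1)$ the symplectic Fourier integral is oscillatory and the would-be Gaussian weight has indefinite sign, so making the inversion meaningful requires an analytic continuation (real squeeze parameter $\leftrightarrow$ imaginary ``rotation'' angle), and matching the resulting matrix entries to the LDU form of $\hat S(\xi)$ demands care with the branches of $r$ and $\theta$. Pinning down exactly which kernel and normalisation make the identification ``squeeze $=$ Fourier transform of parity'' hold literally is the conceptual crux of the whole argument.
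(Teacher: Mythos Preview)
Your computation of the squeeze matrix $\hat S(\xi)$ agrees with the paper's, but from that point on the two arguments diverge sharply, and the object you identify as the ``characteristic function'' is not the one the paper has in mind.

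The paper does not attempt any Fourier inversion on $SU(1,1)$. Instead it computes the most general group element as the conjugation $\hat T(g)=\hat S(\xi)\,e^{i\Phi\hat K_{0}}\,\hat S^{-1}(\xi)$, obtaining an explicit $2\times 2$ matrix with entries $\cos\Phi\pm i\sin\Phi\cosh\tau$ and $-e^{\pm i\chi}\sin\Phi\sinh\tau$. Separately, it writes down a time-evolution operator $\hat U(\xi)=\hat K_{0}\cosh\tau+e^{i\chi}\sinh\tau\,\hat K_{+}+e^{-i\chi}\sinh\tau\,\hat K_{-}$ in the $2\times2$ $\hat k$-representation, exponentiates $\hat w(\xi)=\exp\bigl(i\Phi\,\hat U(\xi)\bigr)$ directly (using $\hat U^{2}=\mathbf 1$), and observes that the result reproduces $\hat T(g)$ entry by entry. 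The ``characteristic function'' claim is then purely structural: the Wigner \emph{operator} itself has the shape $e^{i\Phi X}$, with the scalar random variable $X$ replaced by the unitary matrix $\hat U(\xi)$. No integrals, no regularisation, no analytic continuation --- just two matrix exponentials that coincide.

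Your route, by contrast, places the characteristic function at the level of the trace $\chi_{\hat\rho}(\beta)=\mathbf{Tr}[\hat\rho\,\hat D(\beta)]$ and recovers the Wigner function as its symplectic Fourier transform. That is the standard quantum-optics viewpoint and is not wrong in itself, but it is not what the corollary asserts here, and it forces you into the oscillatory-integral inversion you correctly flag as delicate. The paper simply never faces that difficulty: the ``Fourier transform of parity'' remark is motivated by Seyfarth's formula $W_{\hat\rho}(\xi)=\langle(-1)^{\mu}e^{2i\mu}\rangle$ and is immediately recast as the ansatz $\hat w(\xi)=\exp(i\Phi\,\hat U(\xi))$, which is then verified by brute-force comparison with $\hat T(g)$. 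So the conceptual crux you isolate is an artefact of your chosen route rather than an unavoidable obstacle.
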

\begin{proof}
    
The most general operator in SU(1,1) is given by: 
\begin{definition}
    
\begin{equation}
\hat{T}=\hat{S}(\xi)e^{i\Phi\hat{K}_{0}}\hat{S}^{-1}(\xi)
\end{equation}
\end{definition}
Evaluating, we find the matrix representation for the squeeze operator
in the pseudounitary case: 
\begin{lemma}
    
\begin{equation}
\hat{S}(\xi)=\left[\begin{array}{cc}
\cosh\dfrac{\tau}{2} & -ie^{i\chi}\sinh\dfrac{\tau}{2}\\
ie^{-i\chi}\sinh\dfrac{\tau}{2} & \cosh\dfrac{\tau}{2}
\end{array}\right]
\end{equation}
\end{lemma}
The rest of the parts of the group isomorphism are easily calculated:
\begin{equation}
e^{i\Phi\hat{K}_{0}}=\left[\begin{array}{cc}
e^{i\Phi} & 0\\
0 & e^{-i\Phi}
\end{array}\right]
\end{equation}
\begin{equation}
\hat{T}(g)=\left[\begin{array}{cc}
\cos\Phi+i\sin\Phi\cosh\tau & -e^{i\chi}\sin\Phi\sinh\tau\\
-e^{-i\chi}\sin\Phi\sinh\tau & \cos\Phi-i\sin\Phi\cosh\tau
\end{array}\right]
\end{equation}
where we note that $\det\hat{T}=1$, so this is a hyperbolic rotation.
The inverse matrix is given by the element:
\begin{equation}
\hat{T}(g^{-1})=\left[\begin{array}{cc}
\cos\Phi-i\sin\Phi\cosh\tau & e^{i\chi}\sin\Phi\sinh\tau\\
e^{-i\chi}\sin\Phi\sinh\tau & \cos\Phi+i\sin\Phi\cosh\tau
\end{array}\right]
\end{equation}
We shall now show a simple alternative way to access these formulae.
If we examine the trace formula, we note that following Seyfarth we
may write the expectation value \cite{seyfarth2020wigner}:
\begin{theorem}
    
\begin{equation}
W_{\hat{\rho}}(\xi)=\mathbf{Tr}[\hat{\rho}\hat{w}(\xi)]=\left\langle (-1)^{\mu}e^{2i\mu}\right\rangle 
\end{equation}
\end{theorem}
If we examine the physical meaning of this expression, we notice that
we can understand the Wigner distribution as the Fourier transform
of the parity operator. The matrix form of this equation will then
be given by: 
\begin{theorem}
    
\begin{equation}
\hat{w}(\xi)=\exp\left(i\Phi\hat{U}(\xi)\right)
\end{equation}
\end{theorem}
where $\hat{U}$ is the time evolution operator of the system. We
can evaluate the time evolution operator using the expansion over
the generators of SU(1,1): 
\begin{equation}
\hat{U}(\xi)=\hat{K}_{0}\cosh\tau+e^{i\chi}\sinh\tau\hat{K}_{+}+e^{-i\chi}\sinh\tau\hat{K}_{-}
\end{equation}
Using the small-$\hat{k}$ representation over the 2x2 matrices this
is evaluated as the matrix:
\begin{equation}
\hat{U}(\xi)=\left[\begin{array}{cc}
\cosh\tau & -ie^{i\chi}\sinh\tau\\
-ie^{-i\chi}\sinh\tau & -\cosh\tau
\end{array}\right]
\end{equation}

\begin{theorem}
  The Wigner operator in this representation can be understood as a
matrix form of the characteristic function:  
\begin{equation}
\hat{w}(\xi)=\exp\left(i\Phi\hat{U}(\xi)\right)
\end{equation}
\begin{equation}
=\left[\begin{array}{cc}
\cos\Phi+i\sin\Phi\cosh\tau & e^{i\chi}\sin\Phi\sinh\tau\\
e^{-i\chi}\sin\Phi\sinh\tau & \cos\Phi-i\sin\Phi\cosh\tau
\end{array}\right]
\end{equation}

\end{theorem}
which matches the formula obtained by using the group decomposition
law. Q.E.D.
\end{proof}

\begin{remark}
    
As far as the author is aware, this is a new result and has not
been reported in the literature. It is interesting to consider the
natural generalisation of this idiom to other unitary operators. It
is not known which groups will give ``nice'' matrix representations,
but by using the obvious relationships between SU(2)\textasciitilde SO(3)
and SU(1,1)\textasciitilde SO(2,1), it is not too hard to see that
at least for these groups there is a possibility of extension. The
general theory of these groups and matrix decompositions is beyond
the scope of this paper, but by using the example of SU(3), it is
possible to illustrate the complications in finding explicit Wigner
matrices on higher dimensions. We can see immediately how to exploit
this for other systems. We have immediately, the following:
\end{remark} 
\begin{corollary}
    The Wigner function associated to the quantum brachistochrone and time optimal control problem may be evaluated through the use of the matrix characteristic function.
\end{corollary}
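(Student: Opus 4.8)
The plan is to exhibit the time-optimal evolution operator of the quantum brachistochrone as an element of the SU(1,1) family $\hat{T}(g)$ constructed above, and then read the Wigner function off the characteristic-function identity $\hat{w}(\xi)=\exp(i\Phi\hat{U}(\xi))$. First I would recall the brachistochrone variational problem of \cite{morrison2012time}: one seeks the dynamic operator $\hat{H}$ of least evolution time carrying a fixed initial state to a fixed target state, subject to a fixed ``speed'' budget such as $\mathbf{Tr}(\hat{H}^{2})$ held constant. When the accessible operators close into the three-dimensional SU(1,1) algebra --- which is precisely the indefinite-metric, $\mathcal{PT}$-symmetric setting relevant here, and the same algebra governing the $xp$ and hyperbolic-oscillator dynamics treated earlier --- the Euler--Lagrange equations force $\hat{H}$ to be constant along the optimal path, so the evolution operator is the one-parameter subgroup $\hat{U}(t)=\exp(-i\hat{H}t)$, a geodesic on the group manifold whose turning rate and direction are encoded in the parameters $\tau$ and $\chi$ of the matrix $\hat{U}(\xi)$ written in the previous proof.

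The second step is to identify the optimization budget (equivalently the accumulated evolution time) with the phase $\Phi$ in $\hat{w}(\xi)=\exp(i\Phi\hat{U}(\xi))$. Since $\hat{U}(\xi)$ is traceless with $\det\hat{U}(\xi)=-1$, it satisfies $\hat{U}^{2}(\xi)=\mathbf{1}$ in the $2\times2$ representation, so the exponential series collapses by the same two-term argument used in the proof of the characteristic-function theorem, giving $\hat{w}(\xi)=\cos\Phi\,\mathbf{1}+i\sin\Phi\,\hat{U}(\xi)$ --- exactly the matrix obtained from the group decomposition law. The Wigner function of any state produced by the optimal protocol is then $W_{\hat{\rho}}(\xi)=\mathbf{Tr}[\hat{\rho}\hat{w}(\xi)]=\cos\Phi\,\mathbf{Tr}\hat{\rho}+i\sin\Phi\,\mathbf{Tr}[\hat{\rho}\hat{U}(\xi)]$, which for a pure state $\hat{\rho}=|\Psi\rangle\langle\Psi|$ is the expectation $\langle\Psi|\hat{w}(\xi)|\Psi\rangle$ --- a bona fide matrix characteristic function of the parity evaluated along the brachistochrone geodesic, in the sense of the trace formula $W_{\hat{\rho}}(\xi)=\langle(-1)^{\mu}e^{2i\mu}\rangle$ established above. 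The covariance relation $\mathbf{Tr}(\hat{\rho}\,\hat{T}\hat{w}(\zeta)\hat{T}^{\dagger})=W_{\hat{\rho}}(g^{-1}\zeta)$ then shows that transporting the state along the optimal flow merely slides the Wigner function by the corresponding SU(1,1) element, so the whole time history is generated from $\hat{w}(0)=\hat{P}(\pi)$ by conjugation.

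The main obstacle I anticipate is matching parametrizations cleanly: the brachistochrone is most familiar with a definite SU(2) constraint and a Fubini--Study metric, whereas the matrix Wigner operator here lives on the pseudounitary branch, so one must state explicitly whether the time-optimal problem in question is the indefinite-metric one of \cite{morrison2012time}, for which the identification with $\hat{U}(\xi)$ is immediate, or the ordinary one, for which the argument goes through verbatim after replacing $\cosh,\sinh$ by $\cos,\sin$ and $\hat{P}(\pi)$ by its SU(2) analogue. A secondary bookkeeping point is that the optimal Hamiltonian is in general only conjugate to $\hat{K}_{0}$ by some squeeze $\hat{S}(\xi)$ rather than equal to it; but absorbing that conjugation via $\hat{w}(\xi)=\hat{S}(\xi)\hat{w}(0)\hat{S}^{\dagger}(\xi)$ is exactly the isometric displacement law already proved, so no new computation is required. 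Once these choices are fixed the corollary follows without further work.
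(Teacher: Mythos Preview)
Your route is the paper's route --- both pivot on $\hat{w}(\xi)=\exp\bigl(i\Phi\,\hat{U}(\xi)\bigr)$ --- but the paper makes explicit a point you relegate to the ``obstacles'' paragraph, and that point is really the crux of the argument. The time-evolution operator coming out of the brachistochrone is a genuine one-parameter subgroup, the paper's $\hat{U}_3$, with $\det\hat{U}_3=+1$; the paper observes immediately that this is \emph{not} the correct object to feed into the characteristic-function formula, because the $\hat{U}(\xi)$ appearing there must satisfy $\det\hat{U}(\xi)=-1$ (it is a displaced parity, so that $\hat{U}^2=\mathbf{1}$ and the exponential collapses). Your first paragraph identifies the brachistochrone evolution $\hat{U}(t)=\exp(-i\hat{H}t)$ directly with ``the matrix $\hat{U}(\xi)$ written in the previous proof,'' and your second paragraph then invokes $\det\hat{U}(\xi)=-1$; those two claims are in tension, since a one-parameter unitary subgroup generated by a traceless Hermitian has determinant $+1$. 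The paper resolves this by explicitly replacing the $\det=+1$ brachistochrone unitary with its ``real, unitary form of the transformed parity operator,'' a $\det=-1$ matrix built from the same angular data, before exponentiating. You should promote this determinant correction from a bookkeeping remark to the main step.

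Once that replacement is made, your collapse $\hat{w}(\xi)=\cos\Phi\,\mathbf{1}+i\sin\Phi\,\hat{U}(\xi)$ is exactly what the paper computes entrywise, and your trace formula for $W_{\hat{\rho}}$ is the intended conclusion. The paper then continues into the squeeze-operator diagonalisation, the eigenvector matrix $\hat{Q}(\theta)$, and verification of the von Neumann equation; that material belongs to the companion corollary bundled into the same proof block and is not needed for the statement you are addressing. Your covariance argument via $W_{\hat{\rho}}(g^{-1}\zeta)$ is a tidy addition the paper does not spell out at this point.
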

\begin{corollary}
    The squeeze operator defines a Hermitian, unitary transformation and in this state space the Wigner function evolves unitarily according to the von Neumann equation.
\end{corollary}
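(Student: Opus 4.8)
The plan is to extract the statement directly from the matrix data produced in the two preceding proofs, keeping in mind that on the SU(1,1) representation space all adjoints are taken with respect to the indefinite metric $\hat{\eta}=\hat{k}_{0}=\hat{\sigma}_{z}$; that is, ``Hermitian'' means $\hat{A}^{\dagger}\hat{\eta}=\hat{\eta}\hat{A}$ and ``unitary'' means $\hat{B}^{\dagger}\hat{\eta}\hat{B}=\hat{\eta}$. First I would record the two properties of the squeeze matrix. A direct computation on the $2\times2$ form $\hat{S}(\xi)$ written above shows $\hat{S}^{\dagger}=\hat{S}$, so the squeeze operator is in fact Hermitian in the ordinary sense, while $\hat{S}^{\dagger}\hat{\sigma}_{z}\hat{S}=\hat{\sigma}_{z}$ together with $\det\hat{S}=1$ exhibits it as a genuine element of SU(1,1); this is the content of the first assertion of the corollary. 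The same reasoning, applied to the exponent $r\hat{Q}(\theta)\hat{J}\hat{Q}^{\dagger}(\theta)$, shows it is $\hat{\eta}$-anti-Hermitian, which re-derives the pseudounitarity of $\hat{S}(\xi)$ intrinsically.

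Next I would treat the Wigner operator $\hat{w}(\xi)=\exp(i\Phi\hat{U}(\xi))$. Comparing this with the group-decomposition form $\hat{T}(g)=\hat{S}(\xi)e^{i\Phi\hat{k}_{0}}\hat{S}^{-1}(\xi)$ identifies $\hat{U}(\xi)=\hat{S}(\xi)\hat{k}_{0}\hat{S}^{-1}(\xi)$, so $\hat{U}$ is the conjugate of the $\hat{\eta}$-Hermitian generator $\hat{k}_{0}$ by the $\hat{\eta}$-pseudounitary $\hat{S}$ and is therefore itself $\hat{\eta}$-Hermitian; equivalently, a one-line check on the $2\times2$ matrix gives $\hat{U}^{\dagger}\hat{\sigma}_{z}=\hat{\sigma}_{z}\hat{U}$. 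Hence $i\Phi\hat{U}$ is $\hat{\eta}$-anti-Hermitian and $\hat{w}(\xi)=\exp(i\Phi\hat{U}(\xi))$ satisfies $\hat{w}^{\dagger}\hat{\sigma}_{z}\hat{w}=\hat{\sigma}_{z}$, $\det\hat{w}=1$: the Wigner operator acts on the state by conjugation with an element of SU(1,1).

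To obtain the von Neumann equation I would promote the parameter to a time, $\Phi\mapsto\Omega t$, so that $\hat{w}_{t}=\exp(it\Omega\hat{U}(\xi))$ is the one-parameter pseudounitary subgroup with generator $\Omega\hat{U}(\xi)$, and set $\hat{H}:=-\Omega\hat{U}(\xi)$, which is $\hat{\eta}$-Hermitian. Defining $\hat{\rho}_{t}:=\hat{w}_{t}\hat{\rho}_{0}\hat{w}_{t}^{-1}$ and differentiating, with $\partial_{t}\hat{w}_{t}=i\Omega\hat{U}\hat{w}_{t}$ and $\partial_{t}\hat{w}_{t}^{-1}=-i\Omega\hat{w}_{t}^{-1}\hat{U}$, yields $i\partial_{t}\hat{\rho}_{t}=-[\Omega\hat{U},\hat{\rho}_{t}]=[\hat{H},\hat{\rho}_{t}]$, which is exactly the von Neumann equation of the Poisson-bracket section. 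Finally I would apply the group-covariance identity $W_{\hat{T}\hat{\rho}\hat{T}^{\dagger}}(\zeta)=W_{\hat{\rho}}(g^{-1}\zeta)$ from the previous proof with $\hat{T}=\hat{w}_{t}$: the matrix Wigner function is then transported rigidly along the SU(1,1) flow, $W_{\hat{\rho}_{t}}(\zeta)=W_{\hat{\rho}_{0}}(g_{t}^{-1}\zeta)$, and because $\hat{w}_{t}$ is $\hat{\sigma}_{z}$-pseudounitary it preserves both the (pseudo-)trace and the $\hat{\eta}$-form, so on this state space the Wigner function indeed evolves unitarily.

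The step I expect to be the main obstacle is not any single computation but the consistent bookkeeping of the indefinite metric: one must systematically replace ordinary Hermitian conjugation by $\hat{\eta}$-conjugation with $\hat{\eta}=\hat{\sigma}_{z}$, track which operators are $\hat{\eta}$-Hermitian versus ordinarily Hermitian (the squeeze matrix happening to be both), and pin down the sign conventions in the differentiation of $\hat{\rho}_{t}$ so that the result is $i\partial_{t}\hat{\rho}=[\hat{H},\hat{\rho}]$ rather than its negative — this is what fixes $\hat{H}=-\Omega\hat{U}$ and the orientation of the phase-space flow $g_{t}$.
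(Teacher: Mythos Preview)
Your argument is correct, but it follows a different path from the paper's own proof. The paper does not invoke the indefinite metric $\hat{\eta}=\hat{\sigma}_{z}$ at all; instead it works directly with the eigendecomposition of the squeeze operator, writing $\hat{S}(\xi)=\hat{Q}(\theta)\exp(-ir\hat{J})\hat{Q}^{\dagger}(\theta)$, reading off the eigenvector matrix $\hat{Q}(\theta)$, and from it constructing a concrete $2\times2$ Hamiltonian $\tilde{H}=r\bigl(\begin{smallmatrix}0&e^{i\theta}\\-e^{-i\theta}&0\end{smallmatrix}\bigr)$. The squeeze parameter $r$ then plays the role of time, the solution matrix $\hat{W}(t)$ is the squeeze operator itself, and one differentiates $\hat{\rho}(t)=\hat{W}(t)\hat{\rho}_{w}(0)\hat{W}^{\dagger}(t)$ to land on $i\partial_{t}\hat{\rho}=[\tilde{H},\hat{\rho}]$. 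By contrast, you promote the phase $\Phi$ in the Wigner operator $\hat{w}(\xi)=\exp(i\Phi\hat{U}(\xi))$ to time and take $\hat{H}=-\Omega\hat{U}(\xi)$ as the generator, so your one-parameter group is the Wigner operator flow rather than the squeeze flow, and your Hamiltonian has nonzero diagonal entries whereas the paper's $\tilde{H}$ is purely off-diagonal.

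What your route buys is a clean structural picture: by identifying $\hat{U}=\hat{S}\hat{k}_{0}\hat{S}^{-1}$ you make transparent that the Wigner operator is an SU(1,1) conjugate of the parity generator, and the $\hat{\eta}$-Hermitian/pseudounitary bookkeeping explains \emph{why} the von Neumann form must hold rather than merely verifying it. The paper's route, being tied to the explicit $\hat{Q}(\theta)$ diagonalisation, is more computational but has the advantage of linking the dynamics directly to the squeeze operator itself (and to the quantum brachistochrone construction discussed alongside), which is after all what the corollary asserts. One small caution: because you conjugate by $\hat{w}_{t}$ and $\hat{w}_{t}^{-1}$ rather than $\hat{w}_{t}^{\dagger}$, and $\hat{w}$ is only $\hat{\eta}$-unitary, your $\hat{\rho}_{t}$ is $\hat{\eta}$-Hermitian rather than Hermitian in the ordinary sense; this is consistent with your framework but worth stating explicitly, since the paper sidesteps the issue by passing to the ordinary-unitary form $\exp(-ir\hat{Q}\hat{J}\hat{Q}^{\dagger})$ at this stage.
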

\begin{proof}
    We begin with the basic expression for the matrix characteristic function:
\begin{definition}
    \begin{equation}
\hat{w}(\xi)=\exp\left(i\Phi\hat{U}(\xi)\right)
\end{equation}    
\end{definition}

If we have access to a formula for the time evolution operator, this
gives us a way to define the matrix Wigner operator consistently.
From the theory of the quantum brachistochrone \cite{morrison2012time}, see
e.g. also the theory of group representations in \cite{vilenkin1978special},
we have many formulae for the time evolution operator, e.g. 
\begin{lemma}
    
\begin{equation}
\hat{U}_{3}=e^{i\phi}\left[\begin{array}{cc}
\cos\phi & i\sin\phi\\
i\sin\phi & \cos\phi
\end{array}\right]
\end{equation}
\end{lemma}
The unitary matrix we get from the quantum brachistochrone is not
correct for the Wigner transform. We have $\det\hat{U}=1$. Here we
are required to use a unitary with the relation to the parity operator,
where $\det\hat{U}=-1$. One such operator is given by the matrix:

\begin{theorem}
    
\begin{equation}
\hat{U}(\xi)=\left[\begin{array}{cc}
\cos\tau & -e^{i\chi}\sin\tau\\
-e^{-i\chi}\sin\tau & -\cos\tau
\end{array}\right]
\end{equation}
\end{theorem}
which is the real, unitary form of the transformed parity operator.
We intend to apply the same logic as before to write down the Wigner
operator for this state, and derive the Wigner distribution. Evaluating
the matrix exponential as before, we find: 
\begin{theorem}
    
\begin{equation}
\hat{w}(\xi)=\left[\begin{array}{cc}
\cos\Phi+i\sin\Phi\cos\tau & -i\sin\tau\sin\Phi e^{i\chi}\\
-i\sin\tau\sin\Phi e^{-i\chi} & \cos\Phi-i\sin\Phi\cos\tau
\end{array}\right]
\end{equation}
\end{theorem}
We wish to extend the calculation further, by using the density matrix
we can extract the Wigner function: 
\begin{equation}
W_{\hat{\rho}}(\alpha)=\mathbf{Tr}\left[\hat{\rho}\hat{w}(\alpha)\right]
\end{equation}
To evaluate the density matrix for this Wigner operator, we may utilise
the decomposition over the diagonal eigenstates of the squeeze operator,
finding:
\begin{equation}
\hat{S}(\xi)=\exp\left(-ir\hat{Q}(\theta)\hat{J}\hat{Q}^{\dagger}(\theta)\right)=\hat{Q}(\theta)\exp\left(-ir\hat{J}\right)\hat{Q}^{\dagger}(\theta)
\end{equation}
Writing out the decomposition of the squeeze operator: 
\begin{equation}
\hat{S}(\xi)=\dfrac{1}{2}\left[\begin{array}{cc}
ie^{i\theta} & -ie^{i\theta}\\
1 & 1
\end{array}\right]\left[\begin{array}{cc}
e^{ir} & 0\\
0 & e^{-ir}
\end{array}\right]\left[\begin{array}{cc}
-ie^{-i\theta} & 1\\
ie^{-i\theta} & 1
\end{array}\right]
\end{equation}
From the matrix of eigenstates, we can reconstruct the density matrix:
\begin{theorem}
    
\begin{equation}
\hat{Q}(\theta)=\dfrac{1}{\sqrt{2}}\left[\begin{array}{cc}
ie^{i\theta} & -ie^{i\theta}\\
1 & 1
\end{array}\right]=\left[\left|+\right\rangle ,\left|-\right\rangle \right]
\end{equation}
\begin{equation}
\tilde{H}=r\left[\begin{array}{cc}
0 & e^{+i\theta}\\
-e^{-i\theta} & 0
\end{array}\right],\tilde{H}\left|\pm\right\rangle =\pm r\left|\pm\right\rangle 
\end{equation}
\begin{equation}
i\dfrac{\partial\hat{\rho}}{\partial t}=\left[\tilde{H},\hat{\rho}\right]
\end{equation}
\end{theorem}
From the formula for the density matrix we can derive the result we
need: 
\begin{equation}
\hat{\rho}(t)=\hat{U}(t,0)\hat{\rho}(0)\hat{U}^{\dagger}(t,0)
\end{equation}
\begin{equation}
\hat{U}(t,0)=\hat{W}(t)\hat{W}^{\dagger}(0),\hat{U}^{\dagger}(t,0)=\hat{W}(0)\hat{W}^{\dagger}(t)
\end{equation}
Using the expression for the transformed density, we find the isomorphism
in terms of the solution matrix: 
\begin{theorem}
    
\begin{equation}
\hat{\rho}(t)=\hat{W}(t)\hat{W}^{\dagger}(0)\hat{\rho}(0)\hat{W}(0)\hat{W}^{\dagger}(t)=\hat{W}(t)\hat{\rho}_{w}(0)\hat{W}^{\dagger}(t)
\end{equation}
\end{theorem}
Since we know that the solution matrix follows standard dynamics,
it is simple to show that this expression solves the von Neumann equation
Evaluating the derivative: 
\begin{corollary}
    
\begin{equation}
i\dfrac{\partial\hat{\rho}}{\partial t}=i\dfrac{\partial}{\partial t}\left(\hat{W}(t)\hat{\rho}_{w}(0)\hat{W}^{\dagger}(t)\right)
\end{equation}
\begin{equation}
=\left[\tilde{H},\hat{\rho}\right]
\end{equation}
\end{corollary}
Q.E.D.
\end{proof}
\begin{remark}
    
We can therefore use this expression for the density matrix when calculating
the Wigner function. Choosing an initial state under a given Hamiltonian
is enough to find the density matrix in the simplified situation of
a pure state. As shall be shown in the next section, it is important
to make a distinction between pure and mixed states, as even in this
simplified example of a two state non-commutative system their behaviour
is markedly different. 

\end{remark}
\section{Theory of the Projection Matrix}

The remaining topic of interest relates to the integration theory
of groups. We state the following:
\begin{proposition}
    The projective theory is important to the formulation of the representation theory of the matrix Wigner theory. 
\end{proposition}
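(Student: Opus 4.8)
The plan is to make the statement concrete by identifying ``the projective theory'' with the projector‑valued data underlying every object constructed above, and ``the representation theory of the matrix Wigner theory'' with the group action on that data. First I would fix a (possibly continuous) resolution of the identity $\mathbf{1}=\sum_k\hat{\Pi}_k$ adapted to the SU(1,1) generator $\hat K_0$, write the parity as $\hat{w}(0)=\sum_k(-1)^k\hat{\Pi}_k$ exactly as in the lemma above, and observe that the completeness relation $\int\int\hat{D}(x,p)\,dx'dp'=\mathbf{1}$ already quoted is the group‑averaged form of the same resolution. This identifies the ``integration theory of groups'' named in the proposition with the linear map sending a projector‑valued measure to the displaced Wigner operator $\hat{w}(\alpha)=\hat{D}(\alpha)\hat{w}(0)\hat{D}^{\dagger}(\alpha)$, and likewise realises the pure‑state density $\hat\rho=\left|\Psi\right\rangle\left\langle\Psi\right|$ as a rank‑one projector, so that the Wigner distribution $W_{\hat\rho}(\alpha)=\mathbf{Tr}\left[\hat\rho\hat{w}(\alpha)\right]$ is a pairing of two projector combinations.

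Second, I would show that the group homomorphism property $W_{\hat\rho}(\zeta')=\mathbf{Tr}\left(\hat\rho\hat{T}\hat{w}(\zeta)\hat{T}^{\dagger}\right)=W_{\hat\rho}(g^{-1}\zeta)$ established in the previous section is precisely the statement that $g\mapsto\hat{T}(g)\hat{\Pi}\hat{T}^{\dagger}(g)$ is a representation of SU(1,1) on the space of projectors; and that, since $\hat{T}(g)$ is only pinned down up to the centre — this is exactly the $\Phi=\pi$ ambiguity in $\hat{P}(\Phi)$ used to fix the parity operator — the natural object is a \emph{projective} representation rather than a linear one. The matrix characteristic function $\hat{w}(\xi)=\exp\left(i\Phi\hat{U}(\xi)\right)$ then appears as the Fourier transform of this projector‑valued representation, so its representation‑theoretic content cannot be separated from the projective cocycle coming from the double cover SU(1,1)$\to$SO(2,1).

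Third, to turn this dictionary into an actual proof of ``importance'' rather than a restatement, I would verify closure: that the trace pairing together with the inversion $\hat A=\frac{1}{2\pi^2}\int_{\mathbb{C}}\hat{w}(\alpha)W_{\hat A}(\alpha)\,d\alpha$ reproduces $\mathbf{1}$ when $\hat A=\mathbf{1}$, which is an orthogonality (Schur‑type) relation for the matrix elements of the projective representation. I would carry this out first in the $2\times2$ small‑$\hat{k}$ representation, where the computation coincides with the one already displayed for $\hat{w}(\xi)$, and then indicate the passage to a general unitary irreducible representation of SU(1,1) using the Gauss LDU decomposition $\hat{S}(\xi)=\exp\left(-e^{-i\theta}\tanh r\,\hat{k}_{-}\right)\exp\left(\ln\cosh r\,\hat{k}_{0}\right)\exp\left(e^{i\theta}\tanh r\,\hat{k}_{+}\right)$ as the substitute for an explicit Haar integral. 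Concluding that the Wigner data determines, and is determined by, the projective representation would then be the content of the proposition.

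The hard part will be the non‑compactness of SU(1,1): in an infinite‑dimensional irreducible the resolution of the identity is a continuous integral against the Plancherel measure, not a discrete sum of rank‑one projectors, so the completeness relation invoked above has to be justified as an operator identity on a dense domain and the Schur orthogonality that closes the argument must be stated with the correct density of states. Controlling that, and checking that the $2$‑cocycle forced by the $\Phi=\pi$ choice for parity is the same one classifying the relevant projective representations, is the step I expect to absorb most of the effort; the rest is bookkeeping with the formulae already established.
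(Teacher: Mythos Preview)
Your proposal reads ``projective theory'' in the technical sense of projective representations---cocycles, the double cover $\mathrm{SU}(1,1)\to\mathrm{SO}(2,1)$, Schur orthogonality against the Plancherel measure---and builds an abstract argument around that. The paper means something far more pedestrian: ``projective'' refers to the reconstruction (projection) formula
\[
\hat{\rho}=\dfrac{1}{A}\int_{\Omega}d\mu(\Omega)\,\hat{w}(\Omega)\,\mathrm{Tr}\left[\hat{\rho}\,\hat{w}(\Omega)\right],
\]
and the proof is a concrete $2\times 2$ computation, carried out not on $\mathrm{SU}(1,1)$ at all but on compact $\mathrm{SU}(2)$. The paper takes a diagonal mixed state $\hat{\rho}=\mathrm{diag}(r_1,r_2)$, builds $\hat{w}(\theta)$ by Euler rotation of a fixed diagonal seed, fixes the seed parameter via the Stratonovich--Weyl normalisation $\mathrm{Tr}[\hat{w}^2]=N=2$, computes $W_{\hat\rho}(\theta)$ explicitly, and then checks the reconstruction integral. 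The punchline is that reconstruction \emph{fails} for the mixed state (one recovers $\tfrac{3}{2}r$ in place of $r$) but succeeds exactly for a pure state, recovering $\pi\hat\rho(0)$. That contrast \emph{is} the content of the proposition together with the two companion statements it is proved alongside.

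So there is no genuine error in your outline, but it is aimed at a much harder and different target. All of the difficulties you flag---non-compactness, continuous Plancherel decomposition, domain issues for the resolution of the identity, matching the $\Phi=\pi$ choice to a classifying $2$-cocycle---are entirely absent from the paper's argument, which sidesteps them by working with finite matrices over a compact group and a one-parameter integral $\int_0^{2\pi}d\theta$. What your route would buy is an honest representation-theoretic explanation of \emph{why} the reconstruction map behaves as it does; what the paper's route buys is an explicit, checkable example exhibiting the pure/mixed dichotomy with no analytic overhead. If you want to align with the paper, drop the cocycle framework and simply verify the Stratonovich--Weyl inversion on a $2\times 2$ pure state versus a mixed one.
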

\begin{proposition}
    The evolution of the matrix Wigner function is more delicate than the Bloch picture of the density matrix. In particular, pure and mixed states behave very differently.
\end{proposition}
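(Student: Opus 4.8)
The plan is to make both clauses precise by contrasting a rigid invariant of the Bloch parametrisation with a degeneracy of the matrix Wigner operator. First I would set up the Bloch picture on $\mathrm{SU}(1,1)$: parametrise the $2\times2$ state by its traceless part $\hat{\rho}-\frac{1}{2}\mathbf{1}=\vec{r}\cdot\hat{J}$, where $\hat{J}$ spans the Lie algebra and $\vec{r}$ is real, so that the reality and positivity conditions restrict $\vec{r}$ to a region bounded by a hyperboloid. The evolution $\hat{\rho}\mapsto\hat{T}\hat{\rho}\hat{T}^{\dagger}$ with $\hat{T}\in\mathrm{SU}(1,1)$ is then the adjoint $\mathrm{SO}(2,1)$ action on $\vec{r}$, and the central observation is that it preserves the Killing form $\langle\vec{r},\vec{r}\rangle$, hence preserves $\mathbf{Tr}\,\hat{\rho}^{2}$ (up to the pseudo-unitary analogue). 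In the Bloch picture the purity is a conserved quantity and the pure and mixed strata are invariant leaves of the flow, exactly as in the compact Bloch-sphere case; nothing surprising happens there.

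Next I would turn to the Wigner side, using the two facts established above: $\hat{w}(\xi)=\exp(i\Phi\hat{U}(\xi))$, and the explicit $\hat{U}(\xi)$ satisfies $\hat{U}(\xi)^{2}=\mathbf{1}$, which together with $\det\hat{U}=-1$ forces its eigenvalues to be $\pm1$ and hence $\mathbf{Tr}\,\hat{U}=0$. Expanding the matrix exponential gives $\hat{w}(\xi)=\cos\Phi\,\mathbf{1}+i\sin\Phi\,\hat{U}(\xi)$, so that $W_{\hat{\rho}}(\xi)=\cos\Phi+i\sin\Phi\,\mathbf{Tr}[\hat{\rho}\hat{U}(\xi)]$ and all of the geometric $(\tau,\chi)$ dependence lives in the reduced function $\widetilde{W}_{\hat{\rho}}(\tau,\chi)=\mathbf{Tr}[\hat{\rho}\hat{U}(\tau,\chi)]$. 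The dichotomy then reads off directly. For the maximally mixed state $\hat{\rho}=\frac{1}{2}\mathbf{1}$ one has $\widetilde{W}\equiv0$, so $W_{\hat{\rho}}(\xi)=\cos\Phi$ is independent of the squeeze and displacement parameters and is a fixed point of every $\hat{\rho}\mapsto\hat{T}\hat{\rho}\hat{T}^{\dagger}$; whereas for a pure state $\hat{\rho}=|\psi\rangle\langle\psi|$ one gets $\widetilde{W}_{\hat{\rho}}(\tau,\chi)=|\langle\psi_{+}|\psi\rangle|^{2}-|\langle\psi_{-}|\psi\rangle|^{2}$, with $|\psi_{\pm}\rangle$ the $\pm1$ eigenvectors of $\hat{U}(\tau,\chi)$, a genuinely non-constant function on the parameter plane that is transported rigidly by the homomorphism $W_{\hat{\rho}}(\zeta')=W_{\hat{\rho}}(g^{-1}\zeta)$ derived earlier. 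So the orbit of a mixed state under the matrix Wigner evolution is qualitatively flatter and lower-dimensional than that of a pure state, which is the precise content of ``pure and mixed states behave very differently.''

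For the ``more delicate'' clause I would isolate why the Wigner flow is not just the Bloch rotation in disguise. The family $\{\hat{w}(\xi)\}$ is not a subgroup: the composition law $\hat{w}(\alpha)\hat{w}(\beta)=\hat{w}(\alpha,\beta)$ lands in the two-parameter family rather than the one-parameter one, and because $\hat{U}(\xi)$ carries $\det=-1$ it is the parity reflection, not an $\mathrm{SU}(1,1)$ rotation, that multiplies the Fourier phase $\Phi$; consequently the $\Phi$-direction and the geometric $(\tau,\chi)$-directions are coupled through a $\det=-1$ involution and cannot be diagonalised away by a single similarity transformation in the way the Bloch vector can. I expect the main obstacle to be the non-compactness of $\mathrm{SU}(1,1)$: unlike $\mathrm{SU}(2)$ the adjoint orbits on the Bloch hyperboloid are non-compact, so ``mixed stays mixed'' has to be argued with the invariant quadratic form in its indefinite signature, and one must check that the pure states lie on the limiting cone rather than in the interior. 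I would resolve this by tracking the sign of $\mathbf{Tr}\,\hat{\rho}^{2}-\frac{1}{2}(\mathbf{Tr}\,\hat{\rho})^{2}$, which is the $\mathrm{SO}(2,1)$-invariant separating the strata and which therefore also controls how flat the associated Wigner function is forced to be.
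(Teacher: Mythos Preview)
Your argument is internally coherent but takes a different route from the paper and, more to the point, does not capture the specific phenomenon the paper uses to justify the claim. The paper's demonstration is operational rather than orbit-geometric: it works on $\mathrm{SU}(2)$ (not $\mathrm{SU}(1,1)$), fixes a Stratonovich--Weyl kernel $\hat{w}(\theta)$, computes $W_{\hat{\rho}}(\theta)=\mathbf{Tr}[\hat{\rho}\,\hat{w}(\theta)]$ for a \emph{generic} diagonal mixed state $\hat{\rho}=\mathrm{diag}(r_{1},r_{2})$, and then tests the reconstruction map $\hat{\rho}\stackrel{?}{=}\tfrac{1}{A}\int W_{\hat{\rho}}(\theta)\,\hat{w}(\theta)\,d\theta$. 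For the mixed state the integral returns $\tfrac{\pi}{2}\,\mathrm{diag}(1+\tfrac{3}{2}r,\,1-\tfrac{3}{2}r)$, i.e.\ the Bloch parameter comes back rescaled by $3/2$ and the projection identity fails; for a genuinely pure state (with both $r_{x}$ and $r_{z}$ components) the same integral returns exactly $\pi\hat{\rho}(0)$. The ``delicacy'' and the pure/mixed dichotomy are thus the success or failure of the inverse Stratonovich--Weyl map, not a statement about orbit dimension or flatness of $W_{\hat{\rho}}$.

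Your chosen witness---that the maximally mixed state produces a Wigner function constant in $(\tau,\chi)$---is correct but cannot carry the proposition, because the maximally mixed state is a fixed point of \emph{every} covariant construction, including the ordinary Bloch flow you set up in your first paragraph; so this degeneracy does not separate the matrix Wigner picture from the Bloch picture at all. To align with the paper you would need a non-maximally-mixed state and the reconstruction integral, exhibiting the $r\mapsto\tfrac{3}{2}r$ defect. Your $\hat{U}^{2}=\mathbf{1}$ expansion and the $\det=-1$ remarks are pleasant observations, but they answer a different question from the one the paper actually settles here.
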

\begin{corollary}
The projective theory of the matrix Wigner function results in a type of Hadamard transformation which may be of interest in quantum mechanics.
\end{corollary}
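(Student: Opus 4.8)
The plan is to feed the matrix Wigner operator $\hat{w}(\xi)=\exp\left(i\Phi\hat{U}(\xi)\right)$ constructed above, with $\hat{U}(\xi)$ the real transformed-parity matrix obeying $\det\hat{U}=-1$, into the tracial reconstruction $W_{\hat{\rho}}(\xi)=\mathbf{Tr}\left[\hat{\rho}\,\hat{w}(\xi)\right]$ applied to a \emph{projective} (pure) state $\hat{\rho}=|\psi\rangle\langle\psi|$. Because the projective axioms force $\hat{\rho}^{2}=\hat{\rho}$, the only datum entering $W_{\hat{\rho}}$ is the Bloch direction of the rank-one projector, so the substance of the corollary is to identify the linear map induced on that direction and to exhibit it as a Hadamard-type rotation inherited from the $SU(1,1)$ Gauss decomposition.

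First I would expand $\hat{w}(\xi)$ in the small-$\hat{k}$ representation using the $2\times2$ matrix exponential already carried out above, obtaining diagonal entries $\cos\Phi\pm i\sin\Phi\cos\tau$ and off-diagonal entries $-ie^{\pm i\chi}\sin\Phi\sin\tau$. Next I would isolate the locus distinguished by the projective structure: demanding that $\hat{w}(\xi)$ act as an involution carrying each eigenspace of the initial parity $\hat{w}(0)=\hat{P}(\pi)$ into a \emph{balanced} superposition of the two pins the squeeze parameter to the equal-weight value $\tau=\pi/4$, so that $\cos\tau=\sin\tau=1/\sqrt{2}$ — matching the $1/\sqrt{2}$ already present in the eigenbasis matrix $\hat{Q}(\theta)$ — and $\Phi$ to the quarter period $\pi/2$. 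Evaluating $\hat{w}$ at this point with $\chi=0$ collapses it to $\tfrac{i}{\sqrt{2}}\left[\begin{array}{cc}1 & -1\\ -1 & -1\end{array}\right]$.

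The final step is recognition. This matrix equals $i\,\hat{\sigma}_{z}\,\mathsf{H}\,\hat{\sigma}_{z}$, where $\mathsf{H}=\tfrac{1}{\sqrt{2}}\left[\begin{array}{cc}1 & 1\\ 1 & -1\end{array}\right]=\tfrac{1}{\sqrt{2}}(\hat{\sigma}_{x}+\hat{\sigma}_{z})$ is the Hadamard matrix and $\mathsf{H}^{2}=\mathbf{1}$, $\hat{\sigma}_{z}^{2}=\mathbf{1}$; equivalently $\mathsf{H}\hat{\sigma}_{z}\mathsf{H}=\hat{\sigma}_{x}$, so conjugating the diagonal parity $\hat{w}(0)=i\hat{\sigma}_{z}$ by $\mathsf{H}$ interchanges the $z$- and $x$-components of the Bloch vector. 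Hence at its self-dual point the projective Wigner map sends the parity expectation $\langle\hat{\sigma}_{z}\rangle$ to the Hadamard-conjugated observable $\langle\mathsf{H}\hat{\sigma}_{z}\mathsf{H}\rangle=\langle\hat{\sigma}_{x}\rangle$, i.e.\ $W_{\hat{\rho}}(\xi)$ is (up to an overall phase) the expectation of a Hadamard operator in the pure state — precisely a Hadamard transformation — and the characteristic $1/\sqrt{2}$ normalisation is exactly the equal-weight factor of the $\hat{Q}(\theta)$ eigenbasis change that entered the $SU(1,1)$ LDU factorisation.

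The main obstacle is conceptual bookkeeping rather than calculation. First, one must show that the self-dual locus is canonically forced by the projective axioms and not merely a convenient choice; this is where $\hat{\rho}^{2}=\hat{\rho}$ together with the initial condition $\hat{w}(0)=\hat{P}(\pi)$ must earn its keep, selecting $\tau=\pi/4$, $\Phi=\pi/2$ as the unique point at which the Wigner operator squares to a multiple of $\mathbf{1}$ on the projective cone. Second, and more delicate, the ambient group here is the pseudounitary $SU(1,1)$ rather than $SU(2)$, so one must verify that the Hermiticity and reality constraints carried by the parity operator force the relevant $2\times2$ block into the unitary Hadamard class — unitary, symmetric, real up to an overall phase, with $\det=-1$ and square proportional to $\mathbf{1}$ — instead of merely a complex matrix of determinant $-1$. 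Once that reduction is in place, identification with $\mathsf{H}$ up to the involution $\hat{\sigma}_{z}$ and a phase is immediate, completing the argument.
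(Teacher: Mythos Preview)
Your computation is sound: specializing the unitary-form Wigner operator to $\Phi=\pi/2$, $\tau=\pi/4$, $\chi=0$ does give $\tfrac{i}{\sqrt{2}}\bigl(\begin{smallmatrix}1 & -1\\ -1 & -1\end{smallmatrix}\bigr)=i\,\hat{\sigma}_{z}\mathsf{H}\hat{\sigma}_{z}$, and the identification with the Hadamard matrix up to phase and conjugation is correct. But this is not the route the paper takes, and the discrepancy matters.

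The paper's argument for this corollary lives entirely inside the projection/reconstruction machinery of the section in which the corollary is stated: one writes the Stratonovich--Weyl inversion $\hat{\rho}=A^{-1}\int d\mu\,\hat{w}\,\mathrm{Tr}[\hat{\rho}\hat{w}]$, tests it first on a diagonal mixed state (where it fails to close), and then on the pure state $\hat{\rho}=\tfrac{1}{2}(\mathbf{1}+\tfrac{1}{\sqrt{2}}\hat{\sigma}_{x}+\tfrac{1}{\sqrt{2}}\hat{\sigma}_{z})=\tfrac{1}{2}(\mathbf{1}+\mathsf{H})$, for which the integral returns $\pi\hat{\rho}(0)$ on the nose. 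The Hadamard matrix enters as the traceless part of the pure projector that makes the reconstruction exact, and the associated Wigner operator $\hat{w}(\theta)=e^{i\theta\hat{\sigma}_{y}/2}\hat{\rho}(0)e^{-i\theta\hat{\sigma}_{y}/2}$ is an $SO(2)$ orbit of that Hadamard state. That is the sense in which ``the projective theory results in a Hadamard transformation.''

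Your approach instead evaluates the matrix characteristic function $\hat{w}(\xi)=\exp(i\Phi\hat{U}(\xi))$ from the \emph{previous} section at a hand-picked point and recognizes the output. The gap is the step you flag yourself: you assert that the projective axiom $\hat{\rho}^{2}=\hat{\rho}$ together with $\hat{w}(0)=\hat{P}(\pi)$ \emph{forces} $\tau=\pi/4$, $\Phi=\pi/2$, but no mechanism is given. The condition ``$\hat{w}$ squares to a multiple of $\mathbf{1}$'' is not a consequence of the projector being rank one, and the ``balanced superposition'' requirement is an extra demand, not something the projection formula imposes. Without that link your argument exhibits a Hadamard matrix \emph{somewhere} in the parameter space, but does not show it is produced by the projective theory. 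If you want to salvage your route, you would need to run the reconstruction integral with your $\hat{w}(\xi)$ and show that closure on pure states singles out the Hadamard locus; as written, that step is missing.
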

\begin{proof}
    
 Note the projection formula:
 \begin{definition}
     
\begin{equation}
\hat{\rho}=\dfrac{1}{A}\int_{\Omega}d\mu(\Omega).\hat{w}(\Omega)\mathrm{Tr}\left[\hat{\rho}\hat{w}(\Omega)\right]
\end{equation}
 \end{definition}
We can also analyse the Wigner function using properties of the projection
operator. From basic quantum mechanics, and assuming the system is
unitary SU(2), we can write:
\begin{example}
    
\begin{equation}
\hat{\rho}(t)=\dfrac{1}{2}\left(\hat{\mathbf{1}}+\vec{n}(t)\cdot\mathbf{\sigma}\right)=\left[\begin{array}{cc}
r_{1} & 0\\
0 & r_{2}
\end{array}\right]
\end{equation}
\end{example}
in the simplest instance. The Wigner operator is then evaluated using
the Euler decomposition: 
\begin{equation}
\hat{w}(\theta)=\dfrac{1}{2}\exp\left(i\dfrac{\theta}{2}\hat{\sigma}_{y}\right)\left[\begin{array}{cc}
1+\sqrt{a} & 0\\
0 & 1-\sqrt{a}
\end{array}\right]\exp\left(-i\dfrac{\theta}{2}\hat{\sigma}_{y}\right)
\end{equation}
The result for the Wigner operator is then: 
\begin{lemma}
    
\begin{equation}
\hat{w}(\theta)=\dfrac{1}{2}\left[\begin{array}{cc}
\sqrt{a}\cos\theta+1 & -\sqrt{a}\sin\theta\\
-\sqrt{a}\sin\theta & -\sqrt{a}\cos\theta+1
\end{array}\right]
\end{equation}

\end{lemma}

Note that we satisfy the constraint equations: 
\begin{definition}
    
\begin{equation}
\mathrm{Tr}\left[\hat{w}(\theta)\right]=1,\mathrm{Tr}\left[\hat{w}^{2}(\theta)\right]=\dfrac{1}{2}(a+1)
\end{equation}
\end{definition}
The authors in \cite{hillery1984distribution} conduct analysis of the Stratonovich-Weyl
operator. They give the relationship between the dimension of the
space and the Wigner operator as: 
\begin{equation}
\mathrm{Tr}\left[\hat{w}^{2}(\theta)\right]=N=2
\end{equation}
from which we are able to determine $a=3$. Calculating the Wigner
distribution via the trace relationship gives the expression: 
\begin{equation}
W_{\hat{\rho}}(\theta)=\mathbf{Tr}[\hat{\rho}\hat{w}(\theta)]
\end{equation}
we find the Wigner distribution: 
\begin{theorem}
    
\begin{equation}
W_{\hat{\rho}}(\theta)=\mathrm{Tr}\left(\dfrac{1}{2}\left[\begin{array}{cc}
r_{1} & 0\\
0 & r_{2}
\end{array}\right]\left[\begin{array}{cc}
\sqrt{3}\cos\theta+1 & -\sqrt{3}\sin\theta\\
-\sqrt{3}\sin\theta & -\sqrt{3}\cos\theta+1
\end{array}\right]\right)
\end{equation}
\begin{equation}
=\dfrac{\sqrt{3}}{2}(r_{1}-r_{2})\cos\theta+\dfrac{r_{1}+r_{2}}{2}
\end{equation}
\end{theorem}
To close the proof we must show that we can recover the operator via
the integral identity: 
\begin{equation}
\hat{\rho}=\dfrac{1}{A}\int_{\Omega}d\mu(\Omega).\hat{w}(\Omega)\mathrm{Tr}\left[\hat{\rho}\hat{w}(\Omega)\right]
\end{equation}
From our initial density matrix, we know from the projection theorem
that we must have: 
\begin{equation}
\hat{\rho}=\dfrac{1}{2}(\hat{\mathbf{1}}+r\hat{\sigma}_{z})=\dfrac{1}{2}\left[\begin{array}{cc}
1+r & 0\\
0 & 1-r
\end{array}\right]=\left[\begin{array}{cc}
r_{1} & 0\\
0 & r_{2}
\end{array}\right]
\end{equation}
and we finally obtain: 
\begin{theorem}
    
\begin{equation}
\int_{0}^{2\pi}\left[\dfrac{\sqrt{3}}{2}r\cos\theta+\dfrac{1}{2}\right]\hat{w}(\theta)d\theta=\dfrac{\pi}{2}\left[\begin{array}{cc}
1+\dfrac{3}{2}r & 0\\
0 & 1-\dfrac{3}{2}r
\end{array}\right]
\end{equation}
\end{theorem}
Note that this is a mixed state. We have $\hat{\rho}^{2}\leq\hat{\rho}$.
The theory partially fails because of this. We readjust and use the
density of a pure state. A simple formula for a pure state is given
by: 
\begin{definition}
    
\begin{equation}
\hat{\rho}=\dfrac{1}{2}(\hat{\mathbf{1}}+r_{x}\hat{\sigma}_{x}+r_{z}\hat{\sigma}_{z})
\end{equation}
\[
\hat{\rho}=\dfrac{1}{2}\left[\begin{array}{cc}
1+\dfrac{1}{\sqrt{2}} & \dfrac{1}{\sqrt{2}}\\
\dfrac{1}{\sqrt{2}} & 1-\dfrac{1}{\sqrt{2}}
\end{array}\right],\mathrm{Tr}(\hat{\rho})=1,\hat{\rho}^{2}=\hat{\rho}
\]
\end{definition}
Using the same expression for the rotation as before: 
\begin{equation}
\hat{w}(\theta)=e^{i\theta\hat{\sigma}_{y}/2}\hat{\rho}(0)e^{-i\theta\hat{\sigma}_{y}/2}
\end{equation}
\begin{equation}
=\dfrac{1}{2}\left[\begin{array}{cc}
1+\dfrac{\sqrt{2}}{2}\left(\sin\theta+\cos\theta\right) & -\dfrac{\sqrt{2}}{2}\left(\sin\theta-\cos\theta\right)\\
-\dfrac{\sqrt{2}}{2}\left(\sin\theta-\cos\theta\right) & 1-\dfrac{\sqrt{2}}{2}\left(\sin\theta+\cos\theta\right)
\end{array}\right]
\end{equation}
Evaluating the trace to extract the Wigner distribution, we find:
\begin{equation}
\mathrm{Tr}[\hat{w}(\theta)]=\mathrm{Tr}[\hat{w}(\theta)^{2}]=1
\end{equation}
\begin{equation}
W_{\rho}(\theta)=\mathrm{Tr}[\hat{w}(\theta)\hat{\rho}(0)]=\dfrac{1}{2}(\cos\theta+1)
\end{equation}
If we then go on to calculate the integral, it is simple to show that
we receive: 
\begin{theorem}
    
\begin{equation}
\int_{0}^{2\pi}W_{\rho}(\theta)\hat{w}(\theta)d\theta=\dfrac{\pi}{2}\left[\begin{array}{cc}
1+\dfrac{1}{\sqrt{2}} & \dfrac{1}{\sqrt{2}}\\
\dfrac{1}{\sqrt{2}} & 1-\dfrac{1}{\sqrt{2}}
\end{array}\right]
\end{equation}
\begin{equation}
=\pi\hat{\rho}(0)
\end{equation}
\end{theorem}
We can see that in this instance, we recover both the normalisation
and the original pure state density matrix. Q.E.D.
\end{proof} 

It has been shown how
to link sets of special functions and the Wigner distribution in the
first part of this work. The second half has focused on matrix models
which can be compared to these continuous results. We can see that
in both cases, there are certain underlying themes of projection.
We can see this reflected in the original expression for the Wigner
distribution: 
\begin{equation}
f_{W}(p,q)=\dfrac{1}{2\pi\hbar}\int_{-\infty}^{+\infty}e^{ips/\hbar}\left\langle q-\dfrac{s}{2}\right|\left.\Psi\right\rangle \left\langle \Psi\right.\left|q+\dfrac{s}{2}\right\rangle ds
\end{equation}
where we can see the intersection between projective theorems of the
density matrix, autocorrelation, and the Fourier transform. This concludes
the calculation in this paper. We shall now discuss some topics in
analysis that relate to the future developments that can be expected
in this area.

\section{Discussion and Conclusions}

We have shown in this paper some interesting relationships between
the construction of the Wigner function via the star-eigenvalue equations,
and the group representations of matrices and parity operators. These
results have consequences for the theory of projective operators.
It is relevant to consider the direction of using these types of projections
to map out spaces by projecting down to cover the different functions
and matrices. The question of the correct types of projection to be
used in higher dimensional spaces has not been analysed in this paper,
and can be expected to be different to that of these simple contrived
examples. This calculation has demonstrated that a close inspection
of the spectral theory can result in much understanding of the structure
of different groups. Here we have presented the basic examples of
spherical and pseudospherical systems. These methods, with appropriate
modifications can be shown to be applicable to such varied systems
as the Morse oscillator, modified Bessel function and associated Legendre
functions. This will be covered in future works to appear in the literature. 

It is of interest that the particular sets of special functions uncovered
in this calculation indicate a relationship between the confluent
hypergeometric function, the Laguerre functions and the Hermite polynomials.
Indeed, we have found a number of multiplication theorems that essentially
amount to different types of convolutions in these systems. These
convolutions are related to the kernel; as we have seen, the different
forms of hyperbolic systems have differing eigenfunctions that make
up the spectral solutions. 

It is hoped that by utilising the intersection of matrix representation
theory and special functions that a picture of the different families
can be built up by understanding the limits between different sets
of PDE systems. One way to understand this, as we have demonstrated
in this paper, is through analysis of the projection operator and
transform theory using Wigner functions. This is a powerful technique,
and it is possible that a more general understanding of the theory
of diffusion problems and stochastic processes can be found by using
this perspective.

The matrix perspective in this sense requires more development, as
it is not as clear as it would be desirable. The use of group representation
techniques should enable these simple results to be generalised to
any other system with SU(1,1) symmetry. The groups SL(2,R) and SL(2,C)
form obvious extensions of the method and have relationships to SU(1,1)
transforms \cite{lenz2017mehler}. The important part in which more understanding
is required is the difference between trace class of Wigner matrices
and their determinants. As is easily shown, the mixed state we have
calculated has trace class 2, the pure state trace class 1. Whether
other well defined trace classes of Wigner matrices exist, and the
exact analysis of the ``defect'' between pure and mixed states with
respect to the Wigner function is of crucial importance in understanding
the matrix form of these equations.

\section{Future Directions}

Several questions are opened up through this investigation. The details
of the exact link between the Wigner function and the kernel have
not been explored in this paper. If we take the kernel to be the projection
state:

\begin{equation}
K(x,x';t)=\sum_{n}e^{-E_{n}t}\psi_{n}^{*}(x')\psi_{n}(x)
\end{equation}
\begin{equation}
\psi_{n}(x)=\left\langle n|x\right\rangle 
\end{equation}
\begin{equation}
\psi_{n}^{*}(x')=(\left\langle n|x'\right\rangle )^{*}=\left\langle x'|n\right\rangle 
\end{equation}
\begin{equation}
K(x,x';t)=\sum_{n}e^{-E_{n}t}\left\langle x'|n\right\rangle \left\langle n|x\right\rangle =\left\langle x'\left|\sum_{n}e^{-E_{n}t}\left|n\left\rangle \right\langle n\right|\right|x\right\rangle 
\end{equation}
It is tempting, in light of the formula for the Wigner function, to
hope for a simple relationship such as:
\begin{equation}
f_{W}(p,q;t)=\dfrac{1}{2\pi\hbar}\int_{-\infty}^{+\infty}e^{ips/\hbar}K(q-s/2,q+s/2;t)ds
\end{equation}
It is unknown if such a formula exists. This can be easily recovered
by using
\begin{equation}
\left|\Psi\left\rangle \right\langle \Psi\right|=\sum_{n}e^{-E_{n}t}\left|n\left\rangle \right\langle n\right|
\end{equation}

If such relationships could be developed, it would save a great deal
of calculation, as we could exploit already known formulae for the
kernel to find solutions to the simultaneous differential equations
given through the star equations. Other interesting formulae might
be found by using the expression for the resolvent/Green's function
as given by a similar eigenfunction decomposition.

\section*{Acknowledgment}
This project was supported under ARC Research Excellence Scholarships at the University of Technology, Sydney. The author acknowledges useful discussions and support from Dr. Mark Craddock and Prof. Anthony Dooley.


\bibliographystyle{ptephy}
\bibliography{sample}

\begin{thebibliography}{10}

\bibitem{wigner1997quantum}
Eugene~P Wigner, Part I: Physical Chemistry. Part II: Solid State Physics,
  pages 110--120 (1997).

\bibitem{morrison2012time}
Peter~G Morrison, arXiv preprint arXiv:1210.6977 (2012).

\bibitem{tannor2007introduction}
David~J Tannor,
\newblock {\em Introduction to quantum mechanics: a time-dependent
  perspective},
\newblock  (University Science Books, California, 2007).

\bibitem{frank2000wigner}
Alejandro Frank, Ana~Leonor Rivera, and Kurt~Bernardo Wolf, Physical Review A,
  {\bf 61}(5), 054102 (2000).

\bibitem{seyfarth2020wigner}
Ulrich Seyfarth, Andrei~B Klimov, Hubert de~Guise, Gerd Leuchs, and Luis~L
  S{\'a}nchez-Soto, Quantum, {\bf 4}, 317 (2020).

\bibitem{lenz2016mehler}
Reiner Lenz, arXiv preprint arXiv:1612.04573 (2016).

\bibitem{lenz2017mehler}
Reiner Lenz, Entropy, {\bf 19}(6), 289 (2017).

\bibitem{lenz2017positive}
Reiner Lenz,
\newblock Positive signal spaces and the mehler-fock transform,
\newblock In {\em Geometric Science of Information: Third International
  Conference, GSI 2017, Paris, France, November 7-9, 2017, Proceedings 3},
  pages 745--753. Springer International Publishing (2017).

\bibitem{hillery1984distribution}
MOSM Hillery, Robert~F O'Connell, Marlan~O Scully, and Eugene~P Wigner, Physics
  reports, {\bf 106}(3), 121--167 (1984).

\bibitem{groenewold1946principles}
Hilbrand~Johannes Groenewold and Hilbrand~Johannes Groenewold, On the
  principles of elementary quantum mechanics, pages 1--56 (1946).

\bibitem{moyal1949quantum}
Jos{\'e}~E Moyal,
\newblock Quantum mechanics as a statistical theory (1949).

\bibitem{curtright1998features}
Thomas Curtright, David Fairlie, and Cosmas Zachos, Physical Review D, {\bf
  58}(2), 025002 (1998).

\bibitem{curtright2001generating}
Thomas Curtright, Tsuneo Uematsu, and Cosmas Zachos, Journal of Mathematical
  Physics, {\bf 42}(6), 2396--2415 (2001).

\bibitem{gradshteyn2014table}
Izrail~Solomonovich Gradshteyn and Iosif~Moiseevich Ryzhik,
\newblock {\em Table of integrals, series, and products},
\newblock  (Academic press, 2014).

\bibitem{buchholz2013confluent}
Herbert Buchholz,
\newblock {\em The confluent hypergeometric function: with special emphasis on
  its applications}, volume~15,
\newblock  (Springer Science \& Business Media, 2013).

\bibitem{grosche1988path}
Christian Grosche and Frank Steiner, Annals of Physics, {\bf 182}(1), 120--156
  (1988).

\bibitem{grosche1990path}
Cristian Grosche, Fortschritte der Physik/Progress of Physics, {\bf 38}(7),
  531--569 (1990).

\bibitem{potovcek2015exponential}
V{\'a}clav Poto{\v{c}}ek and Stephen~M Barnett, Physica Scripta, {\bf 90}(6),
  065208 (2015).

\bibitem{glauber1955time}
Roy~J Glauber, Physical Review, {\bf 98}(6), 1692 (1955).

\bibitem{vilenkin1978special}
Naum~IAkovlevich Vilenkin,
\newblock {\em Special functions and the theory of group representations},
  volume~22,
\newblock  (American Mathematical Soc., 1978).

\end{thebibliography}

%

\vspace{0.2cm}
\noindent


\let\doi\relax


\end{document}